\pgfplotsset{compat=newest}
\let\origappendix\appendix 
\renewcommand\appendix{\clearpage\pagenumbering{roman}\origappendix}
\providecommand{\@fourthoffour}[4]{#4}
\def\fixstatement#1{%
	\AtEndEnvironment{#1}{%
		\xdef\pat@label{\expandafter\expandafter\expandafter
			\@fourthoffour\csname#1\endcsname\space\@currentlabel}}}
\globtoksblk\prooftoks{1000}
\newcounter{proofcount}
\long\def\proofatend#1\endproofatend{%
	\edef\next{\noexpand\begin{proof}[Proof of \pat@label]}%
		\toks\numexpr\prooftoks+\value{proofcount}\relax=\expandafter{\next#1\end{proof}}
	\stepcounter{proofcount}}
\def\printproofs{%
	\count@=\z@
	\loop
	\the\toks\numexpr\prooftoks+\count@\relax
	\ifnum\count@<\value{proofcount}%
	\advance\count@\@ne
	\repeat}
\newtheorem{theorem}{Theorem}
\newtheorem{proposition}{Proposition}
\newtheorem{lemma}{Lemma}
\newcommandx{\unsure}[2][1=]{\todo[linecolor=red,backgroundcolor=red!25,bordercolor=red,#1]{#2}}
\newcommandx{\change}[2][1=]{\todo[linecolor=blue,backgroundcolor=blue!25,bordercolor=blue,#1]{#2}}
\newcommandx{\info}[2][1=]{\todo[linecolor=OliveGreen,backgroundcolor=OliveGreen!25,bordercolor=OliveGreen,#1]{#2}}
\newcommandx{\improvement}[2][1=]{\todo[linecolor=Plum,backgroundcolor=Plum!25,bordercolor=Plum,#1]{#2}}
\newcommandx{\thiswillnotshow}[2][1=]{\todo[disable,#1]{#2}}
\theoremstyle{definition}
\newtheorem{definition}{Definition}
\newtheorem{example}{Example}
\newcommand{\argmax}{\operatornamewithlimits{argmax}}
\begin{document}
	\title{Assortment and Price Optimization  \\Under the Two-Stage Luce model}
	\author{
		Alvaro Flores\footnote{College of Engineering \& Computer Science, Australian National University, Australia.}
		\and
		Gerardo Berbeglia\footnote{Melbourne Business School, The University of Melbourne, Australia.}
		\and
		Pascal Van Hentenryck\footnote{H. Milton Stewart School of Industrial and Systems Engineering
			Georgia Institute of Technology, USA.}
	}
	\date{\today}
	
	\maketitle
	
	\begin{abstract}
This paper studies assortment and pricing optimization problems
under the Two-Stage Luce model (2SLM), a discrete choice model
introduced by \cite{echenique2018} that generalizes the multinomial logit model (MNL). The model employs an utility function as in the
the MNL, and a dominance relation between products. When consumers are offered an
assortment $S$, they first discard all dominated products in
$S$ and then select one of the remaining products using the standard MNL.
This model may violate the regularity condition, which states that the probability of
choosing a product cannot increase if the offer set is enlarged. Therefore, the 2SLM
falls outside the large family of discrete choice models based on 
random utility which contains almost all choice models studied in revenue management.
We prove that the assortment problem under the 2SLM is
polynomial-time solvable. Moreover, we show that the capacitated
assortment optimization problem is NP-hard and but it admits
polynomial-time algorithms for the relevant special cases cases where (1) the dominance
relation is attractiveness-correlated and (2) its transitive
reduction is a forest. The proofs exploit a strong connection
between assortments under the 2SLM and independent sets in
comparability graphs. Finally, we study the associated joint pricing and assortment problem under this model.
 First, we show that well known optimal pricing policy for the MNL can be arbitrarily bad. 
 Our main result in this section is the development of an efficient algorithm for this pricing problem.
  The resulting optimal pricing strategy is simple to describe: it assigns the same price for all products,
   except for the one with the highest attractiveness and as well as for the one with the lowest attractiveness.
	
	\end{abstract}

\section{Introduction}\label{sec:intro}
Revenue Management (RM) is the managerial practice of modifying the availability and the prices of products in order to maximise revenue or profit.
The origin of this discipline dates back to the $1970$'s, following the deregulation of the US airline market. A large volume of research
has been devoted to this area over the last $45$ years, with successful results in many industries
ranging from airlines, hospitality, retailing, and others \citep{mcgillRM1999,KokAssortment,Vulcano2010}.

Two main problems lay in the core of RM theory and practice: the optimal assortment problem,
and the pricing problem. The optimal assortment problem consists of selecting a subset of products to
offer customers in order to maximize revenue. Consider, for example,
a retailer with limited space allocated to mobile phones. If
the store has more than $500$ mobile phones that can be acquired
through its distributors (in various combinations of brands and sizes)
and the mobile phone aisle has capacity to fit $50$ phones on the
shelves, the store manager has to decide which subset of products to
offer given the product costs and the customer preferences.

In order to solve the assortment problem we need a model to predict how
customers select products when they are presented with a set of
alternatives. Most models of discrete choice theory postulate that consumers assign an
\textit{utility} to each alternative and given an offer set, they would choose the alternative with maximum utility. Different assumptions on the distribution of the utilities lead to different discrete choice
models: Celebrated examples include the multinomial logit (MNL)
\citep{luce1959}, the mixed multinomial logit (MMNL)
\citep{daly1978improved}, and the nested multinomial logit (NMNL)
\citep{williams1977formation}.

The multinomial logit model (MNL), also known as the Luce model, is
widely used in discrete choice theory. Since the model was introduced
by \cite{luce1959}, it was applied to a wide variety of demand
estimation problems arising in transportation
\citep{mcfadden1978modeling,catalano2008car}, marketing
\citep{Gudagni,Gensch,rusmevichientong2010dynamic}, and revenue
management \citep{talluri2004revenue,rusmevichientong2010dynamic}. One
of the reasons for its success stems from its small number of
parameters (one for each product): This allows for simple estimation
procedures that generally avoids over fitting problems even when there is limited historical data
\citep{McFadden1974}. However, one of the flaws of the MNL is the property known as the {\em Independence of
	Irrelevant Alternatives}(IIA), which states that the ratio between
the probabilities of choosing elements $x$ and $y$ is constant
regardless of the offered subset. This property does not hold when
products cannibalize each other or are perfect substitutes
\citep{ben1985,Debreu1960,anderson1992}.

Several extensions to the MNL model have been introduced to overcome
the IIA property and some of its other weaknesses; They include the
nested multinomial logit and the latent class MNL model.  These models
however do not handle zero-probability choices well. Consider two
products $a$ and $b$: The MNL model states that the probability of
selecting $a$ over $b$ depends on the relative attractiveness of $a$ compared
to the attractiveness of $b$. Consider the case in which $b$ is never
selected when $a$ is offered. Under the MNL model, this means that $b$
must have zero attractiveness. But this would prevent $b$ from being
selected even when $a$ is not offered in an assortment.

On the other hand, the pricing problem amounts to determine the prices
that a company should offer, in order to best meet its objectives
(profit maximization, revenue maximization, market share maximization,
etc.), while taking into consideration how customers will respond to
different prices and the interaction between price and the intrinsic
features that each product possess.


This paper considers both problems mentioned before,
for the case when customers follow the Two-Stage Luce model (2SLM). The 2SLM
was recently introduced by \cite{echenique2018} and unlike the MNL,
it allows for violations to the IIA property and regularity \citep{Berbeglia2017Rev}.
The Two-Stage Luce model generalizes the MNL by
incorporating a dominance (anti-symmetric and transitive) relation
among the alternatives. Under such relationship, the presence of an
alternative $x$ may prevent another alternative $y$ from being chosen
despite the fact that both are present in the offered assortment. In
this case, alternative $x$ is said to \textit{dominate} alternative
$y$. However, when $x$ is not present, $y$ might be chosen with
positive probability if it is not dominated by any other product
$z$.

An important application of the 2SLM can be found in assortment
problems where there exists a direct way to compare the products over
a set of features. For illustration, consider a telecommunication
company offering phone plans to consumers. A plan is characterized by
a set of features such as price per month, free minutes in peak hours,
free minutes in weekends, free data, price for additional data, and
price per minute to foreign countries. Given two plans $x$ and $y$, we
say that plan $x$ \emph{dominates} plan $y$, if the price per month of
$x$ is less than that of $y$, and $x$ is \emph{at least} as good as
$y$ in every single feature.  In the past, the company offered
consumers a certain set of plans $S_t$ each month $t$ such that no
plan in $S_t$ is dominated by another plan (in $S_t$). The offered
plans however were different each month. Using historical data and
assuming that consumers preferences can be approximated using a
multinomial logit, it is possible to perform a robust estimation
procedure to obtain the parameters of such MNL model. Once the
parameters are obtained, the assortment problem consists in finding
the best assortment of phones plans $S^*$ to maximize the expected
revenue. A natural constraint in this problem consisting in enforcing
that every phone plan offered in $S^*$ cannot be dominated by any
other. Section~\ref{sec:model}	shows that the problem discussed here
can be modelled using the 2SLM and thus solving this problem is reduced to solving
an assortment problem under the 2SLM.

\section{Contributions}\label{sec:contributions}
The first key contribution is to show that the assortment problem can be solved in polynomial time under the 2SLM. The proof is built upon two unrelated results in optimization: the polynomial-time solvability of the
maximum-independent set in a comparability graph \citep{Mohring1985} and a seminal result by \cite{megiddo1979combinatorial} that provides an algorithm to
solve a class of combinatorial optimization problems with rational
objective functions in polynomial time. This is particularly appealing since the 2SLM is one of the very few choice models that goes beyond the random utility model and it allows violations the property known as \emph{regularity}: the probability of choosing an alternative cannot increase if the offer set is enlarged. Since many decades ago, there are well-documented lab experiments where the regularity property is violated \citep{huber1982adding,Tversky_reg,herne1997decoy}.

The second
key contribution is to show that the capacitated assortment problem
under the 2SLM is NP-hard, which contrasts with results on the
MNL. We then propose polynomial algorithms for two
interesting subcases of the capacitated assortment problem: (1) When
the dominance relation is attractiveness-correlated and (2) when the
transitive reduction of the dominance relation can be represented as a
forest. The proofs use a strong connection between assortments under
the 2SLM and independent sets.




The third and final contribution, is an in-depth study of the pricing problem
under the 2SLM. We first note that changes in prices should be
reflected in the dominance relation if the differences between the
resulting attractiveness are large enough. This is formalized
by solving the Joint Assortment and Pricing problem under the
Threshold Luce model, where one product dominates another if the ratio
between their attractiveness is bigger than a fixed threshold.  Under
this setting, we show that this problem can be solved in
polynomial time. The proof relies on the following interesting facts:
(1) An intrinsic utility ordered assortment is optimal; (2) the
optimal prices can be obtained in polynomial time; and (3) it assigns
the same price for all products, except for two of them, the highest
and lowest attractiveness ones. Many of these results are extended to
the following cases (1) capacity constrained problems, where the
number of products that can be offered is restricted and (2) position
bias, where products are assigned to positions, altering their
perceived attractiveness.

The rest of the paper is organized as follows:
Section~\ref{sec:literature} presents a review of the literature
concerning assortment optimization and pricing under variations of the
Multinomial Logit. Section~\ref{sec:model} formalizes the 2SLM and
some of its properties. Section \ref{sec:assortment} proves that
assortment optimization under the 2SLM is polynomial-time
solvable. Section~\ref{sec:capacitated} presents the results on the
capacitated version, particularly the NP-hardness of the capacitated
version of the problem, but also provide polynomial time solutions for
two special cases. Section~\ref{sec:JAP_TLM} present the results for
pricing optimisation under the Threshold Luce
model. Section~\ref{sec:conclusion} concludes the paper and provides
future research directions. All proofs missing from the main text, are
provided in Appendix \ref{App:proofs}.

\section{Literature Review}
\label{sec:literature}

Since the assortment problem and the joint assortment and pricing
problem are a very active research topic, we focus on recent results
closely related with this paper and in particular, results over the
multinomial logit model (MNL) \citep{luce1959,mcfadden1978modeling}
and its variants.

Despite the IIA property, the MNL is widely used. Indeed, for many
applications, the mean utility of a product can be modeled as a linear
combination of its features. If the features capture the mean utility
associated with each product, then the error between the utilities and
their means may be considered as independent noise and the MNL emerges
as a natural candidate for modeling customer choice. In addition, the
MNL parameters can be estimated from customer choice data, even with
limited data \citep{ford_ranking,negahban2012iterative}, because the
associated estimation problem has a concave log likelihood function
\citep{McFadden1974} and it is possible to measure how good the fitted
MNL approximates the data \citep{Hausman1984}. Moreover, it is
possible to improve model estimation when the IIA property is likely
to be satisfied \citep{train2003}.

One of the first positive results on the assortment problem under the
multinomial logit model was obtained by \cite{talluri2004revenue},
where the authors showed that the optimal assortment can be found by
greedily by adding products to the offered assortment in the order of
decreasing revenues, thus evaluating at most a linear number of
subsets. \cite{rusmevichientong2010dynamic} studied the assortment
problem under the MNL but with a capacity constraint limiting the
products that can be offered. Under these conditions, the optimal
solution is not necessarily a revenue-ordered assortment but it can
still be found in polynomial time.

\cite{gallego2011general} proposed a more general attraction model
where the probabilities of choosing a product depend on all the
products (not only the offered subset as in the MNL). This involves a
shadow attraction value associated with each product that influence
the choice probabilities when the product is not offered.
\cite{davis2013assortment} showed that a slight transformation of the MNL
model allows for the solving of the assortment problem when the choice
probabilities follow this more sophisticated attraction model. This
continues to hold when assortments must satisfy a set of totally
unimodular constraints.

The Mixed Multinomial Logit \citep{daly1978improved} is an extension
of the MNL model, where different sets of customers follow different
MNL models. Under this setting, the problem becomes NP-hard
\citep{bront2009column} and it remains NP-hard even for two customer
types \citep{rusmevichientong2014}. A branch-and-cut algorithm was
proposed by \cite{MendezDiaz2014}.  \cite{FeldmanBounding} proposed
methods to obtain good upper bounds on the optimal
revenue. \cite{RobustMNL2012} considered a model where customers
follow a MNL model and the parameters belong to a compact uncertainty
set. The firm wants to hedge against the worst-case scenario and the
problem amounts to finding an optimal assortment under this
uncertainty conditions. Surprisingly, when there is no capacity
constraint, the revenue-ordered strategy is optimal in this setting.
\cite{Jagabathula} proposed a local-search heuristic for the
assortment problem under an arbitrary discrete choice
model. \cite{davis2013assortment} and \cite{Abeliuk2016} proposed
polynomial time algorithms to solve the assortment problem under the
MNL model with capacity constraint and position bias, where position
bias means that customer choices are affected by the positioning of
the products in the assortment. Recently, \cite{jagabathula2015model}
proposed a partial-order model to estimate individual preferences,
where preference over products are modeled using forests. They cluster
the customers in classes, each class being represented with a
forest. When facing an assortment $S$, customers select, following an
MNL model, products that are roots of the forest projected on
$S$. This approach outperformed state-of-the-art methods when
measuring the accuracy of individual predictions.


Attention has also been devoted to discrete choice models to represent
customer choices in more realistic ways, including models that violate
the IIA property \citep{ben1985}. This property does not always hold
in practice \citep{Rieskamp2006}, including when products cannibalize
each other \citep{ben1985}.  \cite{echenique2018palm} identify these violations as
perception priorities, and adjust probabilities to take their effects
into account. \cite{gul2014random} provide an axiomatic generalization
of MNL model to address the case where the products share
features. \cite{fudenberg2015dynamic} propose an axiomatic
generalization of a discounted logit model incorporating a parameter
to model the influence of the assortment size.

Customers tend to use rules to simplify decisions, and before making a
purchase decision, they often narrow down the set of alternatives to
chose from, using different heuristics to make the decision process
simpler. Several models of \textit{consider-then-choose} models have
been proposed in the literature, related with attention filters,
search costs, feature filters, among others, another reasonable way to
discard options, is when the difference between attractiveness is so
evident, that the less attractive alternative, even when it is
offered, is never picked (as in the Threshold Luce model,
\cite{echenique2018}).  Any of the heuristics mentioned before allows
the consumer to restrict her attention to a smaller set usually
referred in the literature as \textit{consideration set}. This effect
also provokes that offered product might result having
zero-probability choices.

Several models have been proposed to address the issue of
zero-probability choices. \cite{masatlioglu2012revealed} propose a
theoretical foundation for maximizing a single preference under
limited attention, i.e., when customers select among the alternatives
that they pay attention to. \cite{manzini2014stochastic} incorporate
the role of attention into stochastic choice, proposing a model in
which customers consider each offered alternative with a probability
and choose the alternative maximizing a preference relation within the
considered alternatives. This was axiomatized and generalized in
\cite{brady2016menu}, by introducing the concept of \textit{random
	conditional choice set rule}, which captures correlations in the
availability of alternatives. This concept also provided a natural way
to model substitutability and complementarity.

\cite{payne1976task} showed that a considerable portion of the subjects in his
experimental setting use a decision process involving a consideration set.
Numerous studies in marketing also validated a consider-then-choose decision process.
In his seminal work \cite{hauser1978testing} observed that most of the
heterogeneity in consumer choice can be explained by consideration sets.
He shows that nearly $80\%$ of the heterogeneity in choice is captured
by a richer model based in the combination of consideration sets and logit-based rankings.
The rationale behind this observation is that first stage filters eliminate a
large fraction of alternatives, thus the resulting consideration sets are
composed of a few products in most of the studied categories
\citep{belonax1978evoked,hauser1990evaluation}. \cite{pras1975comparison} and \cite{gilbride2004}
empirically showed that consumers form their consideration
sets by a \textit{conjunction} of elimination rules.
Furthermore, there are empirical results
showing that a Two-Stage model including consideration sets better fits
consumer search patterns than sequential models \citep{de2012testing}.

Form a customer standpoint, the use for consider-then-choose models alleviate
the cognitive burden of deciding when facing too many alternatives
\cite{Tversky1972choice,Tversky1972elim, tversky1974judgment,payne1996time}.
When dealing with a decision under limited time and knowledge, customers often
recur to screening heuristics as show in \cite{gigerenzer1996reasoning}.
Psychologically speaking, customers as decision makers need to carefully
balance search efforts and opportunity costs with potential gains, and consideration
sets help to achieve that goal \citep{roberts1991consideration,hauser1990evaluation,payne1996time}.
Recently \cite{jagabathula2016nonparametric} proposed a Two-Stage
model where customers consider only the products are contained within
certain range of their willingness to pay. \cite{aouad2015assortment} explored
consider-then-choose models where each costumer has a consideration set, and a ranking
of the products within it. The customer then selects the higher ranked product offered.
The authors studied the assortment problem under several consideration sets and ranking structure,
and provide a dynamic programming approach capable of returning the optimal assortment in polynomial
time for families of consideration set functions originated by screening rules \cite{hauser2009non}.
\cite{dai2014choice} considered a revenue management model where an upcoming customer might discard
one offered itinerary alternative  due to individual restrictions, such as time of departure.
\cite{wang2017impact} studied a choice model that incorporates product search costs,
so the set that a customer considers might differ from what is being offered.

Multi-product price optimisation under the MNL and the NL has been
studied since the models were introduced in the literature. One of
the first results on the structure of the problem is due to
\cite{hanson1996optimizing}, where they show that the profit function
for a company selling substitutable products when customers follow the
MNL model is not jointly concave in price. To overcome this issue, in
\cite{Song2007DemandMA} and later in \cite{dong2009dynamic}, the
authors show that even when the profit function is not concave in
prices, it is concave in the market share and there is a one-to-one
correspondence between price and market share. Multiple studies shown
that under the MNL where all products share the same price sensitivity
parameter, the \textit{mark-up} which is simply the difference between
price and cost, remains constant for all products at optimality
\citep{anderson1992,Hopp2005pricing,gallego2009upgrades,Besbes2016}.
Furthermore, the profit function is also uni-modal on this constant
quantity and it has a unique optimal solution, which can be determined
by studying the first order conditions.

\cite{li2011pricing} showed the same result for the NL model.  Up to
that point, all previous results assumed an identical price
sensitivity parameter for all products.  Under the MNL, there is
empirical evidence that shows the importance of allowing different
price sensitivity parameters for each product
\citep{berry1995automobile,erdem2002impact}. There is is also evidence
in \cite{borsch1990compatibility} that restricting the nest specific
parameters to the unit interval results in rejection of the NL model
when fitting the data, thus recommending to relax this assumption. The
problem when relaxing this condition, is that the profit function is
no longer concave on the market share, which complicates the
optimization task.  In \cite{gallego2014multiproduct} the authors
considered a NL model with differentiated price sensitivities, and
found that the \textit{adjusted mark-up}, defined as price minus cost
minus the reciprocal of the price sensitivity is constant for all
products within a nest at optimality.  Furthermore, each nest also has
an \textit{adjusted next-level markup} which is also invariant across
nests, which reduces the original problem to a one variable
optimization problem. Additional theoretical development can be found
in \cite{rayfield2015approximation,kouvelis2015pbm} but there are
restricted to the Two-Stage nested logit model.  In
\cite{huh2015pricing} some of the results were extended to a
multi-stage nested logit model for specific settings, but also show
that the equal mark-up property fails to hold in general for products
that do not share the same immediate \textit{parent} node in the
nested choice structure, even when considering identical price
sensitivity parameters. \cite{li2011pricing} and
\cite{gallego2014multiproduct} extend to the multi-stage NL model and
show that an optimal pricing solution can still be found by means of
maximizing a scalar function.

There are some interesting results for other models that share
similarities with the MNL, and therefore are closely related with the
model that we are studying.  In \cite{wang2017impact}, the authors
incorporate search cost into consumer choice model.  The results on
this paper for the Joint Assortment and Pricing are similar to the
ones that we study in Section~\ref{sec:JAP_TLM}, in that many
structural results that holds at optimality for their model, are also
satisfied in our studied case. They show that the \textit{quasi-same
	price} policy (that charges the same price for all products but one,
the least attractive one) was optimal for this model. Interestingly,
the Joint Assortment and Pricing results under the Threshold Luce
Model has a slightly different result: The optimal pricing is a fixed
price for all products, except for the most attractive and least
attractive ones. This led to a situation where there are many possible
prices, not just two.

Recently \cite{exponomial2016} hast studied in depth a model which was originally due to
\cite{daganzo1979multinomial} that assumes a negatively skewed
distribution of consumer utilities. The resulting choice
probabilities have an interesting consequence in the optimal pricing
policy: They allow for variable mark-ups in optimal prices that
increase with expected utilities.

The model considered in this paper is a variant of the MNL, proposed
by \cite{echenique2018} and called the \textit{Two-Stage Luce model};
It handles zero-probability choice by introducing the concept of
\textit{dominance}, meaning that if a product $x$ dominates a product
$y$, then $y$ is never selected in presence of $y$. And therefore the
consideration set is formed by considering only non-dominated products
in the offered assortment, allowing flexibility on the consideration
set formation due to the nature of the dominance relation. Once the
consideration set is formed, the customer choose according to an MNL
on the remaining alternatives.  In the following section we describe
this model in detail, and show some examples that highlight many
practical applications for it.
	
\section{The Two-Stage Luce model}
\label{sec:model}

The 2SLM \citep{echenique2018} overcomes a key limitation of the MNL:
The fact that a product must have zero attractiveness if it has zero
probability to be chosen in a particular assortment. This limitation
means that the product cannot be chosen with positive probability in
any other assortment. The 2SLM eliminates this pathological situation
through the concept of \textit{consideration function} which, given a
set of products $S$, returns a subset of $S$ where each product has a
positive probability of being selected.  Let $X$ denotes the set of
all products and let $a(x) > 0$ be the attractiveness of product $x
\in X$. For notational convenience, we use $a_x$ to denote the
attractiveness of product $x$, i.e., $a_x=a(x)$. We extend the
attractiveness function to consider the outside option, with index 0
and $a_0=a(0) \geq 0$, to model the fact that customers may not select
any product. As a result, the attractiveness function has signature
$a:X\cup \left\{0\right\}\to \mathbb{R}^+$.  Given an assortment
$A\subseteq X$, a stochastic choice function $\rho$ returns a
probability distribution over $A$, i.e., $\rho(x,A)$ is the
probability of picking $x$ in the assortment $A$. The 2SLM is a sub
case of the general Luce model presented in \cite{echenique2018}, and
independently discovered in \cite{Ahumada2018}, which is defined
below.

\begin{definition}[General Luce Function \footnote{The definition is slightly different: It makes the outside option effect $a_0$ explicit in the denominator.}, \cite{echenique2018}]\label{def:GLM}
	A stochastic choice function $\rho$ is called a general Luce
	function if there exists an attractiveness function $a\cup
	\left\{0\right\}:X \to \mathbb{R}^+$ and a function $c:2^X\setminus
	\emptyset \to 2^X\setminus \emptyset$ with $c(A)\subseteq A$ for all
	$A\subseteq X$ such that

	\begin{equation}
	\label{2SLM}
	\rho(x,A)=\begin{cases}
	\frac{a_x}{\sum_{y \in c(A)}a_y+ a_0} & \text{if } x\in c(A), \\   
	0 & \text{if } x\notin A.
	\end{cases}
	\end{equation}
	for all $A\subseteq X$. We call the pair $(a,c)$ a general Luce model.
\end{definition}

\noindent
The function $c$ (which is arbitrary) provides a way to capture the
support of the stochastic choice function $\rho$. As observed in
\cite{echenique2018}, there are two interesting cases worthy of being
mentioned:
\begin{enumerate}
	\item If $c(S)$ is a singleton for all $S \subseteq X$, then
	$\rho(x,S)$ is a deterministic choice.
	\item If $c(S)=S$ for all $S\subseteq X$, then the 2SLM coincides with
	the MNL.
\end{enumerate}
\noindent

Two special cases of this model were provided in
\cite{echenique2018}. The first is the two-stage Luce model. This
model restricts $c$, such that the $c(A)$ represents the set of all
\textit{undominated alternatives} in $A$.

\begin{definition}[two-Stage Luce model (2SLM), \cite{echenique2018}]\label{def:2SLM}
	A general Luce model $(a,c)$ is called a 2SLM if there exists a
	strict partial order (i.e. transitive, antisymmetric and irreflexive
	binary relation) $\succ$ such that:
	
	\begin{equation}
	\label{def_c}
	c(A)=\left\{x\in A \ \mid \ \not \exists y\in A:  y\succ x\right\}.
	\end{equation}
	We call $\succ$ \textit{dominance relation}.
\end{definition}

\noindent
As a result, any 2SLM can be described by an irreflexive, transitive,
and antisymmetric relation $\succ$ that fully captures the relation
between products. The second model presented in \cite{echenique2018},
which is a particular case of the 2SLM, is the {\em Threshold Luce
	Model} (TLM), where they explain dominance in terms of how big the
attractiveness are when compared with each other, so $c$ is strongly
tied to $a$.  More specifically, for a given threshold $t>0$, the
consideration set $c(S)$ for a set $S\subseteq X$ is defined as:

\begin{equation}
\label{cS-threshold}
c(S)=\{y \in S \ \mid \  \not\exists x\in S: a_x>(1+t)a_y\}.
\end{equation}

\noindent
In other words, $x \succ y$ if and only if $\frac{a_x}{a_y}>(1+t)$.
Intuitively, an attractiveness ratio of more than $(1+t)$ means that
the less-preferred alternative is dominated by the more-preferred
alternative. Observe that the relation $\succ$ is clearly irreflexive,
transitive, and antisymmetric.

The dominance relation $\succ$ can thus be represented as a Directed
Acyclic Graph (DAG), where nodes represent the products and there is a
directed edge $(x,y)$ if and only if $x \succ y$. Sets satisfying
$c(S)=S$ are \textit{anti-chains} in the DAG, meaning that there are
no arcs connecting them. For instance, consider the {\em Threshold
	Luce model} defined over $X=\{1,2,3,4,5\}$ with attractiveness
values $a_1=12,a_2=8,a_3=6,a_4=3 \text{ and }a_5=2$, and threshold
$t=0.4$. We have that $i \succ j$ iff $a_i > 1.4 \ a_j$.

The DAG representing this dominance relation is depicted in Figure
\ref{fig:Threshold}.

\begin{figure}[!ht]
	\centering
	\begin{tikzpicture}[
	> = stealth, 
	shorten > = 1pt, 
	auto,
	node distance = 3cm, 
	semithick 
	]
	
	\tikzstyle{every state}=[
	draw = black,
	thick,
	fill = white,
	minimum size = 4mm
	]
	
	\node[state] (1) [label=below:{\small $a_1=12$}]{$1$};
	\node[state] (2) [label=below:{\small $a_2=8$}][right  of=1] {$2$};
	\node[state] (3) [label=below:{\small $a_3=6$}][right  of=2] {$3$};
	\node[state] (4) [label=below:{\small $a_4=3$}][right  of=3] {$4$};
	\node[state] (5) [label=below:{\small $a_5=2$}][right  of=4] {$5$};

	
	
	\path[->] (1) edge  (2);
	\path[->] (1) edge[bend left]  (3);
	\path[->] (1) edge[bend left]  (4);
	\path[->] (1) edge[bend left]  (5);
	\path[->] (2) edge[bend left]  (4);
	\path[->] (2) edge[bend left]  (5);	
	\path[->] (3) edge  (4);
	\path[->] (3) edge[bend left]   (5);
	\path[->] (4) edge  (5);
	
	\end{tikzpicture}
	\caption{Example of a DAG for a General Threshold Luce model} \label{fig:Threshold}
\end{figure}
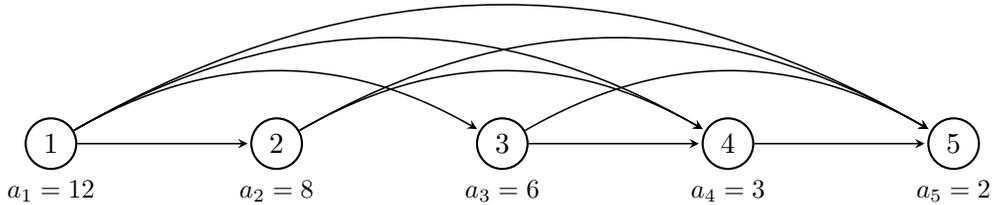

In the following example, we show that the 2SLM admits regularity
violations, meaning that it is possible that the probability of
choosing a product can increase when we enlarge the offered set. Since
regularity is satisfied by any choice model based on random utility
(RUM), this shows that the 2SLM is not contained in the RUM class
\footnote{Observe that this implies that the 2SLM is not contained by
	the Markov chain model proposed by \citep{blanchet2016markov} since
	this last one belongs to the RUM class
	\citep{berbeglia2016discrete}.}.

\begin{example}\label{ex:reg_violation}
	Consider the following instance of the Threshold Luce model (which is a special case of the 2SLM).
	Let $X=\{1,2,3,4\}$ with attractiveness $a_1=5,a_2=4,a_3=3$ and $a_4=3$.
	Consider $t=0.4$ and the attractiveness of the outside option $a_0=1$.
	For the offer set $\{2,3,4\}$, the probability of selecting product $2$
	is $4/11$ since no product dominates each other. However, if we add product $1$
	to the offer set, i.e. if we offer all four products, then the probability of
	selecting product $2$ increases to $4/10$, because products $3$ and $4$
	are now dominated by product $1$.
\end{example}

The Two-Stage Luce Model allows to accommodate different decision
heuristics and market scenarios by specifying the dominance relation
responding to a specific set of rules. Two cases where this can be
observed are provided below.

\noindent
\textbf{Feature Difference Threshold:} Assume that each product has a
set of features $\mathcal{F}=\{1,\ldots,m\}$. A product $x$ can then
be represented by a $m$-dimensional vector $x\in \mathbb{R}^m$. Assume
that the perceived relevance of each feature $k$ is measured by a
weight $\nu_k$, so that the utility perceived by the customers can be
expressed as a weighted combination of their features
$u(x)=\sum_{k=1}^{m}\nu_k\cdot x_k$.  The dominance relation can be
defined as $x\succ y \iff u(x)-u(y)=\sum_{k=1}^{m}\nu_k(x_k-y_k)\geq
T$, where $T>0$ is a tolerance parameter that represents how much
difference a customer allows before considering that an alternative
dominates another. The dominance relation is irreflexive, transitive,
and antisymmetric and hence it can be used to define an instance of
the 2SLM. One can easily show that this model is a special case of the
TLM.

\noindent
\textbf{Price levels:} Suppose we have $N$ products, each product $i$
has $k_i$ price levels.  Let $x_{il}$ be product $i$ with price
$p_{il}$ attached and it corresponding attractiveness $a_{il}$, we
assume that for each product $i$ prices $p_{ik}$ satisfy $p_{i1}<
p_{i2}<\ldots,p_{ik_i}$.  Naturally, $x_{i1}\succ
x_{i2}\succ\ldots\succ x_{ik_i}$, because for the same product the
customer is going to select the one with the lowest price available.
Each price level for each product can still dominate or be dominated
by other products as well, as long as the dominance relation is
irreflexive, transitive and antisymmetric.  This setting can be
modelled by the \textit{Two-Stage Luce model} in a natural way.

\section{Assortment Problems Under the Two-Stage Luce model}
\label{sec:assortment}

This section studies the assortment problem for the 2SLM using the
definitions and notations presented earlier. Let $r:X\cup
\left\{0\right\} \rightarrow \mathbb{R^+}$ be a revenue function
associated with each product and satisfying $r(0)=0$. The expected
revenue of a set $S\subseteq X$ is given by
\begin{equation}
\label{ER}
R(S)=\sum_{i\in c(S)}\rho(i,S)r(i).
\end{equation}

\noindent
The assortment problem amounts to finding a set
\[
S^*\in\argmax_{S\subseteq X}R(S)
\]
yielding an optimal revenue of
\[
R^*=\max_{S\subseteq X}R(S).
\]
Observe that every subset $S \subseteq X$ can be uniquely represented by a
binary vector $x \in \left\{0,1\right\}^n$ such that $i \in S$ if and
only if $x_i=1$. Using this bijection, the search space for $S^*$ can
be restricted to
\[
\mathcal{D}=\{x\in\{0,1\}^n \mid \ \forall s \succ t: x_s+x_t\leq 1 \}
\]
where $\mathcal{D}$ represents all the subsets satisfying $S=c(S)$,
which means that no product on $S$ dominates another product in
$S$. There is always an optimal solution $S^*$ that belongs to
$\mathcal{D}$ because $R(S)=R(c(S))$ and $c(S) \in D$ for all sets $S$
in $X$. As a result, the Assortment Problem under the 2SLM (\texttt{AP-2SLM}) can be
formulated as
\begin{equation*}
\label{Assortment_General}
\tag{\texttt{AP-2SLM}}
\begin{aligned}
& \underset{x}{\text{maximize}}
& & \frac{\sum_{i=1}^{n}r_ia_ix_i}{\sum_{i=1}^{n}a_ix_i +a_0} \\
& \text{subject to}
& & x\in \mathcal{D}
\end{aligned}
\end{equation*}
where $r_i$ and $a_i$ represent $r(i)$ and $a(i)$ for simplicity.

An effective strategy for solving many assortment problems consists in
considering revenue-ordered assortments, which are obtained by
choosing a threshold $\rho$ and selecting all the products with
revenue at least $\rho$. This strategy leads to an optimal algorithm
for the assortment problem under the MNL. Unfortunately, it fails
under the 2SLM because adding a highly attractive product may remove many
dominated products whose revenues and utilities would lead to a higher revenue.

\begin{example}[Sub-Optimality of Revenue-Ordered Assortments]
	\label{rev_ord_fail}
	Consider a Threshold Luce model with $X=\{1,2,3\}$, revenues
	$r_1=88,r_2=47,r_3=46$, attractiveness $a_0=55,a_1=13,a_2=26,a_3=15$
	and $t=0.6$. Then $x\succ y$ iff $a_x> 1.6 \ a_y$ which gives $2
	\succ 1$ and $2 \succ 3$.  Consider the sets $S \subseteq X$
	satisfying $S=c(S)$:
	
	\begin{table}[H]
		\begin{center}
			\begin{tabular}{@{} cc @{}}
				\toprule
				$S$ &        $R(S)$  \\
				\midrule
				$\{1\}$  & 16.824    \\	
				$\{2\}$ & 15.086   \\
				$\{3\}$  & 9.857   \\
				$\{1,3\}$ & 22.096    \\
				\bottomrule
			\end{tabular}
		\end{center}
	\end{table}
	
	The optimal revenue is given by assortment $\{1,3\}$, while the best
	revenue-ordered assortment under the 2SLM is $S = \{1\}$, yielding almost $24\%$
	less revenue.
\end{example}

\noindent


%

To solve problem \ref{Assortment_General}, consider first the
\texttt{MaxAtt} problem defined over the same set of
constraints. Given weights $c_i \in \mathbb{R}$ $(1 \leq i \leq n)$,
the \texttt{MaxAtt} problem is defined as follows:

\begin{equation*}
\label{Assortment_Linear}
\tag{\texttt{MaxAtt}}
\begin{aligned}
& \underset{x}{\text{maximize}}
& & \sum_{i=1}^{n}c_ix_i \\
& \text{subject to}
& & x\in \mathcal{D}
\end{aligned}
\end{equation*}

\noindent
We now show that \eqref{Assortment_Linear} can be reduced to the
maximum weighted independent set problem in a directed acyclic graph
with positive vertex weights. An independent set is a set of vertices
$I$ such that there is no edge connecting any two vertices in $I$. The
maximum weighted independent set problem (\texttt{MWIS}) can be stated
as follows:

\begin{definition}
	\label{MWIS}
	\textit{Maximum Weighted Independent Set Problem:} Given a graph
	$G=(V,E)$ with a weight function $w:V\rightarrow\mathbb{R}$, find an
	independent set $I^*\in \argmax_{I\in \mathcal{I}}\sum_{i\in
		I}w(i)$, where $\mathcal{I}$ is the set of all independent sets.
\end{definition}

\noindent
Recall that the dominance relation can be represented as a DAG $G$
which includes an arc $(u,v)$ whenever $u \succ v$. As a result, the
condition $x\in \mathcal{D}$ implies that any feasible solution to
\eqref{Assortment_Linear} represents an independent set in $G$ and
maximizing $\sum_{i=1}^{n}c_ix_i$ amounts to finding the independent
set maximizing the sum of the weights. Since the dominance relation is
a partial order, the DAG representing the dominance relation is a
comparability graph. The following result is particularly
useful.

\begin{theorem}[\cite{Mohring1985}]
	\label{comparability}
	The maximum weighted independent set is polynomially-solvable for
	comparability graphs with positive weights.
\end{theorem}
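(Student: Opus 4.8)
The plan is to use the structural link between comparability graphs and partial orders. By definition, the edges of a comparability graph $G$ are exactly the comparable pairs of a strict partial order $\prec$ (in our setting, the dominance relation $\succ$). Consequently, two vertices are non-adjacent precisely when they are incomparable, so a vertex set is \emph{independent} in $G$ if and only if it is an \emph{antichain} of $\prec$. Finding a maximum weighted independent set in $G$ is therefore the same as finding an antichain of $\prec$ of maximum total weight, and it suffices to solve the latter in polynomial time.

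First I would describe an antichain through its interaction with chains: $A$ is an antichain iff $|A \cap C| \le 1$ for every chain $C$, since the elements of a chain are pairwise comparable. This yields the program $\max \sum_v w(v) x_v$ subject to $\sum_{v \in C} x_v \le 1$ for every chain $C$ and $x \in \{0,1\}^V$, whose linear-programming dual is a fractional minimum weighted chain cover. The key step is to realize this primal--dual pair as a network-flow problem: split each vertex $v$ into $v^-$ and $v^+$ joined by an arc required to carry at least $w(v)$ units, add an arc $u^+ \to v^-$ whenever $u \prec v$, and attach a common source and sink. Then every feasible $s$--$t$ flow decomposes into chains, a minimum-value flow is a minimum weighted chain cover, and the associated minimum cut exposes a maximum weight antichain. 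Minimum flow subject to lower bounds reduces to an ordinary maximum-flow computation, so the whole procedure runs in polynomial time.

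The crux is the min--max identity together with the fact that the optimal cut picks out a genuine antichain of weight equal to the minimum chain-cover value; this is the weighted analogue of Dilworth's theorem. I expect this duality --- specifically, proving that no integrality gap arises and that complementary slackness recovers an actual antichain --- to be the main obstacle, whereas the size and polynomial running time of the flow network are routine to verify. An alternative and shorter, though less constructive, route is to observe that comparability graphs are \emph{perfect} and that cliques of $G$ are exactly chains of $\prec$; the clique (equivalently chain) inequalities then define an integral stable-set polytope, and the maximum weighted independent set can be computed in polynomial time by the Gr\"otschel--Lov\'asz--Schrijver machinery for perfect graphs. Either way, the result of \cite{Mohring1985} follows.
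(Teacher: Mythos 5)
This theorem is not proved in the paper at all: it is imported as a black-box citation to \cite{Mohring1985}, so there is no in-paper argument to compare against. Judged on its own, your reconstruction is correct and is essentially the classical proof that the cited survey itself relies on. The identification of independent sets of a comparability graph with antichains of the underlying partial order is exactly right, and the reduction of maximum-weight antichain to a minimum flow with lower bounds (split $v$ into $v^-,v^+$ with the arc forced to carry at least $w(v)$, add $u^+\to v^-$ for $u\prec v$, and a source and sink) is the standard construction; the step you flag as the ``main obstacle'' is just the min-flow/max-cut theorem for networks with lower bounds. Since every non-split arc has infinite capacity, an optimal cut $(S,T)$ can have no arc from $T$ to $S$, its value is $\sum\{w(v): v^-\in S,\ v^+\in T\}$, and the absence of backward arcs forces that vertex set to be an antichain --- this is where positivity of the weights and integrality of flows close the argument, so there is no real gap, only a step you left compressed. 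Your alternative route via perfection of comparability graphs and the Gr\"otschel--Lov\'asz--Schrijver framework is also valid, though it is heavier machinery (ellipsoid-based) than the combinatorial flow argument and less in the spirit of the transitive-orientation algorithms that \cite{Mohring1985} surveys. Either version suffices for the way the theorem is used in Lemma~\ref{eq_A_MWIS} and Theorem~\ref{thm:main}.
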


\noindent
We are ready to present our first result.

\begin{lemma}
	\label{eq_A_MWIS}
	\eqref{Assortment_Linear} is polynomial-time solvable.
\end{lemma}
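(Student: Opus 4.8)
The plan is to reduce \eqref{Assortment_Linear} directly to the maximum weighted independent set problem on a comparability graph and then invoke Theorem~\ref{comparability}. As already noted in the text, the constraint $x \in \mathcal{D}$ says precisely that no two selected indices are comparable under $\succ$, so a feasible $x$ is exactly the indicator vector of an independent set in the comparability graph $G$ induced by the dominance DAG. Hence \eqref{Assortment_Linear} is literally an \texttt{MWIS} instance on $G$ with weights $c_i$, and the two objectives agree term by term.

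The one discrepancy between this instance and Theorem~\ref{comparability} is the sign of the weights: M\"ohring's result is stated for strictly positive weights, whereas the coefficients $c_i$ in \texttt{MaxAtt} may be zero or negative. The key step I would carry out is therefore to discard the vertices of nonpositive weight. Set $V^+ = \{i : c_i > 0\}$ and let $G^+$ be the subgraph of $G$ induced by $V^+$. I would establish two facts. First, no optimal solution ever needs a vertex with $c_i \le 0$: deleting such a vertex from any independent set leaves it independent and does not decrease the objective, so the optimum of \texttt{MaxAtt} is attained within $V^+$. Second, $G^+$ is again a comparability graph, because restricting the strict partial order $\succ$ to $V^+$ yields another strict partial order whose comparability graph is exactly $G^+$. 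Consequently the optimal value of \texttt{MaxAtt} equals the maximum weighted independent set of $G^+$ under the now strictly positive weights $\{c_i\}_{i \in V^+}$.

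With these observations in place, Theorem~\ref{comparability} applies to $G^+$ and returns an optimal independent set $I^*$ in polynomial time; padding $I^*$ with zeros on $V \setminus V^+$ gives an optimal solution of \eqref{Assortment_Linear}. Building $G$ from $\succ$, extracting $V^+$, and reassembling the solution are all polynomial, so the whole procedure runs in polynomial time.

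I expect the only genuine obstacle to be the handling of nonpositive weights, since everything else is bookkeeping around a known algorithm. I would be careful about the edge cases $c_i = 0$ (which can be dropped without affecting the optimum, although keeping them would also be harmless) and the degenerate case $V^+ = \emptyset$, where the optimum is $0$ attained at $x = 0$, so as to confirm the reduction is valid throughout.
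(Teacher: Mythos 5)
Your proposal is correct and follows essentially the same route as the paper's own proof: both reduce \eqref{Assortment_Linear} to \texttt{MWIS} on the comparability graph of $\succ$, discard the products with nonpositive weight after observing that deleting them preserves feasibility and does not decrease the objective, and then invoke Theorem~\ref{comparability} on the remaining positively weighted instance. Your explicit remarks that the induced subgraph on $V^+$ is still a comparability graph and that the cases $c_i=0$ and $V^+=\emptyset$ are harmless are slightly more careful than the paper's write-up, but they do not change the argument.
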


\begin{proof}
We first show that we can ignore those products with a negative
weight. Let $\hat{X}=\{i \in X\mid c_i>0\}$ and
$\mathcal{\hat{D}}=\{x\in\{0,1\}^n \mid \forall s, t \in \hat{X}, s \succ t: \ x_s+x_t\leq 1\}$.
Solving \eqref{Assortment_Linear} is equivalent to solving:

\begin{equation*}
\tag{\texttt{Reduced MaxAtt}}
\begin{aligned}
& \underset{x}{\text{maximize}}
& & \sum_{i\in\hat{X} }c_ix_i \\
& \text{subject to}
& & x\in \mathcal{\hat{D}}
\end{aligned}
\label{Assortment_reduced}
\end{equation*}

\noindent
Indeed, consider an optimal solution $x^*$ to Problem
\ref{Assortment_Linear} and assume that there exists $i\in X$ such
that $c_i<0$ and $x_i^*=1$. Define $\hat{x}$ like $x^*$ but with
$\hat{x_i}=0$. $\hat{x}$ has a strictly greater value for the objective function
in \ref{Assortment_reduced} than $x^*$ has, and
is feasible since setting a component to zero cannot violate any
constraint (i.e., $\hat{x} \in \mathcal{D}$). This contradicts the
optimality of $x^*$. Now Problem \texttt{Reduced MaxAtt} can be
reduced to solving an instance of Problem \texttt{MWIS} in a DAG with
positive weights that corresponds to the dominance relation. This DAG
is a comparability graph and the result follows from Theorem
\ref{comparability}.
\end{proof}	

The next step in solving the assortment problem under the 2SLM relies
on a result by Megiddo \cite{megiddo1979combinatorial}. Let $D$ be a
domain defined by some set of constraints and consider Problem
\ref{P_A}
\begin{equation*}
\tag{\texttt{A}}
\begin{aligned}
& \underset{x}{\text{maximize}}
& & \sum_{i=1}^{n}c_ix_i \\
& \text{subject to}
& & x\in D
\end{aligned}
\label{P_A}
\end{equation*}

\noindent
and its associated Problem \ref{P_B}:

\begin{equation*}
\tag{\texttt{B}}
\begin{aligned}
& \underset{x}{\text{maximize}}
& & \frac{a_0+\sum_{i=1}^{n}a_ix_i}{b_0+\sum_{i=1}^{n}b_ix_i} \\
& \text{subject to}
& & x\in D.
\end{aligned}
\label{P_B}
\end{equation*}

\noindent
Using this notation, Megiddo's theorem can be stated as follows.

\begin{theorem}[\cite{megiddo1979combinatorial}]
	\label{rational}
	If Problem \ref{P_A} is solvable within $O(p(n))$ comparisons and
	$O(q(n))$ additions, then Problem \ref{P_B} is solvable in
	$O(p(n)(q(n) +p(n)))$ time.
\end{theorem}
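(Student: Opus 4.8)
The plan is to reduce the fractional Problem \ref{P_B} to a parametrized family of linear problems of the form \ref{P_A} and then deploy Megiddo's parametric-search technique. First I would introduce the parametric function
\[
F(\lambda) = \max_{x \in D} \Big[ \big(a_0 + \sum_{i=1}^n a_i x_i\big) - \lambda \big(b_0 + \sum_{i=1}^n b_i x_i\big) \Big].
\]
Assuming, as in \eqref{Assortment_General} where the denominator equals $a_0 + \sum_i a_i x_i > 0$, that $b_0 + \sum_i b_i x_i$ is positive on $D$, a Dinkelbach-type argument identifies the optimal value $\lambda^*$ of Problem \ref{P_B} with the unique root of $F$: being a pointwise maximum of affine functions of $\lambda$, $F$ is continuous, convex, and strictly decreasing, and $F(\lambda) \geq 0$ holds exactly when $\lambda \leq \lambda^*$. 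Crucially, for each fixed $\lambda$ the inner maximization \emph{is} an instance of Problem \ref{P_A} with coefficients $c_i = a_i - \lambda b_i$ (the additive constant $a_0 - \lambda b_0$ does not affect the argmax), so it is solvable within the assumed $O(p(n))$ comparisons and $O(q(n))$ additions. The entire task therefore collapses to locating the root $\lambda^*$ of $F$ without knowing it in advance.

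Second, instead of a binary search on $\lambda$ (which need not terminate at the exact root), I would run the Problem \ref{P_A} algorithm \emph{symbolically} at the unknown value $\lambda = \lambda^*$. Every quantity the algorithm manipulates is then an affine function $\alpha + \beta\lambda$ of the parameter; additions of affine functions remain affine, which accounts for the $O(q(n))$ symbolic additions. The only operations that inspect the numerical value of $\lambda^*$ are the comparisons: deciding whether $\alpha_1 + \beta_1\lambda^* \leq \alpha_2 + \beta_2\lambda^*$ reduces, once the sign of $\beta_1 - \beta_2$ is known, to locating $\lambda^*$ relative to the critical value $\lambda_0 = (\alpha_2 - \alpha_1)/(\beta_1 - \beta_2)$. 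I would resolve each such comparison by a single \emph{oracle call}: evaluate $F(\lambda_0)$ by solving the concrete instance of Problem \ref{P_A} at $\lambda = \lambda_0$. Because $F$ is strictly decreasing, the sign of $F(\lambda_0)$ immediately reveals whether $\lambda^* < \lambda_0$, $\lambda^* > \lambda_0$, or $\lambda^* = \lambda_0$, in which case $\lambda^*$ has been found exactly and we stop.

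For the cost accounting, the symbolic run performs $O(p(n))$ comparisons, each triggering one oracle call of cost $O(p(n) + q(n))$, and all remaining work (the $O(q(n))$ symbolic additions) is dominated. This yields the claimed bound $O\big(p(n)\,(p(n) + q(n))\big)$. If the symbolic run terminates without any comparison producing $F(\lambda_0)=0$, the resolved comparisons have confined $\lambda^*$ to an interval on which $F$ agrees with a single affine piece; setting that affine function to zero recovers $\lambda^*$ exactly, and the $x$ selected by the symbolic execution is an optimal solution of Problem \ref{P_B}. I expect the main obstacle to be the correctness of the symbolic simulation itself: one must argue that resolving comparisons via the sign of $F(\lambda_0)$ faithfully reproduces the concrete algorithm's control flow at $\lambda = \lambda^*$, and this step hinges entirely on the strict monotonicity of $F$, which is what guarantees that a single oracle evaluation suffices to settle each comparison.
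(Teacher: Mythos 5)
This theorem is imported from \cite{megiddo1979combinatorial} and the paper gives no proof of its own, so there is nothing internal to compare against; your reconstruction is the standard (and correct) argument for Megiddo's result, namely the Dinkelbach-type reduction of Problem \ref{P_B} to the root of the parametric value function $F(\lambda)$ combined with a symbolic simulation of the Problem \ref{P_A} algorithm in which each of the $O(p(n))$ comparisons is resolved by one concrete oracle solve costing $O(p(n)+q(n))$. The accounting and the monotonicity argument that justifies resolving each comparison with a single evaluation of $F$ are both sound, so the proposal is a faithful proof of the cited theorem.
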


\noindent
We are now in position to state our main theorem of this section.

\begin{theorem}\label{thm:main}
	The assortment problem under the Two-Stage Luce model is polynomial-time solvable.
\end{theorem}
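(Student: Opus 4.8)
The plan is to recognize the objective of \eqref{Assortment_General} as an instance of the fractional program \ref{P_B} and then combine Megiddo's theorem (Theorem \ref{rational}) with Lemma \ref{eq_A_MWIS}. Concretely, both the numerator $\sum_{i=1}^n r_i a_i x_i$ and the denominator $\sum_{i=1}^n a_i x_i + a_0$ are affine in $x$: the numerator matches the form $a_0 + \sum_i a_i x_i$ of Problem \ref{P_B} with constant term $0$ and coefficients $r_i a_i$, while the denominator matches $b_0 + \sum_i b_i x_i$ with constant $a_0$ and coefficients $a_i$. Since every $a_i > 0$, the denominator is strictly positive on every nonempty feasible point of $\mathcal{D}$ (the all-zero vector yields revenue $0$ and can be treated separately), so the ratio is well defined on the relevant part of the domain.

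The associated linear problem \ref{P_A} over the same domain $\mathcal{D}$ is exactly the \eqref{Assortment_Linear} problem, whose coefficient vector is, in Megiddo's parametric scheme, $c_i = r_i a_i - \lambda a_i = a_i(r_i - \lambda)$. By Lemma \ref{eq_A_MWIS}, \eqref{Assortment_Linear} is polynomial-time solvable: after discarding products of negative weight it reduces to a maximum weighted independent set instance on the comparability graph induced by $\succ$, which is solvable in polynomial time by Theorem \ref{comparability}. This provides a polynomial-time oracle for Problem \ref{P_A}.

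To invoke Theorem \ref{rational}, the remaining step is to bound the numbers of comparisons $p(n)$ and additions $q(n)$ used by this oracle by polynomials in $n$ and, more importantly, to confirm that the oracle conforms to Megiddo's comparison-and-addition computational model, so that replacing each weight $c_i$ by the affine function $a_i(r_i - \lambda)$ keeps every intermediate quantity affine in $\lambda$ and renders every branching decision a comparison of such affine functions. I expect this to be the main obstacle: one must trace the reduction of \eqref{Assortment_Linear} to \texttt{MWIS} and the maximum-weight-antichain/flow computation underpinning Theorem \ref{comparability}, and check that the weights enter only through additions and comparisons, with multiplications confined to the fixed inputs $a_i, r_i$. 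Note that even the sign tests $c_i > 0$ used to drop negative-weight products are comparisons of the affine functions $a_i(r_i-\lambda)$ against $0$, contributing breakpoints at $\lambda = r_i$, so they fit the model naturally. Granting this structure, $p(n)$ and $q(n)$ are polynomial simply because the whole algorithm runs in polynomial time.

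With these ingredients, Theorem \ref{rational} yields an $O\big(p(n)(q(n)+p(n))\big)$-time algorithm for Problem \ref{P_B}, i.e. for \eqref{Assortment_General}, which is polynomial since $p$ and $q$ are. I would close by observing that the critical value $\lambda^\ast$ located by the parametric search equals the optimal expected revenue $R^\ast$, and the independent set optimal for the linear subproblem at $\lambda^\ast$ is an optimal assortment $S^\ast$.
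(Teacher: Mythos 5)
Your proposal is correct and follows essentially the same route as the paper: identify \eqref{Assortment_General} as an instance of Problem \ref{P_B}, use Lemma \ref{eq_A_MWIS} to solve the associated linear Problem \ref{P_A} (i.e., \eqref{Assortment_Linear}) in polynomial time via maximum weighted independent sets in comparability graphs, and conclude by Theorem \ref{rational}. Your additional care about verifying that the oracle fits Megiddo's comparison-and-addition model is a reasonable refinement that the paper's proof glosses over, but it does not change the argument.
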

\begin{proof}
Recall that the assortment problem under the 2SLM (\texttt{AP-2SLM}) can be formulated as
\begin{equation}
\label{Assortment_General_formula}
\begin{aligned}
& \underset{x}{\text{maximize}}
& & \frac{\sum_{i=1}^{n}r_ia_ix_i}{\sum_{i=1}^{n}a_ix_i +a_0} \\
& \text{subject to}
& & x\in \mathcal{D}
\end{aligned}
\end{equation}
where $\mathcal{D}=\{x\in\{0,1\}^n \mid \ \forall s \succ t: x_s+x_t\leq 1 \}.$

The problem of maximizing the numerator in
\eqref{Assortment_General_formula} is exactly the \texttt{MaxAtt}
problem. By Lemma \ref{eq_A_MWIS}, this is polynomial-time
solvable. Now observe that \eqref{Assortment_General_formula} (i.e.,
problem \ref{Assortment_General}) can be seen as a Problem
\ref{P_B}. Therefore, by Theorem \ref{rational}, the assortment
problem under the 2SLM is solvable in polynomial time. 
\end{proof}	
\noindent
In addition to solving the assortment problem under the 2SLM, Theorem
\ref{thm:main} is interesting in that it solves the assortment problem
under a Multinomial Logit with a specific class of constraints.  It
can be contrasted with the results by \cite{davis2013assortment},
where feasible assortments satisfy a set of totally unimodular
constraints. They show that the resulting problem can be solved as a
linear program. However, the 2SLM introduces constraints that are not
necessarily totally unimodular as we now show.

\begin{example}
	\label{2SLMvsTUM}
	Consider $X=\left\{1,2,3,4\right\}$ and $1 \succ 3, 1 \succ 4, 2 \succ
	3, 2 \succ 4,$ and $3 \succ 4$.  The constraint matrix that defines the
	feasible space ($\mathcal{D}$) for this instance is:
	\begin{equation*}
	M=	\begin{bmatrix}
	1       & 0 & 1 & 0 \\
	1       & 0 & 0 & 1 \\
	0       & 1 & 1 & 0 \\
	0       & 1 & 0 & 1 \\
	0       & 0 & 1 & 1 \\
	\end{bmatrix}
	\end{equation*}
	where each row represents a constraint $x_u+x_v\leq 1$.  meaning that
	just one end of the edge can be selected at the time.
	\cite{camion1965characterization} proved that $M$ is totally
	unimodular if and only if, for every (square) Eulerian submatrix A of
	$M$, $\sum_{i,j} a_{ij} \equiv 0 \pmod 4$. Consider the sub-matrix
	corresponding to the first, second, and fifth rows and the first,
	third, and fourth columns
	\begin{equation*}
	N=	\begin{bmatrix}
	1       &1 & 0 \\
	1       & 0  & 1 \\
	0       & 1 & 1  \\
	
	\end{bmatrix}
	\end{equation*}
	Matrix $N$ is eulerian (The sums of every element on each row or on
	each column is a multiple of 2). But the sum of all elements of $N$ is
	$6 \not \equiv 0 \pmod 4$ and hence $M$ is not totally unimodular.
\end{example}

We close this section by explaining how our results can be extended to
a more general setting.  \cite{gallego2014general} proposed the
general attraction model (GAM) to describe customer behaviour, that
alleviates some deficiencies of the MNL.  More specifically, the
intuition behind this choice model is that whenever a product is not
offered, then its absence can potentially increase the probability of
the no-purchase alternative, as consumers can potentially look for the
product elsewhere, or at a later time. To achieve this effect, for
each product $j$ the model considers two different weights: $v_j$ and
$w_j$, usually with $0\leq w_j \leq v_j$.  If product $j$ is offered,
then its preference weight is $v_j$. But if $j$ is not offered, then
the preference weight of the outside option is increased by $w_j$.
For all $j\in X$, let $\widetilde{v_j}=v_j-w_j$ and $\tilde{v_0}=v_0 +
\sum_{k \in X} w_k$. Using this notation, the probabilities associated
with the GAM model can be recovered by means of the following
equation:

\begin{equation}\label{eq:gam}
\rho(j,S)=\begin{cases}
\frac{v_j}{\sum_{i \in S}\widetilde{v_j} + \widetilde{v_0}} & \text{if } j\in S, \\
0 & \text{if } j\notin S.
\end{cases}
\end{equation}

Observe that the resulting assortment problem will has the same
functional form than problem \ref{Assortment_General}, with a slight
modification on the coefficients in the denominator. Thus, we can
apply the same solution technique described in Theorem \ref{thm:main}
to find the optimal assortment for the GAM.

\section{The Capacitated Assortment Problem}
\label{sec:capacitated}

In many applications, the number of products in an assortment is
limited, giving rise to capacitated assortment problems. Let $C$ $(1
\leq C \leq n)$ be the maximum number of products allowed in an
assortment.  The Capacitated Assortment Problem under the Two-Stage Luce Model
(\texttt{C2SLMAP}) is given by
\begin{equation*}
\label{C2SLMAP}
\tag{\texttt{C2SLMAP}}
\begin{aligned}
& \underset{x}{\text{maximize}}
& & \frac{\sum_{i=1}^{n}r_ia_ix_i}{\sum_{i=1}^{n}a_ix_i +a_0} \\
& \text{subject to}
& & x\in \mathcal{D}_C
\end{aligned}
\end{equation*}

\noindent
where $\mathcal{D}_C=\{x\in\{0,1\}^n \mid \forall (s,t)\in \mathcal{R}
\quad x_s+x_t\leq 1 \land \sum_{i=1}^{n}x_i\leq C\}$. As before, it is useful
to define its capacitated maximum-attractiveness counterpart (\texttt{C-MaxAtt}), i.e.,
\begin{equation*}
\label{MACC}
\tag{\texttt{C-MaxAtt}}
\begin{aligned}
& \underset{x}{\text{maximize}}
& & \sum_{i=1}^{n}c_ix_i\\
& \text{subject to}
& & x\in \mathcal{D}_C
\end{aligned}
\end{equation*}

\noindent
This section first proves that the capacitated assortment problem
under the 2SLM is NP-hard. The reduction uses the Maximum Weighted
Budgeted Independent Set (\texttt{MWBIS}) problem proposed by
\cite{DBLP:journals/corr/Bandyapadhyay14} which amounts to finding a
maximum weighted independent set of size not greater than $C$.
\cite{Kalra2017} showed that Problem (\texttt{MWBIS}) is NP-hard for
bipartite graphs.

\begin{theorem}\label{np-hard-proof}
	Problem \eqref{C2SLMAP} is NP-hard (under Turing reductions).
\end{theorem}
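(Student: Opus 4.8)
The plan is to give a polynomial-time Turing reduction from \texttt{MWBIS} on bipartite graphs, which \cite{Kalra2017} proved NP-hard, to the optimization problem \eqref{C2SLMAP}. Let a \texttt{MWBIS} instance consist of a bipartite graph $G=(L\cup R,E)$, vertex weights $w_i$, and a budget $C$; we may assume the weights are positive integers, since vertices of nonpositive weight never belong to an optimal solution and can be deleted, and integer weights can be obtained by scaling. The first step is to realize $G$ as the comparability graph of a 2SLM dominance relation: orient every edge from $L$ to $R$ and let $\succ$ be the resulting relation, i.e. $u\succ v$ iff $\{u,v\}\in E$ with $u\in L$ and $v\in R$. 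This $\succ$ is irreflexive and antisymmetric, and it is transitive \emph{vacuously} because no directed path has length two (every arc leaves $L$ and enters $R$), so $\succ$ is a valid strict partial order. With this choice, the feasible sets of a \eqref{C2SLMAP} instance built on $\succ$ with capacity $C$ --- namely the independent sets of $G$ of size at most $C$ --- coincide exactly with the feasible region of the \texttt{MWBIS} instance.

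Next I would fix the remaining 2SLM data so that the fractional objective of \eqref{C2SLMAP} ranks feasible sets by total weight. Put $a_0=1$, let $a_i=\varepsilon$ for every product $i$, and set $r_i=w_i/\varepsilon$, so that $r_ia_i=w_i$. For a feasible set $S$ the objective becomes $f(S)=\frac{w(S)}{1+\varepsilon|S|}$, where $w(S)=\sum_{i\in S}w_i$. The purpose of the uniform, tiny attractiveness is that the denominator is nearly constant, making $f$ order-equivalent to the linear weight $w(\cdot)$ on the feasible region. Concretely, for two feasible sets $S,S'$ one has $f(S)>f(S')$ iff $(w(S)-w(S'))+\varepsilon\,(w(S)|S'|-w(S')|S|)>0$. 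Choosing $\varepsilon=\tfrac{1}{2nW}$ with $W=\sum_i w_i$ and $n=|V|$ bounds the second term by $\tfrac12$ in absolute value, while integrality forces $|w(S)-w(S')|\ge 1$ whenever the weights differ. Hence every maximizer of $f$ is a maximum-weight independent set of size at most $C$, and $\varepsilon$ has polynomial bit-length, so the whole instance is constructed in polynomial time.

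Finally I would invoke the optimization oracle for \eqref{C2SLMAP} once on the constructed instance, read off the returned optimal assortment $S^*$, and output $w(S^*)$ as the optimal \texttt{MWBIS} value (equivalently, answer a decision query ``is there a feasible set of weight at least $W_0$?'' by testing $w(S^*)\ge W_0$). Since the two feasible regions coincide and $S^*$ is weight-maximal among them, this decides \texttt{MWBIS}, establishing NP-hardness of \eqref{C2SLMAP} under Turing reductions. The main obstacle is exactly the mismatch between the ratio objective of \eqref{C2SLMAP} and the linear objective of \texttt{MWBIS}: one must turn a ratio into a faithful surrogate for total weight. The delicate point is to pick $\varepsilon$ small enough that no spurious trade-off between numerator and denominator can reorder feasible sets --- in particular so the optimizer cannot prefer a lighter but smaller set to a heavier one --- while keeping $\varepsilon$ of polynomial size so the reduction remains polynomial; the integrality of the weights together with the bound $\varepsilon<1/(2nW)$ is what makes this go through.
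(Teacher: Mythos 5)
Your proof is correct, but it takes a genuinely different route from the paper's. Both arguments start from the NP-hardness of the budgeted maximum weight independent set problem on bipartite graphs \citep{Kalra2017} and both use the same orientation trick (direct all edges from one side of the bipartition to the other, so that transitivity holds vacuously and feasible assortments coincide with independent sets of size at most $C$). The difference is in how the fractional objective is handled. The paper sets $a_i=1$ for all products and $a_0=0$, which turns the objective into the \emph{average} weight $\frac{1}{|S|}\sum_{i\in S}w_i$; to neutralize the denominator it must pass through exact-cardinality versions of both problems (\texttt{MWEBISBP} and \texttt{EC2SLMAP}) and chain several Turing reductions, each solving $C$ instances, one per cardinality. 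You instead keep the ``at most $C$'' constraint throughout, give every product the same tiny attractiveness $\varepsilon$ with $a_0=1$ and $r_i=w_i/\varepsilon$, and show via the explicit bound $\varepsilon<1/(2nW)$ together with weight integrality that the ratio objective is order-equivalent to the linear weight on the feasible region; this yields a single-oracle-call reduction directly to \eqref{C2SLMAP}. Your version buys a shorter chain and, notably, sidesteps the most delicate link in the paper's argument --- the passage between the exact-cardinality and at-most-$C$ versions, which the paper states somewhat loosely (its final sentence reduces \texttt{C2SLMAP} \emph{to} \texttt{EC2SLMAP}, the opposite direction of what a hardness proof needs) --- at the cost of a numerical perturbation argument and the need for the oracle to return (or allow recovery of) the optimal set rather than just its value. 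The paper's version buys a purely combinatorial construction with no $\varepsilon$, and as a byproduct establishes hardness of the exact-cardinality assortment problem and of the maximum weight bounded antichain problem \texttt{MWLA}, which the paper uses in the discussion following the theorem.
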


\noindent
It is interesting to mention that Problem \eqref{MACC} is equivalent
to finding an anti-chain of maximum weight among those of cardinality
at most $C$. This problem (\texttt{MWLA}) was proposed by
\cite{SHUM1996421} and its complexity was left open, but the above
results show that it is also NP-hard.
\cite{DBLP:journals/corr/Bandyapadhyay14} studied Problem
(\texttt{MWBIS}) for various types of graphs (e.g., trees and
forests), but the dominance relation of the 2SLM can never be a tree
since it is transitive (unless we consider a graph with a single vertex).

In light of this NP-hardness result, the rest of this section presents
polynomial-time algorithms for two special cases of the dominance
relation.

\subsection{The Two-Stage Luce model over Tree-Induced Dominance Relations}
\label{sub:forest}

Let $\mathcal{R}_{\succ}$ be the transitive reduction of the
irreflexive, antisymmetric, and transitive relation $\succ$. This
section considers the capacitated assortment problem when the relation
$\mathcal{R}_{\succ}$ can be represented as a tree. Without loss of
generality, we can assume that the tree contains all
products. Otherwise, we can add another product with zero weight that
dominates all original products. This new product will be the root of
the tree and the products not in the original tree will be the
children of the root. Similarly, the same transformation applies to
the case when $\mathcal{R}_{\succ}$ is a forest. Here all the trees in
the forest will be children of the new product.

We show how to solve Problem (\ref{MACC}). The result follows again by
applying Megiddo's theorem. The first step of the algorithm simply
removes all products with negative weight: Their children can be added
to the parent of the deleted vertex. The main step then solves
(\ref{MACC}) bottom-up using dynamic programming from the leaves. For
simplicity, we present the recurrence relations to compute the weight
of the optimal assortment. It is easy to recover the optimal
assortment itself. The recurrence relations compute two functions:
\begin{enumerate}
	\item $\mathcal{A}(k,c)$ which returns the weight of an optimal
	assortment using product $k$ and its descendants in the tree
	representation of $\mathcal{R}_{\succ}$ for a capacity $c$;
	
	\item $\mathcal{A}^+(S,c)$ which, given a set $S$ of vertices that are
	children of a vertex $k$, returns the weight of an optimal assortment
	using the products in $S$ and their descendants for a capacity $c$.
\end{enumerate}

\noindent
The key intuition behind the recurrence is as follows. If $v$ is a
vertex and $v_1$ and $v_2$ are two of its children, $v_1$ does not
dominate $v_2$ or any of its descendants. Hence, it suffices to
compute the best assortments producing $\mathcal{A}(v_1,0), \ldots,
\mathcal{A}(v_1,C)$ and $\mathcal{A}(v_2,0), \ldots,
\mathcal{A}(v_2,C)$ and to combine them optimally. The recurrence relations
are defined as follows ($v \in X$ and $1 \leq c \leq C$):
\begin{align*}
& \mathcal{A}(v,0)  = 0; \\
& \mathcal{A}(v,c)  = \max(c_v,\mathcal{A}^+(\textit{children}(v),c));
\end{align*}
and
\begin{align*}
& \mathcal{A}^+(\emptyset,0)  = 0; \\
& \mathcal{A}^+(S,c)  = \mathop{\max_{n_1, n_2 \geq 0}}_{n_1 + n_2 = c} \mathcal{A}^+(S \setminus \{e\},n_1) + \mathcal{A}(e,n_2) \mbox{ where } e = \argmax_{i \in S} c_i.
\end{align*}
where {\it children(p)} denotes the children of product $p$ in the
tree. Note that $\mathcal{A}^+(S,c)$ is computed recursively to obtain
the best assortment from the products in $S$ and their descendants. Using these
recurrence relation, the following Theorem follows:
\begin{theorem}
	\label{thm:forest}
	Let $\succ$ a dominance relation whose relation $\mathcal{R}_{\succ}$
	is a tree containing all products. The capacitated assortment problem under the
	2SLM and $\succ$ is polynomial-time solvable.
\end{theorem}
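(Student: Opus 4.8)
The plan is to reduce the ratio-objective problem \eqref{C2SLMAP} to its linear counterpart \eqref{MACC} by invoking Megiddo's theorem (Theorem~\ref{rational}), exactly as in the proof of Theorem~\ref{thm:main}. Observe that \eqref{C2SLMAP} is an instance of Problem~\ref{P_B} over the domain $\mathcal{D}_C$, with numerator coefficients $r_i a_i$ (and constant term $0$) and denominator coefficients $a_i$ (and constant $a_0$). The associated Problem~\ref{P_A} over the same domain is precisely \eqref{MACC} with weights $c_i = a_i(r_i - \lambda)$ for the parametric value $\lambda$ that Megiddo's search manipulates; note that these weights may be negative when $\lambda > r_i$, so \eqref{MACC} must be solved for arbitrary signs. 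Hence it suffices to show that \eqref{MACC} can be solved with a polynomial number of comparisons and additions when $\mathcal{R}_{\succ}$ is a tree, after which the bound on \eqref{C2SLMAP} follows directly from Theorem~\ref{rational}.

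Next I would establish correctness of the dynamic program defined by the recurrences for $\mathcal{A}$ and $\mathcal{A}^+$. The essential structural fact to record is that, because $\mathcal{R}_{\succ}$ is the transitive reduction of $\succ$ and is a tree, the relation $\succ$ coincides with the strict-ancestor relation: $u \succ w$ iff $u$ is a proper ancestor of $w$. Consequently the feasible points of $\mathcal{D}_C$ are exactly the antichains of the tree (no two chosen products lie on a common root-to-leaf path) of size at most $C$, and two subtrees rooted at distinct children of a common vertex impose no dominance constraints on one another. This independence is what justifies the split $n_1 + n_2 = c$ in the definition of $\mathcal{A}^+$. I would argue correctness by induction on the height of the subtree, verifying that $\mathcal{A}(v,c)$ equals the optimal weight obtainable from the subtree rooted at $v$ using at most $c$ products: the case in which $v$ is chosen contributes exactly $c_v$ (all descendants of $v$ are then forbidden, so no further capacity is usable inside this subtree), while the case in which $v$ is omitted reduces to optimally combining the children's subtrees through $\mathcal{A}^+$.

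For the running-time bound I would count $O(nC)$ entries of $\mathcal{A}$ and, for a vertex with $d$ children, $O(d)$ distinct argument sets of $\mathcal{A}^+$, each evaluated over $O(C)$ capacities with an $O(C)$-way maximization; since $\sum_v d_v = n-1$, the total is $O(nC^2)$ comparisons and additions, which is polynomial because $C \le n$. The preprocessing step that discards negative-weight products is justified by the same exchange argument used in Lemma~\ref{eq_A_MWIS}: deleting a chosen product of negative weight strictly increases the objective of \eqref{MACC} and cannot violate a dominance constraint or the cardinality bound, so no optimal solution uses one; reattaching the children of a deleted internal vertex to its parent preserves the ancestor relation among the survivors and keeps the structure a tree. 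Combining the $O(nC^2)$ bound with Theorem~\ref{rational} then yields a polynomial bound for \eqref{C2SLMAP}.

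The step I expect to be most delicate is the correctness of $\mathcal{A}^+$, specifically making the capacity bookkeeping precise. Since each term $\mathcal{A}(e,n_2)$ and $\mathcal{A}^+(S\setminus\{e\},n_1)$ is an ``at most $n_2$'' and ``at most $n_1$'' quantity rather than an ``exactly'' one, I must first show that both functions are monotone non-decreasing in the capacity argument (more budget only enlarges the set of feasible subsets), and then use this monotonicity to argue that ranging over the exact splits $n_1+n_2=c$ loses nothing relative to ranging over $n_1+n_2\le c$. Once this bookkeeping and the sibling-independence claim are nailed down, the induction closes cleanly and the combination of the polynomial bound with Megiddo's theorem gives the result.
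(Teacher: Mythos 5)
Your proposal is correct and follows essentially the same route as the paper: reduce \eqref{C2SLMAP} to \eqref{MACC} via Megiddo's theorem, discard negative-weight products, and solve \eqref{MACC} by the tree dynamic program, whose correctness rests on the same two facts the paper uses (a vertex dominates exactly its proper descendants, and subtrees rooted at distinct siblings are dominance-independent). Your write-up is in fact somewhat more careful than the paper's, in that it makes explicit the antichain characterization of $\mathcal{D}_C$, the sign issue in the parametric weights $a_i(r_i-\lambda)$, and the monotonicity argument needed to justify splitting the capacity as $n_1+n_2=c$ rather than $n_1+n_2\leq c$.
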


\subsection{The Attractiveness-Correlated Two-Stage Luce model}
\label{sub:UC}

The second special case considers a dominance relation that is correlated with attractiveness.

\begin{definition}[Attractiveness-Correlated Two-Stage Luce model]
	\label{UC2SLM}	
	A Two-Stage Luce model is attractiveness-correlated if the dominance relation
	satisfies the following two conditions:
	\begin{enumerate}
		\item If $x \succ y$, then $a_x > a_y$.
		\item If $x \succ y$ and $a_z > a_x$, then $z \succ y$.
	\end{enumerate}
\end{definition}

\noindent
The first condition simply expresses that product $x$ can only
dominate product $y$ if the attractiveness of $x$ is greater than the
attractiveness of $y$. The second condition ensures that, if $x$
dominates $y$, then any product whose attractiveness is greater than
$x$ also dominates $y$. The induced dominance relation is irreflexive,
anti-symmetric, and transitive. A particular case of this model, is
the Threshold Luce model.

When customers follow the Threshold Luce model, they form their
consideration sets based on the attractiveness of products.  Without
loss of generality, we can assume $a_1\geq a_2\geq \ldots\geq a_n$,
unless stated otherwise. For a set $S$, the associated consideration
set $c(S)$ may be a proper subset of $S$, but for the purpose of
assortment optimization, we don't have incentives to offer sets
including products that are not even consider by customers, so we can
restrict our search for optimal solutions to sets where $c(S)=S$.  A
necessary and sufficient condition for this to happen is
$\frac{\max_{i\in S}a_i}{\min_{i\in S}a_i}\leq 1+t$.  Meaning that
largest ratio between attractiveness is not greater than $1+t$, so no
dominance relation appears.

The firm now needs to carefully balance the inclusion of
high-attractiveness products and their prices to maximize the revenue.
In the following example we show that \textit{revenue ordered
	assortments} are not optimal under the Threshold Luce Model.  In
fact, this strategy can be arbitrarily bad.

\begin{example}[Revenue ordered assortments are not
	optimal] \label{ex:ro_not_optimal} Consider the following product
	configuration. Let $N+1$ products, with prices $p_1$ for the first
	product, and $\alpha p_1$ for the rest of them, with $\alpha<1$. The
	attractiveness for all products is $a_1$ for the first product and
	$\gamma a_1$ for all the rest, such as in the presence of product
	$1$, all the rest of the products are ignored. To complete the set
	up, let $a_0 $ the attractiveness of the outside option.  The best
	revenue ordered assortment is to consider product $1$, given a
	revenue of:
	
	\[R'=R(\{1\})=\frac{p_1 a_1}{a_1+a_0}\]
	
	But, if $N$ is big enough (at least bigger than $\frac{1}{\alpha\gamma}$),
	is more profitable to show $S_N=X\setminus\{1\}$, resulting in a revenue of:
	
	\[R^*=R(S_N)=\frac{N\cdot \alpha p_1\gamma a_1}{N\cdot \alpha \gamma a_1+a_0}\]
	
	Now, if we calculate the ratio if this two values, $R'$ and $R^*$ and let $N$ tend to infinity we have:
	
	\begin{align}
	\frac{R'}{R^*}=&\lim\limits_{N\to\infty}\frac{\frac{p_1 a_1}{a_1+a_0}}{\frac{N\cdot \alpha p_1\gamma a_1}{N\cdot \alpha \gamma a_1+a_0}}\nonumber\\
	\frac{R'}{R^*}=&\lim\limits_{N\to\infty}\frac{p_1 a_1}{a_1+a_0}\cdot\frac{N\cdot \alpha \gamma a_1+a_0}{N\cdot \alpha p_1\gamma a_1}\nonumber\\
	\frac{R'}{R^*}=&\frac{a_1}{a_1+a_0}
	\end{align}
	
	Observe that this last expression is the market share of offering just product $1$,
	which can be arbitrarily bad by either making $a_1$ as small as desired, or making the outside option more attractive.
\end{example}

\noindent
The capacitated assortment optimization can be solved in polynomial time
under the Attractiveness-Correlated Two-Stage Luce model. Consider
an assortment whose product with the largest attractiveness is $k$. This
assortment cannot contain any product dominated by $k$. Moreover, if $k_1$
and $k_2$ are two other products in this assortment, then $k_1$
cannot dominate $k_2$ since $k$ would also dominate $k_2$. As
a result, consider the set
\[
X_k=\{i\in X \mid a_i \leq a_k \ \& \ k \not\succ i \}.
\]
No product in $X_k$ dominates any other product in $X_k$ and hence the
\texttt{C2SLMAP} reduces to a traditional assortment problem under the
MNL. This idea is formalized in Algorithm \ref{MUACC_UC2SLM}, where
\texttt{CMLMAP} is a traditional algorithm for the MNL.  The algorithm
considers each product in turn and the products that it does not
dominate and applies a traditional capacitated assortment optimization
under the MNL. The best such assortment is the solution to the
capacitated assortment under the attractiveness-correlated 2SLM.

\begin{figure}[H]
	\begin{algorithm}[H]\caption{Capacitated Assortment Optimization under the Attractiveness-Correlated 2SLM.}
		\label{MUACC_UC2SLM}
		\DontPrintSemicolon
		\KwData{$X,\succ,r,a$}
		\KwResult{Optimal Assortment $S^*$}
		$R(S^*)=0$
		\For{$k =1,\ldots,n$}{
			$X_k=\{i\in X \mid a_i \leq a_k \ \& \ k \not\succ i\}$ \\
			$S_k=\texttt{CMLMAP}(X_k,r,a)$ \\
			\If{$R(S_k)>R(S^*)$}{
				$S^*=S_k$
			}
		}
		return $S^*$
	\end{algorithm}
\end{figure}

\begin{theorem}
	\texttt{C2SLMAP} can be solved in polynomial time for Attractiveness-Correlated instances.
	\label{thm:AC}
\end{theorem}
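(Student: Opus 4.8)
The plan is to prove correctness of Algorithm \ref{MUACC_UC2SLM} and then bound its running time, the crux being a single structural claim: for every product $k$, the set $X_k=\{i\in X \mid a_i \leq a_k \ \& \ k \not\succ i\}$ is an antichain with respect to $\succ$, so that $c(S)=S$ for every $S\subseteq X_k$. To establish this I would argue by contradiction: suppose $i,j\in X_k$ with $i\succ j$. Condition (1) of Definition \ref{UC2SLM} gives $a_i>a_j$, while membership in $X_k$ gives $a_i\leq a_k$. If $a_i<a_k$, then condition (2) applied with $x=i,\,y=j,\,z=k$ yields $k\succ j$, contradicting $k\not\succ j$. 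The only remaining case is a tie $a_i=a_k$ with $i\neq k$, which I expect to be the main obstacle; I would dispose of it either by restricting to instances with distinct attractiveness values or by a standard tie-breaking perturbation of the $a_i$ that leaves the optimal assortment unchanged but makes condition (2) applicable. Handling this tie case cleanly is where the argument needs the most care.

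Given the antichain property, correctness would follow in two directions. First, any assortment $S_k$ returned by \texttt{CMLMAP}$(X_k,r,a)$ is a subset of the antichain $X_k$, hence satisfies $c(S_k)=S_k$ and is a genuine feasible solution of \eqref{C2SLMAP}; on such sets the 2SLM probabilities coincide with the plain MNL, so the recorded value is exactly the true revenue $R(S_k)$, and the algorithm never outputs an infeasible set nor overstates revenue. Second, I would take an optimal solution $S^*$ of \eqref{C2SLMAP} with $c(S^*)=S^*$ and let $k$ be a product of maximum attractiveness in $S^*$. Then every $i\in S^*$ satisfies $a_i\leq a_k$, and since $k,i\in c(S^*)$ we have $k\not\succ i$; hence $S^*\subseteq X_k$ with $|S^*|\leq C$, so $S^*$ is feasible for the capacitated MNL subproblem over $X_k$. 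Because $X_k$ is an antichain, the MNL-optimal assortment over $X_k$ is itself a valid 2SLM assortment whose 2SLM revenue equals its MNL revenue, and that revenue is at least the MNL revenue of $S^*$, which equals $R(S^*)=R^*$. Thus $R(S_k)\geq R^*$, and since $S_k$ is feasible we also have $R(S_k)\leq R^*$, forcing equality. Taking the best $S_k$ over all $k$ therefore attains $R^*$.

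For the running time, I would invoke the fact that the capacitated assortment problem under the MNL is polynomial-time solvable (\cite{rusmevichientong2010dynamic}), so each call to \texttt{CMLMAP} runs in polynomial time. The loop makes $n$ such calls, with only $O(n)$ additional work per iteration to construct $X_k$ and evaluate $R(S_k)$, so the total running time is polynomial in $n$, which establishes the theorem.
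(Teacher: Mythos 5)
Your proposal follows essentially the same route as the paper: the same antichain claim for $X_k$, the same argument that an optimal assortment must be contained in $X_k$ where $k$ is its most attractive product, and the same reduction to $n$ calls of a polynomial-time capacitated MNL routine. The tie case $a_i=a_k$ that you flag as the delicate point is genuine, but the paper's own proof does not treat it either --- it merely says to break ties randomly --- so your write-up is, if anything, more explicit about the one place where condition (2) of Definition~\ref{UC2SLM} fails to apply.
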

\begin{proof}
To show correctness, it
suffices to show that the optimal assortment must be a subset of one
of the $X_k$ $(1 \leq k \leq n)$. Let $A$ be the optimal assortment
and assume that $k$ is its product with the largest attractiveness
(break ties randomly). $A$ must be included in $X_k$ since otherwise
it would contain a product $x$ such that $k \succ x$ (contradicting
feasibility) or such that $a(x) > a(k)$ (contradicting our
hypothesis). The correctness then follows since there is no dominance
relationship between any two elements in each of $X_k$.  The claim of
polynomial-time solvability follows from the availability of
polynomial-time algorithms for the assortment problem under the MNL
and the fact that are exactly $n$ calls to such an algorithm.
\end{proof}

\section{Joint Assortment and Pricing under the Threshold Luce model}
\label{sec:JAP_TLM}

The previous sections provides solutions to the Assortment
Optimization problem under the Two-Stage Luce model.  This section
aims at determining how to assign prices to products in order to
maximise the expected revenue.  It studies the Joint Assortment and
Pricing Problem under the Threshold Luce model, by making the
attractiveness of each product dependent upon its price.  Let
$p=\left(p_1,\ldots,p_n\right)$ be the price vector, where such that
$p_i\in \mathbb{R}_+\cup \{\infty\}$ represents the price of product
$i$.  Since the price will affect the attractiveness $a_i$ of product
$i$, the presentation makes this dependency explicit by writing
$a_i(p_i)$ whose form in this paper is specified by
\begin{equation}
\label{ec:att_price}
a_i(p_i)=\exp(u_i-p_i)
\end{equation}
\noindent
where $u_i$ is the \textit{intrinsic utility} of product $i$ and the
value $v_i=u_i- p_i$ is called the \textit{net utility} of product
$i$.  Assigning an infinite price to a product is equivalent to not
offering the product, as the attractiveness, and therefore the
probability of selecting the product, becomes $0$. Without loss of
generality, products are indexed in a decreasing order by intrinsic
utility.

The following definition is an extension of the definition of a
consideration set given an assortment $S$ when each product $i$ has a
price $p_i$.

\begin{definition}\label{def:consideration_set_prices}
	Given an assortment $S$, a price vector
	$p=\left(p_1,p_2,\ldots,p_n\right)$ and a threshold $t$, the
	consideration set $c(S,p)$ for the Threshold Luce model is defined
	as:
	\begin{equation}
	\label{cS-threshold-p}
	c(S,p)=\{j \in S \ \mid \  \not\exists i\in S: a_i(p_i)>(1+t)a_j(p_j)\}.
	\end{equation}
\end{definition}

The influence of the price vector over the dominance relations is
given by the following example:

\begin{example}\label{ex:price_effect_dominance}[Price effect on the dominance relation]
	Consider the {\em Threshold Luce model} defined over $X=\{1,2,3,4\}$
	with utilities $u_1=\ln(10),u_2=\ln(8),u_3=\ln(6)$ and $u_4=\ln(3)$,
	and consider first a scenario where all products have the same price
	$p_i=\ln(3) \quad \forall i=1,\ldots,4$. Consider also a second
	scenario with prices equal to $p_1'=\ln(4), p_2'=\ln(4),p_3'=\ln(3)$
	and $p_4'=\ln(2)$. For a threshold $t=0.5$, we have that $i \succ j$
	iff $a_i(p_i) > 1.5 \ a_j(p_j)$. A table summarizing the utilities,
	prices, and attractiveness for both scenarios is given in Table
	\ref{table:dagprice} and the DAGs depicting the dominance relations
	for the two scenarios are given in Figures \ref{fig:dag1} and
	\ref{fig:Threshold2}.
	
	\begin{table}[!t]
		\setlength\tabcolsep{2pt} 
		\begin{center}
			\begin{tabular}{@{} cc @{\hspace{1cm}} cc @{\hspace{1cm}} cc @{\hspace{1cm}} cc @{\hspace{1cm}} cc}
				\toprule
				$i$ &        $u_i$ & $p_i$ & $a_i(p_i)$ & $p_i'$ & $a_i(p_i')$   \\
				\midrule
				$1$  & $\ln(10)$ & $\ln(3)$ & $3.\overline{3}$ &$\ln(4)$ &$2.5$   \\	
				$2$ & $\ln(8)$  & $\ln(3)$ & $2.\overline{6}$ &$\ln(4)$ &$2$ \\
				$3$  & $\ln(6)$   & $\ln(3)$ & $2$ &$\ln(3)$ &$2$\\
				$4$ & $\ln(3)$    & $\ln(3)$ & $1$ &$\ln(2)$ &$1.5$\\
				\bottomrule
			\end{tabular}
		\end{center}
		\caption{Summary of utilities, prices and attractiveness for the two proposed scenarios.}
		\label{table:dagprice}
	\end{table}
	
	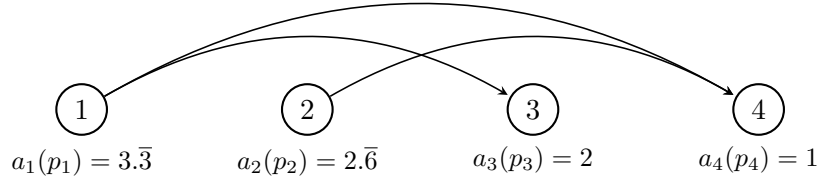
\begin{figure}[!t]
		\centering
		\begin{tikzpicture}[
		> = stealth, 
		shorten > = 1pt, 
		auto,
		node distance = 3cm, 
		semithick 
		]
		
		\tikzstyle{every state}=[
		draw = black,
		thick,
		fill = white,
		minimum size = 4mm
		]
		
		\node[state] (1) [label=below:{\small $a_1(p_1)=3.\overline{3}$}]{$1$};
		\node[state] (2) [label=below:{\small $a_2(p_2)=2.\overline{6}$}][right  of=1] {$2$};
		\node[state] (3) [label=below:{\small $a_3(p_3)=2$}][right  of=2] {$3$};
		\node[state] (4) [label=below:{\small $a_4(p_4)=1$}][right  of=3] {$4$};
		
		
		\path[->] (1) edge[bend left]  (3);
		\path[->] (1) edge[bend left]  (4);
		
		\path[->] (2) edge[bend left]  (4);
		
		

		\end{tikzpicture}
		\caption{The DAG for the first scenario where all prices are fixed to $\ln(3)$ and the threshold is $t=0.5$.
			Product $1$ dominates products $3$ and $4$, and product $2$ dominates product $4$.}
		\label{fig:dag1}
	\end{figure}

	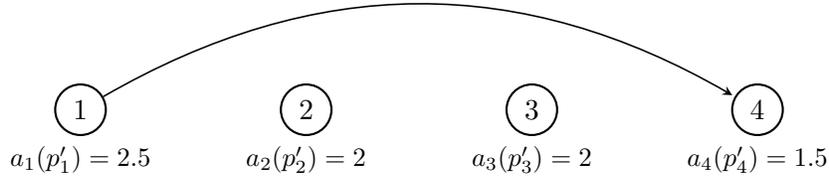
\begin{figure}[!t]
		\centering
		\begin{tikzpicture}[
		> = stealth, 
		shorten > = 1pt, 
		auto,
		node distance = 3cm, 
		semithick 
		]
		
		\tikzstyle{every state}=[
		draw = black,
		thick,
		fill = white,
		minimum size = 4mm
		]
		
		\node[state] (1) [label=below:{\small $a_1(p_1')=2.5$}]{$1$};
		\node[state] (2) [label=below:{\small $a_2(p_2')=2$}][right  of=1] {$2$};
		\node[state] (3) [label=below:{\small $a_3(p_3')=2$}][right  of=2] {$3$};
		\node[state] (4) [label=below:{\small $a_4(p_4')=1.5$}][right  of=3] {$4$};
		

		\path[->] (1) edge[bend left]  (4);
		
		
		

		\end{tikzpicture}
		\caption{The DAG for the second scenario where all prices are fixed to $(\ln(4),\ln(4),\ln(3),\ln(2))b$ and the threshold is $t=0.5$.
			Only product $1$ dominates product $4$.}
		\label{fig:Threshold2}
	\end{figure}
\end{example}

It is also necessary to update the definition of $\rho$  in Definition \ref{def:GLM},
since it now depends on the price of all products in the assortment.
The definition of $\rho: X\cup \left\{0\right\} \times 2^X\times \left(\mathbb{R}_+\cup \infty\right)^n \to [0,1]$ becomes:
\begin{equation}
\label{T_pbb}
\rho(i,S,p)=\begin{cases}
\frac{a_i(p_i)}{\sum_{j\in c(S,p)} a_j(p_j) + a_0}, & \text{if } i\in c(S,p), \\
0 & \text{if } i\notin c(S,p).
\end{cases}
\end{equation}
where $a_0$ is the attractiveness of the outside option.

\noindent
The expected revenue (ER) of an assortment $S\subseteq X$ and a
price vector $p\in \mathbb{R}_+^n$ is given by

\begin{equation}\label{ER_price}
\tag{ER}
R(S,p)=\sum_{i\in c(S,p)}\rho(i,S,p)p_i.
\end{equation}

A pair $(S,p)$ with $S\subseteq X$ and $p\in \left(\mathbb{R}_+\cup
{\infty}\right)^n$ is \textit{valid} if $S=\{i:p_i<\infty\}$ and
$c(S,p)=S$. Let $\mathcal{V}$ be the set of all valid pairs $(S,p)$.
Observe that one can always restrict the search for optimal solutions
to $\mathcal{V}$. Indeed, all dominated products can be given an   
infinite price and removing them from the original
assortment yields the exact same revenue.


%

The Joint Assortment and Pricing problem aims at finding a set $S^*$ and a price vector $p^*$ satisfying
\[
(S^*,p^*)\in\argmax_{(S,p)\in \mathcal{V}} R(S,p)
\]
and yielding an optimal revenue of
\[
R^*=R(S^*,p^*).
\]

First observe that the strategy used to solve this problem under the
multinomial logit does not carry over to the Threshold Luce Model.
Under the multinomial logit, the optimal solution for the joint
assortment and pricing problem is a fixed adjusted margin policy
\citep{wang2012} which, for equal price sensitivities and normalised
costs, translates to a fixed price policy.  As shown in
\cite{li2011pricing}, the optimal solution for the pricing problem
under the multinomial logit can be expressed in closed form using the
Lambert function $W(x):[0,\infty)\to [0,\infty)$ which is defined
as the unique function satisfying:
\begin{equation}\label{ec:Lambert}
x=W(x)e^{W(x)} \quad \forall x\in [0,\infty).
\end{equation}

\noindent
Using this function, the optimal revenue can be expressed as:

\begin{equation}\label{eq:R-interior}
R^*=W\left(\frac{\sum_{i\in X}\exp(u_i-1)}{a_0}\right)
\end{equation}

\noindent
The prices are all equal and satisfy: $p_i=1+R^*\quad \forall i\in
X$. 
The following example shows that fixed-price policy is not optimal
under the Threshold Luce Model.

\begin{example}[Fixed-Price policy is not optimal]\label{exmp:fixed_price_num}
	
	Consider $11$ products with product 1 having utility $u=2$ and all
	remaining $10$ products having utility $u'=1$.  Consider $a_0=1$ and
	$t=1$.  Observe that, for any fixed price, product $1$ always
	dominates the other $10$ products having lower utility, as $\exp(u-u')=
	\exp(1)=e > (1+t)=2$.  Therefore, the optimal revenue for a
	a fixed price strategy is:
	\[
	R_{fixed}=W\left(\frac{\exp(u-1)}{a_0}\right) = W(e) =1.
	\]
	As a result, the $10$ lower utility products are completely ignored
	and only product 1 contributes to the revenue.
	
	Consider the following price scheme now: let the price for product
	$1$ be $p=1.8$ and let the price be $p'=1.4$ for the remaining
	products. Product $1$ does not dominate any other product now. Indeed,
	for any $1<k\leq 11$,
	\[
	\frac{a_1}{a_k}=\exp((u-p)-(u'-p')) = \exp((2-1.8)-(1-1.4)) \approx 1.822 <1+t =2,
	\]
	which yields a revenue of:
	\begin{equation*}
	R'=\frac{p\cdot\exp(u-p) +10\cdot p'\exp(u'-p')}{\exp(u-p) +10\cdot\exp(u'-p')+a_0}= \frac{1.8\cdot\exp(2-1.8) +10\cdot 1.4\exp(1-1.4)}{\exp(2-1.8) +10\cdot\exp(1-1.4)+1}\approx 1.298,
	\end{equation*}
	This pricing scheme improves upon the fixed-price policy, yielding a revenue almost $\%30$ higher.
\end{example}

The intuition behind this example is as follows: For a fixed price
strategy, the only factor affecting dominance is the intrinsic
utilities because the prices vanish when calculating the ratio between
two attractiveness. This means that the solution can potentially miss the
benefits of low attractiveness products which are dominated by
the most attractive product.

It is thus important to understand the structure of an optimal
solution for the Joint Assortment and Pricing problem under the
Threshold Luce model. The first result states that, for any optimal
solution $(S^*,p^*)$, all product prices are greater or equal than
$R^*$, where $R^*$ denotes the revenue achieved at optimality.

\begin{proposition}\label{prop:price_bound}
	In any optimal solution $(S^*,p^*)$, for all $i\in S^*$, $p^*_i\geq R^*$.
\end{proposition}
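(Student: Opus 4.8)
The plan is to argue by contradiction: I assume that the optimal solution $(S^*,p^*)$ contains a product $i$ whose price lies strictly below the optimal revenue $R^*$, and I show that simply discarding this product strictly increases the revenue, contradicting optimality. The first step is to rewrite the objective for a valid pair. Since $(S^*,p^*)\in\mathcal{V}$ we have $c(S^*,p^*)=S^*$, so that
\[
R^*=R(S^*,p^*)=\frac{\sum_{j\in S^*}a_j(p_j^*)\,p_j^*}{\sum_{j\in S^*}a_j(p_j^*)+a_0}.
\]
Writing $N=\sum_{j\in S^*}a_j(p_j^*)\,p_j^*$ and $D=\sum_{j\in S^*}a_j(p_j^*)+a_0$, this is simply $R^*=N/D$.

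Next, suppose toward a contradiction that $p_i^*<R^*$ for some $i\in S^*$. I would remove product $i$ by raising its price to infinity, which drives $a_i(p_i^*)\to 0$; call the resulting pair $(S^*\setminus\{i\},p^*)$. A key point to verify is that this pair is again valid: since $S^*$ is an antichain under the dominance relation (no product in $S^*$ dominates another, because $c(S^*,p^*)=S^*$), deleting one product leaves an antichain, so $c(S^*\setminus\{i\},p^*)=S^*\setminus\{i\}$ and the removal cannot create a new dominance among the survivors. The revenue of the reduced pair is
\[
R'=\frac{N-a_i(p_i^*)\,p_i^*}{D-a_i(p_i^*)}.
\]

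Finally, a direct cross-multiplication shows that $R'>R^*$ precisely when $p_i^*<N/D=R^*$. Indeed, whenever $|S^*|\geq 2$ the denominator $D-a_i(p_i^*)=\sum_{j\in S^*\setminus\{i\}}a_j(p_j^*)+a_0>0$ is strictly positive, so clearing denominators (which preserves the inequality) and cancelling the common factor $a_i(p_i^*)>0$ reduces $R'>R^*$ to the condition $p_i^*D<N$, i.e.\ $p_i^*<R^*$. Thus the assumption $p_i^*<R^*$ forces $R'>R^*$, contradicting the optimality of $(S^*,p^*)$. The only configuration escaping the removal argument is the singleton $S^*=\{i\}$, where one checks directly that $R^*=a_i(p_i^*)\,p_i^*/(a_i(p_i^*)+a_0)\leq p_i^*$, so the bound holds trivially. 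I expect the only delicate step to be the validity check after removal, namely confirming that discarding a product cannot create a new dominance relation among the remaining offerings; the concluding inequality is a routine algebraic manipulation.
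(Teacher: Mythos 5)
Your proposal is correct and follows essentially the same route as the paper's own proof: assume some $p_i^*<R^*$, delete product $i$, and show the revenue strictly increases, contradicting optimality (the paper organizes the algebra as $R(\hat S)=R^*+\Gamma$ with $\Gamma>0$, while you cross-multiply, but these are equivalent). Your added checks --- that removal preserves validity of the pair and the separate treatment of the singleton assortment --- are points the paper leaves implicit, and they only strengthen the argument.
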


The proof is by contradiction: Removing products with a price lower
than $R^*$ yields a greater revenue. The next proposition
characterises the optimal assortment of products of any optimal solution to
the Joint Assortment and Pricing problem. Recall that the products are
indexed by decreasing utility $u_i$. Thus, the set of products $\left[k\right] :=\left\{1,\ldots,k\right\}$, (with $0<k\leq n$) is said to be an \textit{intrinsic utility ordered set}.  The following proposition holds:

\begin{proposition}\label{prop:utility_order}
	Let $(S^*,p^*)$ denote an optimal solution. Then $S^*=\left[k\right]$ for some $k\leq n$.
\end{proposition}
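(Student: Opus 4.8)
The plan is to prove the proposition by an exchange argument: starting from any optimal solution whose assortment is not an initial segment, I will produce an equally good (in fact strictly better when utilities are distinct) \emph{valid} solution that is ``closer'' to an initial segment, and then iterate. Concretely, suppose $(S^*,p^*)$ is optimal but $S^*\neq[k]$ for $k=|S^*|$. Then there must be an \emph{inversion}: an index $i\notin S^*$ and an index $j\in S^*$ with $i<j$, and hence $u_i\ge u_j$ since products are indexed by decreasing intrinsic utility.

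The core step replaces product $j$ by product $i$ while keeping the \emph{net utility} (equivalently, the attractiveness) attached to that slot unchanged. I would set $S':=(S^*\setminus\{j\})\cup\{i\}$, $p'_i:=p^*_j+(u_i-u_j)$, $p'_\ell:=p^*_\ell$ for $\ell\in S^*\setminus\{j\}$, and $p'_\ell:=\infty$ otherwise. Then $a_i(p'_i)=\exp(u_i-p'_i)=\exp(u_j-p^*_j)=a_j(p^*_j)$, so the multiset of offered attractiveness values is identical to that of $(S^*,p^*)$. First I would verify feasibility: since $p'_i=p^*_j+(u_i-u_j)\ge p^*_j\ge 0$ all prices stay nonnegative and finite, and because the offered attractiveness multiset is unchanged, the threshold condition $\max_{\ell\in S'}a_\ell/\min_{\ell\in S'}a_\ell\le 1+t$ that characterises $c(S',p')=S'$ is inherited from the validity of $(S^*,p^*)$; thus $(S',p')\in\mathcal V$.

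Next I would compare revenues. Because the denominator of $R$ equals the sum of offered attractiveness values plus $a_0$, it is unchanged by the swap, and only the price attached to the common attractiveness $\alpha:=a_j(p^*_j)$ changes (from $p^*_j$ to $p'_i=p^*_j+(u_i-u_j)$). Hence
\[
R(S',p')-R(S^*,p^*)=\frac{\alpha\,(u_i-u_j)}{\sum_{\ell\in S^*}a_\ell(p^*_\ell)+a_0}\ \ge\ 0,
\]
with strict inequality whenever $u_i>u_j$. Consequently, if the intrinsic utilities are distinct, the existence of an inversion would contradict the optimality of $(S^*,p^*)$, forcing $S^*=[k]$ directly.

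To conclude in full generality (allowing ties $u_i=u_j$), I would iterate the swap. Each application replaces an index $j$ by a strictly smaller index $i$, so the potential $\Phi(S)=\sum_{\ell\in S}\ell$ strictly decreases while the revenue never decreases; since $\Phi$ is a positive integer bounded below, after finitely many swaps no inversion remains, i.e.\ the assortment is downward closed and equals $[k]$, while remaining optimal. I expect the main obstacle to be the feasibility check, namely verifying that holding the attractiveness multiset fixed genuinely leaves the consideration set unchanged ($c(S',p')=S'$), together with the bookkeeping required in the tie case so that the final assortment is an exact initial segment $[k]$ rather than merely an assortment sharing the same net-utility profile.
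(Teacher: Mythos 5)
Your proposal is correct and follows essentially the same route as the paper: swap a missing higher-utility product $i$ for an offered lower-utility product $j$ while preserving the attractiveness via $p'_i=p^*_j+(u_i-u_j)$, observe that the consideration set and the denominator are unchanged, and read off the revenue gain $\alpha(u_i-u_j)$ in the numerator. Your explicit potential-function iteration for the tie case $u_i=u_j$ is a small but welcome refinement over the paper's single-swap argument, which tacitly assumes the strict inequality $u_i>u_j$ to get a contradiction.
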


The following Lemma due to \cite{wang2017impact} is useful to prove
some of the upcoming propositions. For completeness, its proof is also
in Appendix~\ref{App:proofs}.

\begin{lemma}[Lemma 1, \cite{wang2017impact}]\label{flat_p}
	Let $H(p_i,p_j):=p_i\cdot \exp(u_i-p_i) +p_j\cdot \exp(u_j-p_j)$, where $ \exp(u_i-p_i) + \exp(u_j-p_j)=T$.
	Then, $H(p_i,p_j)$ is strictly unimodal with respect to $p_i$ or $p_j$,
	and it achieves the maximum at the following point:
	\begin{equation}\label{opt_2}
	p_i^*=p_j^*=\ln\left((\exp(u_i)+\exp(u_j))/T\right)
	\end{equation}
\end{lemma}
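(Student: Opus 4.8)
The plan is to eliminate the constraint by passing to a single free variable and then to establish strict concavity in that variable. Since $\exp(u_i-p_i)+\exp(u_j-p_j)=T$ is fixed, setting $s:=\exp(u_i-p_i)$ forces $\exp(u_j-p_j)=T-s$, with $s$ ranging over $(0,T)$. Inverting gives $p_i=u_i-\ln s$ and $p_j=u_j-\ln(T-s)$, so that $H$ becomes a one-variable function
\begin{equation}
H(s)=u_i s - s\ln s + u_j(T-s) - (T-s)\ln(T-s), \qquad s\in(0,T).
\end{equation}
This reduction is precisely why the lemma phrases unimodality ``with respect to $p_i$ or $p_j$'': fixing $T$ leaves exactly one degree of freedom.

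Next I would differentiate. A direct computation yields
\begin{equation}
H'(s)=(u_i-u_j)+\ln\frac{T-s}{s}, \qquad H''(s)=-\frac{1}{s}-\frac{1}{T-s}<0 \ \text{ on } (0,T).
\end{equation}
The sign of $H''$ shows $H$ is strictly concave on $(0,T)$, hence strictly unimodal there, with a unique interior maximizer obtained by solving $H'(s)=0$. That equation gives $\ln\frac{T-s}{s}=u_j-u_i$, i.e.\ $a_j/a_i=e^{u_j-u_i}$ where $a_i=s$, $a_j=T-s$. Back-substituting $a_i=\exp(u_i-p_i)$ and $a_j=\exp(u_j-p_j)$ into this ratio immediately forces $p_i=p_j$, and combining with $a_i+a_j=T$ yields the common value $p_i^*=p_j^*=\ln\!\big((\exp(u_i)+\exp(u_j))/T\big)$, as claimed.

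Finally, I would transfer unimodality from the $s$-variable back to $p_i$ (or $p_j$). Because $s=\exp(u_i-p_i)$ is a strictly monotone (decreasing) function of $p_i$, and composing a strictly unimodal function with a strictly monotone bijection preserves strict unimodality, the strict unimodality of $H(s)$ delivers strict unimodality of $H$ viewed as a function of $p_i$, with the maximizer mapping to the stated optimal price. There is no deep obstacle here: the only points requiring care are the correct bookkeeping of the domain $s\in(0,T)$ (equivalently, the admissible price range) and the clean invocation of the monotone change of variables to conclude unimodality in the original coordinate rather than in $s$. The heart of the argument is simply the negativity of $H''$, which makes the critical point the unique global maximizer.
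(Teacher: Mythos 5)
Your proof is correct, and it reaches the same conclusion by a slightly different route than the paper. The paper also eliminates the constraint, but it keeps $p_i$ as the working variable: it substitutes $p_j = u_j - \ln\bigl(T - \exp(u_i - p_i)\bigr)$ into $H$, computes
\[
\frac{\partial H}{\partial p_i} = \bigl(-p_i + u_j - \ln(T - \exp(u_i - p_i))\bigr)\cdot \exp(u_i - p_i),
\]
and argues that the bracketed factor decreases monotonically from positive to negative, so the derivative changes sign exactly once; unimodality and the stationarity condition $p_i = p_j$ follow directly in the price coordinate. You instead change variables to the market share $s = \exp(u_i - p_i)$, which turns $H$ into $u_i s - s\ln s + u_j(T-s) - (T-s)\ln(T-s)$ and makes the problem transparently strictly concave ($H''(s) = -1/s - 1/(T-s) < 0$). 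This buys you a stronger and cleaner structural fact -- concavity in market shares, which is exactly the classical observation from the pricing literature the paper cites elsewhere -- at the cost of one extra step the paper does not need: transferring unimodality back through the strictly monotone bijection $p_i \mapsto s$, which you correctly supply. Both arguments identify the same stationarity condition $\exp(u_j-p_j)/\exp(u_i-p_i) = e^{u_j - u_i}$, hence $p_i = p_j$, and both recover $p_i^* = p_j^* = \ln\bigl((e^{u_i}+e^{u_j})/T\bigr)$ from the constraint. The only point you leave implicit is that the critical point of $H(s)$ actually lies in the open interval $(0,T)$; this follows immediately from $H'(s) \to +\infty$ as $s \to 0^+$ and $H'(s) \to -\infty$ as $s \to T^-$, and is worth one sentence.
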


Observe that setting the price of a product to $\infty$ is equivalent
to not showing it to consumers.  By Proposition
\ref{prop:utility_order}, one can always find an optimal solution that
is \textit{intrinsic utility ordered}.  Given a price vector $p\in
\mathbb{R}^n$, let $\gamma(p):\mathbb{R}^n\to \left[n\right]$ be
defined as $\gamma(p)\doteq \left\{\max_{i\in \left[n\right]} i \text{
	s.t } p_i<\infty\right\}$.  Intuitively, this is the last
non-infinite price. Proposition \ref{prop:decreasing_prices} shows
that, at optimality, the finite prices are non-increasing in $i$,
meaning that lower prices are assigned to lower utility products.

\begin{proposition}\label{prop:decreasing_prices}
	The prices at an optimal solution $(S^*,p^*)$ satisfy $p_{i}^*\geq p_{i+1}^* \quad \forall i\in \left[\gamma(p)-1\right]$.
	Moreover, if $i,j\in S^*$ satisfy $u_i=u_j$, then $p_i^*=p^*_j$.
\end{proposition}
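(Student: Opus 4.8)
The plan is to establish both claims with a single pairwise price-equalization argument built on Lemma~\ref{flat_p}. First I would record the two facts I will exploit. By Proposition~\ref{prop:utility_order} an optimal assortment is $S^*=[k]$, so the finite prices are exactly $p_1^*,\dots,p_k^*$ and $\gamma(p)=k$; it therefore suffices to compare consecutive indices $i,i+1\in[k]$. Writing $v_i=u_i-p_i$ for the net utility and using $a_i(p_i)=\exp(v_i)$, the validity condition $c(S^*,p^*)=S^*$ of Definition~\ref{def:consideration_set_prices} is equivalent to the range bound $\max_{j\in S^*}v_j-\min_{j\in S^*}v_j\le \ln(1+t)$, since $c(S,p)=S$ precisely when no product dominates another, i.e.\ $\max_j a_j(p_j)\le(1+t)\min_j a_j(p_j)$.

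For the monotonicity claim I would argue by contradiction: suppose $p_i^*<p_{i+1}^*$ for some consecutive pair. Since products are indexed by decreasing utility, $u_i\ge u_{i+1}$, and hence $v_i^*-v_{i+1}^*=(u_i-u_{i+1})+(p_{i+1}^*-p_i^*)>0$. Now fix the combined attractiveness $T=\exp(v_i^*)+\exp(v_{i+1}^*)$ of this pair and move $(p_i,p_{i+1})$ along the curve $\exp(u_i-p_i)+\exp(u_{i+1}-p_{i+1})=T$ from the current point toward the symmetric point $p_i=p_{i+1}$ identified in Lemma~\ref{flat_p}. Keeping $T$ fixed leaves $\sum_{j\in S^*}a_j(p_j)$, and hence the denominator of $R(S^*,p)$, unchanged, so maximizing revenue along this curve is equivalent to maximizing the numerator, whose only non-constant part is $H(p_i,p_{i+1})$. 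By the strict unimodality asserted in Lemma~\ref{flat_p}, $H$ strictly increases as $(p_i,p_{i+1})$ moves toward the symmetric point, producing a strict revenue gain.

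The hard part is checking that this move stays feasible, i.e.\ that the adjusted pair remains a valid pair. Along the curve, raising $p_i$ lowers $v_i$ while simultaneously lowering $p_{i+1}$ and raising $v_{i+1}$; at the symmetric endpoint one has $v_i-v_{i+1}=u_i-u_{i+1}\ge0$. Consequently, throughout the move both $v_i$ and $v_{i+1}$ remain inside the interval $[v_{i+1}^*,v_i^*]$, while every other net utility is untouched. Therefore $\max_{j\in S^*}v_j$ cannot increase and $\min_{j\in S^*}v_j$ cannot decrease, so the range bound $\le\ln(1+t)$ is preserved and the entire path — including the symmetric endpoint, at which both products still carry finite prices — consists of valid pairs. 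This yields a feasible solution with strictly larger revenue, contradicting the optimality of $(S^*,p^*)$; hence $p_i^*\ge p_{i+1}^*$.

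Finally, for the equal-utility claim I would run the identical argument on any pair $i,j\in S^*$ with $u_i=u_j$. The symmetric point of Lemma~\ref{flat_p} now forces $v_i=v_j$, and the same inward-movement observation shows that if $p_i^*\neq p_j^*$ the pair could be equalized at fixed combined attractiveness, strictly increasing the numerator while keeping the net-utility range (hence feasibility) intact — again contradicting optimality. Thus $p_i^*=p_j^*$ whenever $u_i=u_j$.
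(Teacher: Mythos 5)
Your proposal is correct and follows essentially the same route as the paper: argue by contradiction, equalize the two prices at fixed combined attractiveness using Lemma~\ref{flat_p} to get a strict revenue gain, and verify that the perturbed prices keep the consideration set unchanged. Your feasibility check via the net-utility range bound $\max_j v_j - \min_j v_j \le \ln(1+t)$ is a slightly cleaner packaging of the paper's case-by-case dominance analysis, but it is the same underlying argument.
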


Recall that the \textit{net utility} of product $i$ was defined as: $v_i=u_i-p_i$.
The following proposition shows that at optimality,
net utility follows the same order as intrinsic utility.

\begin{proposition}\label{prop:decreasing_net}
	Let $p^*$ be the price of an optimal solution of the Joint Assortment and Pricing Problem. The following condition holds: $u_i-p_i^*\geq u_{i+1}-p_{i+1}^* \quad \forall i\in \left[\gamma(p)-1\right]$.
\end{proposition}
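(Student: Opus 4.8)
The plan is to argue by contradiction using a direct exchange (swap) argument that leaves the denominator of the revenue untouched. Suppose some optimal solution $(S^*,p^*)$ violates the claim, so there is an index $i\in\left[\gamma(p)-1\right]$ with $u_i-p_i^*<u_{i+1}-p_{i+1}^*$. Writing $v_i^*=u_i-p_i^*$ and $v_{i+1}^*=u_{i+1}-p_{i+1}^*$ for the net utilities (so that $a_i(p_i^*)=\exp(v_i^*)$), the hypothesis reads $v_i^*<v_{i+1}^*$. By Proposition~\ref{prop:utility_order} both $i$ and $i+1$ lie in $S^*=[k]$, and since the products are indexed by decreasing intrinsic utility we have $u_i\geq u_{i+1}$. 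First I would dispose of the degenerate case $u_i=u_{i+1}$: by the ``moreover'' part of Proposition~\ref{prop:decreasing_prices} this forces $p_i^*=p_{i+1}^*$ and hence $v_i^*=v_{i+1}^*$, contradicting $v_i^*<v_{i+1}^*$. Thus we may assume $u_i>u_{i+1}$.

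Next I would build a competing price vector $p'$ that agrees with $p^*$ except that it \emph{swaps the net utilities} of products $i$ and $i+1$: set $p_i'=u_i-v_{i+1}^*$ and $p_{i+1}'=u_{i+1}-v_i^*$, so that $a_i(p_i')=\exp(v_{i+1}^*)$ and $a_{i+1}(p_{i+1}')=\exp(v_i^*)$. The crucial observation is that this exchange merely permutes two of the attractiveness values and leaves the entire multiset $\{a_j(p_j)\}_{j\in S^*}$ unchanged. Consequently the denominator $\sum_{j\in S^*}a_j(p_j)+a_0$ is unaffected; moreover, because the consideration-set condition of Definition~\ref{def:consideration_set_prices} depends only on the largest-to-smallest ratio $\max_j a_j/\min_j a_j$, which is a function of this multiset alone, the pair $(S^*,p')$ remains valid with $c(S^*,p')=S^*$ (the new prices are finite since the intrinsic and net utilities are). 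This is the step I expect to be the main obstacle, and the whole argument hinges on isolating why feasibility survives the swap: no new dominance can appear precisely because the multiset of attractivenesses is permutation-invariant.

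It then remains to compare numerators. Writing each product's revenue contribution as $(u_j-v_j)\exp(v_j)$, only the terms of $i$ and $i+1$ change, and a short computation gives
\begin{align*}
&\bigl[(u_i-v_{i+1}^*)\exp(v_{i+1}^*)+(u_{i+1}-v_i^*)\exp(v_i^*)\bigr] \\
&\quad-\bigl[(u_i-v_i^*)\exp(v_i^*)+(u_{i+1}-v_{i+1}^*)\exp(v_{i+1}^*)\bigr]=(u_i-u_{i+1})\bigl(\exp(v_{i+1}^*)-\exp(v_i^*)\bigr).
\end{align*}
Under the standing assumptions $u_i>u_{i+1}$ and $v_i^*<v_{i+1}^*$, both factors are strictly positive, so the numerator strictly increases while the denominator is unchanged; hence $R(S^*,p')>R(S^*,p^*)$, contradicting optimality and establishing $v_i^*\geq v_{i+1}^*$ for every $i\in\left[\gamma(p)-1\right]$.

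An alternative route would hold $T=a_i(p_i^*)+a_{i+1}(p_{i+1}^*)$ fixed and invoke the unimodality of $H$ from Lemma~\ref{flat_p}, whose maximizer equalizes the two prices and therefore automatically respects the intrinsic-utility order of the net utilities. I would nonetheless prefer the exchange argument above, since it is a single algebraic identity and sidesteps the need to track feasibility along a continuous deformation of the prices; the only nontrivial ingredient, the invariance of $c(S,p)$ under the swap, is exactly the obstacle flagged in the second paragraph.
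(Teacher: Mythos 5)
Your proof is correct, and it takes a genuinely different route from the paper's. The paper also argues by contradiction at a violating index, but its repair move is continuous: it replaces $(p_i^*,p_{i+1}^*)$ by the prices that equalize the two net utilities while conserving $\exp(u_i-p_i)+\exp(u_{i+1}-p_{i+1})$, derives the strict revenue gain from the unimodality Lemma~\ref{flat_p} via the chain $\hat p<p_i'<p_i^*$, and then must separately verify that the consideration set is unchanged by sandwiching each new net utility between old ones. Your repair move is discrete: you permute the two attractiveness values, so the multiset $\{a_j(p_j)\}_{j\in S^*}$ is literally unchanged; the denominator and the validity condition $\max_{j\in S^*}a_j(p_j)\leq(1+t)\min_{j\in S^*}a_j(p_j)$ are then preserved for free, and the numerator comparison reduces to the rearrangement-type identity $(u_i-u_{i+1})\bigl(e^{v_{i+1}^*}-e^{v_i^*}\bigr)>0$. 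Both arguments lean on Propositions~\ref{prop:utility_order} and~\ref{prop:decreasing_prices}: you invoke the latter's ``moreover'' clause to dispose of the degenerate case $u_i=u_{i+1}$ (where the swap gains nothing), while the paper uses it to order the prices in its inequality chain. What the paper's route buys is uniformity, since Lemma~\ref{flat_p} is the same workhorse used for Proposition~\ref{prop:decreasing_prices} and Theorem~\ref{thm:ThrehsoldOpt}; what yours buys is a shorter, essentially trivial feasibility check and a one-line algebraic inequality. The only detail you leave implicit is that the swapped prices remain admissible (finite and non-negative), but this is immediate under your standing hypotheses: $p_i'=p_{i+1}^*+(u_i-u_{i+1})>0$ and $p_{i+1}'=p_i^*-(u_i-u_{i+1})>p_{i+1}^*\geq 0$.
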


The above propositions make it possible to filter out non-efficient
assortments and prices by restricting the search space to intrinsic
utility ordered assortments and providing insights on how the optimal
solution behaves regarding prices and their relation with utilities.
Based on these propositions, the joint assortment and pricing
optimisation problem for the TLM can be written in a more succinct
way. From Proposition \ref{prop:utility_order}, the solution is an
intrinsic utility ordered set $S_k=\left[k\right]$ for some $k\leq
n$. Suppose there exists an optimal solution in the form $(S_k,p)$ for
a fixed value $k$. In that case, recall that it is sufficient to restrict to
valid pairs $(S_k,p)$, meaning that $c(S_k,p)=S_k$. Consider a fixed
$k\leq n$. By Proposition \ref{prop:decreasing_net}, at optimality,
$u_i-p_i \geq u_j-p_j\quad \forall 1\leq i<j\leq k$. Therefore,
the condition that $c(S_k,p)=S_k$ can be written as

\begin{equation}\label{condition_valid_pairs_s_k}
g_{ij}(p) := \exp(u_i-p_i)-(1+t)\cdot \exp(u_j-p_j) \leq 0, \quad \forall 1\leq i<j\leq k
\end{equation}

As a result, the joint $k$-assortment and pricing optimisation problem for
the TLM (\texttt{JAPTLM-k}), which aims at finding an optimal assortment $S_k$
of size $k$ with $k\leq n$, can be written as:

\begin{equation}
\label{eq:max_R}
\tag{\texttt{JAPTLM-k}}
\begin{aligned}
& \underset{p}{\text{maximize}}
& & R^{(k)}(p) :=\frac{\sum_{i\in S_k}p_i\cdot\exp(u_i-p_i)}{\sum_{i\in S_k}\exp(u_i-p_i)+a_0}\\
& \text{subject to}
& & g_{ij}(p) \leq 0, \quad \forall 1\leq i<j\leq k   
\end{aligned}
\end{equation}

\noindent

Note that, if $\exp(u_1-u_k)\leq (1+t)$, then the solution is the same
as the unconstrained case, because any fixed price can be assigned
without creating dominances.  Hence, the optimal revenue
$\bm{R^{(k)}}$ can be calculated using equation~\eqref{eq:R-interior},
and all prices are equal to $1+\bm{R^{(k)}}$. On the other hand, if
$\exp(u_1-u_k)>1+t$, as in Example \ref{exmp:fixed_price_num}, the
prices need to be adjusted in order to avoid dominances.

The next theorem is the main result of this section.

\begin{theorem}\label{thm:ThrehsoldOpt}
	Problem \ref{eq:max_R}  can be solved in polynomial time.
\end{theorem}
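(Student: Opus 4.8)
The plan is to solve the problem \ref{eq:max_R} for each fixed $k \leq n$ separately, take the best over all $k$, and show that each such subproblem can be solved in polynomial time. By the preceding propositions the search space is already heavily constrained: the optimal assortment is an intrinsic-utility-ordered set $[k]$, the finite prices are non-increasing, and the net utilities $v_i = u_i - p_i$ are ordered decreasingly (Propositions \ref{prop:utility_order}, \ref{prop:decreasing_prices}, \ref{prop:decreasing_net}). The first easy case is when $\exp(u_1 - u_k) \leq 1+t$: here no dominance can arise for any common price, so the unconstrained MNL pricing solution from \eqref{eq:R-interior} applies directly and all prices equal $1 + R^{(k)}$. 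The substance of the proof is therefore the constrained case $\exp(u_1 - u_k) > 1+t$, where the constraints $g_{ij}(p) \leq 0$ are genuinely active and force some prices apart.

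The key structural observation I would establish is that, because the net utilities are already ordered $v_1 \geq v_2 \geq \cdots \geq v_k$, the full family of constraints $g_{ij}(p)\leq 0$ collapses to the single binding chain constraint $v_1 - v_k \leq \ln(1+t)$, i.e. it suffices to keep the most-attractive-versus-least-attractive pair. Indeed, once the net utilities are monotone, the worst dominance ratio is always between product $1$ and product $k$, so all intermediate constraints are implied. This reduces \ref{eq:max_R} to a pricing problem in which only the spread between the highest and lowest net utilities is bounded, while the middle products are free to share a common price. Applying Lemma \ref{flat_p} pairwise to any two products with equal indices other than the extremes shows that, at optimality, it is never beneficial to split their prices, so all products except possibly $1$ and $k$ are charged the same price. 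This is exactly the ``same price except for the highest and lowest attractiveness'' structure advertised in the contributions.

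With this reduction in hand, the optimization is effectively over at most three scalar quantities: the common price $p$ of the middle products, and the two boundary prices $p_1$ and $p_k$, subject to the single constraint $v_1 - v_k \leq \ln(1+t)$ and the revenue objective $R^{(k)}(p)$. I would show that the objective is unimodal in the relevant variables (leaning on the unimodality in Lemma \ref{flat_p} and the concavity structure of the MNL revenue function in market share), so that either the interior optimum coincides with the unconstrained solution or the constraint $v_1 - v_k = \ln(1+t)$ is tight. In the latter case one substitutes the active constraint to eliminate one variable and is left with a one-dimensional unimodal maximization, solvable via the Lambert-$W$ closed form or a simple line search to any desired accuracy in polynomial time. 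Taking the maximum of the resulting $O(n)$ values of $\bm{R^{(k)}}$ over $k = 1,\ldots,n$ gives the global optimum.

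The main obstacle I anticipate is the rigorous argument that only the boundary products $1$ and $k$ deviate from the common price, and in particular that the single extreme-pair constraint suffices. The delicate point is handling the interaction between the monotonicity of net utilities (Proposition \ref{prop:decreasing_net}) and the possibility that tightening the boundary constraint could, in principle, reorder intermediate products or make an intermediate constraint binding; I would need to verify that the pairwise exchange argument of Lemma \ref{flat_p} can be applied consistently without breaking the net-utility ordering, which requires care because adjusting $p_1$ or $p_k$ to satisfy the constraint changes the normalization term $\sum_i \exp(u_i - p_i) + a_0$ shared by all products. Once that invariant is nailed down, the remaining optimization is routine.
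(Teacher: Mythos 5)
There is a genuine gap in the central structural claim. You assert that at optimality all products except possibly $1$ and $k$ share a common price, so that the problem reduces to three scalars. This is false in general. If products $2,\ldots,k-1$ all carry the common price $\bar p = 1+R^{(k)}$, their net utilities are $u_j - \bar p$, which spread over a range of width $u_2 - u_{k-1}$. Feasibility of the remaining constraints then forces $v_1 \geq u_2 - \bar p$ (so that product $2$ does not dominate product $k$, whose net utility is $v_1 - \ln(1+t)$) and simultaneously $v_1 \leq u_{k-1} - \bar p + \ln(1+t)$ (so that product $1$ does not dominate product $k-1$). These two requirements are compatible only when $u_2 - u_{k-1} \leq \ln(1+t)$; whenever the intermediate utilities are spread more widely than the threshold, no choice of $p_1, p_k$ alone can work, and additional products must be pulled out of the common-price pool. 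The paper's proof handles exactly this: it shows (via the KKT conditions of the Lagrangean of \eqref{eq:max_R} together with Proposition \ref{prop:decreasing_net}) that the optimum has a \emph{prefix group} $I_1=[k_1]$ whose members all share product $1$'s net utility and a \emph{suffix group} $I_2=\{k-k_2+1,\ldots,k\}$ whose members all share product $k$'s net utility, with only the middle set $\bar I_k$ at the common price $1+R^{(k)}$. Note that members of $I_1$ (resp.\ $I_2$) have \emph{different prices} from one another, since equal net utility with distinct $u_i$ forces distinct $p_i$. The polynomial-time algorithm then comes from enumerating the $O(k^2)$ candidate pairs $(k_1,k_2)$ and, for each, computing the revenue in closed form via the Lambert $W$ function; this enumeration is precisely the ingredient your three-scalar reduction omits.

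Your other steps are consistent with the paper: the per-$k$ decomposition, the dichotomy on $\exp(u_1-u_k)\lessgtr 1+t$, the observation that monotone net utilities collapse the constraint family to the single chain constraint $v_1 - v_k \leq \ln(1+t)$ (which the paper also uses to argue $g_{1k}$ is tight), and the use of Lemma \ref{flat_p} to merge prices. You also correctly flagged the delicate point as the main obstacle -- but it is not merely delicate, it is where the claimed structure breaks, and repairing it requires the group enumeration rather than a refinement of the pairwise exchange argument. A secondary, minor issue: a ``line search to any desired accuracy'' does not by itself yield an exact polynomial-time algorithm; the exact closed form through Lemma \ref{lemma:W} is what makes each of the $O(n^3)$ total cases solvable exactly.
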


The intuition behind the proof is based on Proposition
\ref{prop:decreasing_net} and the study of the Lagrangean relaxation
of problem~\eqref{eq:max_R}.  Observe that, since $u_i-p_i\geq
u_j-p_j\quad (i\leq j)$ at optimality, then the largest ratio
between attractiveness is obtained for products $1$ and $k$. This
ratio can also occur for more products but
only if they have the same net utility as products $1$ or
$k$.  Thus, it must be the case that there are non-negative integers
$k_1$ and $k_2$ with $k_1+k_2\leq k$, such that letting
$I_1=\left[k_1\right]$ and $I_2=\{k-k_2+1,k-k_2+2,\ldots,k\}$, the set
of constraints $C(k_1,k_2)=\{g_{ij}(p) \mid i\in I_1, j\in I_2 \}$ are
satisfied at equality for the optimal solution (see the proof in
Appendix~\ref{App:proofs} for details). Since it is only necessary to study a
polynomial number of combinations of constraints satisfied at
equality and, for each one of those combinations a closed form
solution is provided, the result follows.

For the non-trivial case with $\exp(u_1-u_k)>1+t$, where a fixed price
fails to be optimal, the prices need to be adjusted in order to avoid
the dominances. Let $\bm{R^{(k)}}$ and $\bm{p^{(k)}}$ be the optimal
revenue and price vector.  The following Lemma characterizes the
structure of the optimal solution for problem \ref{eq:max_R}.

\begin{lemma}\label{lem:sol_structure}
	The optimal solution to problem \eqref{eq:max_R} is either the same
	as the unconstrained case (i.e. fixed price, in the case that $\exp(u_1-u_k)\leq (1+t)$) or the following
	holds at optimality:
	
	\begin{equation}\label{tight_sol}
	\frac{a_1(p_1)}{a_k(p_k)}=1+t.
	\end{equation}
	
	Moreover, there are non-negative integers $k_1^*,k_2^*$, with $k_1^*+k_2^*\leq k$ such that:
	\begin{equation*}\label{eq:R^{(k)}_opt_ext}
	\bm{R^{(k)}}=W\left(\frac{\left(k_1^*+\frac{k_2^*}{1+t}\right)\cdot \exp\left(\frac{(1+t)\sum_{i\in I_1}u_i +\sum_{i\in I_2}u_i+k_2^*\ln(1+t)}{k_1^*(1+t)+k_2^*}-1\right) +\sum_{i\in \bar{I}_k}\exp(u_i-1)  }{a_0}\right),
	\end{equation*}
	where $I_1=\left[k_1^*\right]$, $I_2=\{k-k_2^*+1,k-k_2^*+2,\ldots,k\}$ and $\bar{I}_k=\left[k\right]\setminus\left(I_1\cup I_2\right)$.
	The optimal prices can be obtained as follows:
	
	%
	\begin{equation}
	\label{eq:opt_prices_k_k1_k2}
	\bm{p^{(k)}_i}=\begin{cases}
	1+\bm{R^{(k)}}+u_i -\frac{(1+t)\sum_{i\in I_1}u_i +\sum_{i\in I_2}u_i+k_2^*\ln(1+t)}{k_1^*(1+t)+k_2^*} & \text{if } i\in I_1, \\
	1+\bm{R^{(k)}}+u_i -\frac{(1+t)\sum_{i\in I_1}u_i +\sum_{i\in I_2}u_i+k_2^*\ln(1+t)}{k_1^*(1+t)+k_2^*}+\ln(1+t) & \text{if } i\in I_2,\\
	1+\bm{R^{(k)}} & \text{if } i\in \bar{I}_k.	
	\end{cases}
	\end{equation}
\end{lemma}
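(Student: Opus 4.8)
The plan is to work with the net utilities $v_i := u_i - p_i$, for which $a_i(p_i)=\exp(v_i)$ and each constraint $g_{ij}(p)\le 0$ becomes $v_i - v_j \le \ln(1+t)$, a \emph{linear} inequality in $p$. Thus the feasible region is a polyhedron and the KKT conditions are necessary at any optimum. The trivial case $\exp(u_1-u_k)\le 1+t$ is exactly the first alternative of the statement, handled by the remark preceding the theorem (any fixed price is feasible, so \eqref{eq:R-interior} applies and all prices equal $1+\bm{R^{(k)}}$). For the non-trivial case $\exp(u_1-u_k)>1+t$, I would first argue that the \emph{unconstrained} maximizer of $R^{(k)}$ is the fixed-price solution (the MNL stationarity computation $\partial R^{(k)}/\partial p_m = a_m(p_m)(1+R^{(k)}-p_m)/D=0$, where $D$ is the denominator of $R^{(k)}$, forces all $p_m$ equal). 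That solution has $v_1-v_k=u_1-u_k$ with $\exp(u_1-u_k)>1+t$, hence is infeasible; therefore the constrained optimum lies on the boundary and at least one constraint $g_{ij}$ is active.

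Next I would establish the rectangular \textbf{block structure} of the active set. By Proposition~\ref{prop:decreasing_net} the net utilities are non-increasing, so $v_1-v_k$ is the largest pairwise gap and feasibility gives $v_1-v_k\le\ln(1+t)$. If any $g_{ij}$ is active then $v_i-v_j=\ln(1+t)$, and writing $v_1-v_k=(v_1-v_i)+(v_i-v_j)+(v_j-v_k)$ with all three summands nonnegative forces $v_i=v_1$, $v_j=v_k$, and in particular $g_{1k}$ active, which is precisely \eqref{tight_sol}. Defining $I_1:=\{i:v_i=v_1\}$ and $I_2:=\{j:v_j=v_k\}$, monotonicity makes $I_1=[k_1^*]$ a prefix and $I_2=\{k-k_2^*+1,\dots,k\}$ a suffix, disjoint because $v_1>v_k$; the active constraints are then exactly $C(k_1^*,k_2^*)$.

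I would then read off the prices from KKT stationarity. For $m\in\bar{I}_k$ every incident constraint is slack (its gap is strictly below $\ln(1+t)$), so complementary slackness kills the multipliers and stationarity yields $p_m=1+R^{(k)}$, the third line of \eqref{eq:opt_prices_k_k1_k2}. For $m\in I_1$ (resp.\ $m\in I_2$) the stationarity equations involve the multipliers $\lambda_{ij}$, $i\in I_1,j\in I_2$; summing the $I_1$-equations and the $I_2$-equations separately lets the common total multiplier mass cancel, leaving the single relation $(1+t)\big(\sum_{I_1}p_m-k_1^*(1+R^{(k)})\big)=k_2^*(1+R^{(k)})-\sum_{I_2}p_m$. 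Substituting $p_m=u_m-v_H$ on $I_1$ and $p_m=u_m-v_H+\ln(1+t)$ on $I_2$ (where $v_H$ is the common top net utility and $v_L=v_H-\ln(1+t)$) solves to $v_H=\bar u-(1+R^{(k)})$ with $\bar u:=\tfrac{(1+t)\sum_{i\in I_1}u_i+\sum_{i\in I_2}u_i+k_2^*\ln(1+t)}{k_1^*(1+t)+k_2^*}$, which reproduces the $I_1$ and $I_2$ price formulas of \eqref{eq:opt_prices_k_k1_k2}.

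Finally I would close with the revenue fixed point $R^{(k)}a_0=\sum_m\big(p_m-R^{(k)}\big)a_m(p_m)$. Plugging in the three price forms, the $I_1\cup I_2$ contribution collapses—using exactly the definition of $\bar u$, so that $\sum_{I_1}u_m-k_1^*\bar u+\tfrac{1}{1+t}\big(k_2^*\ln(1+t)+\sum_{I_2}u_m-k_2^*\bar u\big)=0$—to $\big(k_1^*+\tfrac{k_2^*}{1+t}\big)\exp(\bar u-1-R^{(k)})$. This gives $R^{(k)}e^{R^{(k)}}a_0=\big(k_1^*+\tfrac{k_2^*}{1+t}\big)e^{\bar u-1}+\sum_{i\in\bar I_k}e^{u_i-1}$, i.e.\ the claimed Lambert-$W$ expression for $\bm{R^{(k)}}$. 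I expect the \textbf{main obstacle} to be the structural step: showing the active set is precisely the rectangular block $C(k_1^*,k_2^*)$ rather than some irregular pattern—this is exactly where Proposition~\ref{prop:decreasing_net} and the sandwiching argument are indispensable. A secondary point requiring care is justifying that KKT is necessary (resolved by the polyhedral reformulation) and that the summed stationarity relations are consistent with the nonnegative multipliers guaranteed at a genuine optimum.
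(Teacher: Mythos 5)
Your proposal is correct and follows essentially the same route as the paper: a Lagrangean/KKT analysis in which Proposition~\ref{prop:decreasing_net} forces the active constraints into the rectangular block $C(k_1^*,k_2^*)$ with $g_{1k}$ tight, the stationarity equations are summed over $I_1$ and $I_2$ so the multipliers cancel and yield $v_H=\bar u-1-\bm{R^{(k)}}$, and the revenue fixed point then gives the Lambert-$W$ formula. Your only departures are cosmetic improvements --- rewriting the constraints as $v_i-v_j\le\ln(1+t)$ to justify that KKT is necessary, and deriving the tightness of $g_{1k}$ from infeasibility of the unconstrained (fixed-price) stationary point rather than the paper's exchange argument via Lemma~\ref{flat_p}.
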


%
%

Let \texttt{TLM-Opt($X,u,a_0,k$)} be the procedure to obtain the
optimal solution for problem \eqref{eq:max_R}. Using
\texttt{TLM-Opt($X,u,a_0,k$)} at most $n$ times (once for each $k\leq
n$) to obtain the assortment and prices yielding the highest
$\bm{R^{(k)}}$, one can find the optimal assortment and price vector
for any given instance.  Its intuition is to mimic the optimal
strategy for the regular MNL (Fixed-Price Policy) as much as possible.
However, given that it needs to accommodate prices in order to avoid
dominances, the algorithm adjusts prices for the higher intrinsic
utility products (making prices larger, hence less attractive) and
reduces the price of lower intrinsic utility ones, making them more
attractive for customers and preventing them from being dominated.
This allows the optimal strategy to have an edge over strategies
ignoring the Threshold induced dominances, such as Fixed-Price Policy
and, to a lesser extent, the Quasi-Same Price
\citep{wang2017impact}. The Quasi-Same Price policy policy only
adjusts the price of the lowest attractiveness product, instead of
adjusting both extremes of the attractiveness spectrum and potentially
multiple products.

\section{Conclusion and Future Work}
\label{sec:conclusion}

This paper studies the assortment optimization problem under the
Two-Stage Luce model (2SLM), a discrete choice model introduced by
\cite{echenique2018} that generalizes the standard multinomial logit
model (MNL) with a dominance relation and may violate regularity. The
paper proved that the assortment problem under the 2SLM can be solved
in polynomial time. The paper also considered the capacitated
assortment problem under the 2SLM and proved that the problem becomes
NP-hard in this setting.  We also provide polynomial-time algorithms
for special cases of the capacitated problem when (1) the dominance
relation is utility-correlated and when (2) its transitive reduction
is a forest. We also provide an Appendix showing
numerical experiments to highlight the performance of the proposed
algorithms against classical strategies used in the literature.

There are at least five interesting avenues for future
research. First, one may wish to study how to generalize the 2SLM
further while still keeping the assortment problem solvable in
polynomial time. For example, one can try to check whether there
exists a model that unifies the 2SLM and the elegant work in
\cite{davis2013assortment} where the assortment problem is still
solvable in polynomial time. Second, given that the capacitated
version of the 2SLM is NP-hard under Turing reductions (Theorem
\ref{np-hard-proof}), it is interesting to see whether there exist
good approximation algorithms for this problem. Third, one can explore
different forms of dominance. For example, one may consider dominances
specified by a discrete relation or a continuous functional form
between products.  Fourth, one can try to generalise our results for
the Joint Assortment Pricing Problem under the Threshold Luce model to
a more general setting, where price sensitivities depend on each
product. Finally, one can try to mix attention models with dominance
relations, meaning that a customer first perceives a subset of the
products, dictated by an attention filter, and then filter the
products even more using dominance relations.


	
	\section{Acknowledgements}
	
We thanks Yuval Filmus for his helpful insights leading us to find useful literature on this topic. Thanks are also due to Guillermo Gallego for suggesting extending our assortment results to the GAM model, and to Flavia Bonomo for relevant discussions..
	
	\bibliographystyle{../../../Bib/aaai}
	\bibliography{../../../Bib/biblio}
	
	\newpage
	\appendix

	\section{Proofs}\label{App:proofs}
	In this section we provide the proofs missing from the main text.
	
	\emph{Proof of Theorem~\ref{np-hard-proof}.}
	The proof considers four problems:
	\begin{enumerate}
		\item Problem (\texttt{MWBISBP}): Maximum weighted independent set of
		size at most $C$ for bipartite graphs.
		
		\item Problem (\texttt{MWEBISBP}): Maximum weighted independent set of
		size equal to $C$ for bipartite graphs.
		
		\item Problem (\texttt{EC2SLMAP}): Optimal assortment under the General
		Luce model of size $C$.
		
		\item Problem (\texttt{C2SLMAP}): Optimal capacitated assortment under
		the Two-Stage Luce model of size at most $C$.
	\end{enumerate}
	
	\noindent
	The proof shows that Problems (\texttt{MWEBISBP}), (\texttt{EC2SLMAP}), and
	(\texttt{C2SLMAP}) are NP-hard, using the NP-hardness of Problem
	(\texttt{MWBISBP}) \citep{Kalra2017} as a starting point.
	
	First observe that Problem (\texttt{MWEBISBP}) is NP-hard under Turing
	reductions. Indeed, Problem (\texttt{MWBISBP}) can be reduced to solving
	$C$ instances of Problem (\texttt{MWEBISBP}) with budget $c$ ($1 \leq c
	\leq C$).
	
	We now show that Problem (\texttt{EC2SLMAP}) is NP-hard. Consider Problem
	(\texttt{MWEBISBP}) over a bipartite graph $G=(V=V_1 \cup V_2,E)$, where
	$V_1 \cap V_2 = \emptyset$, every edge $(v_1,v_2) \in E$ satisfies
	$v_1 \in V_1$ and $v_2 \in V_2$, $w_v$ is the weight of vertex $v$,
	and $C$ is the budget. We show that Problem (\texttt{MWEBISBP}) over this
	bipartite graph can be polynomially reduced to Problem
	(\texttt{EC2SLMAP}). The reduction assigns each vertex $v$ to a product
	with $a(v) = 1$ and $r_v = w_v$, sets $a_0 = 0$, and has a capacity
	$C$. Moreover, the reduction uses the following dominance relation:
	$v_1 \succ v_2$ iff $(v_1,v_2) \in E$. This dominance relation is
	irreflexive, anti-symmetric, and transitive, since the graph is
	bipartite. A solution to Problem (\texttt{MWEBISBP}) is a feasible
	solution to Problem (\texttt{EC2SLMAP}), since the independent set cannot
	contain two vertices $v_1, v_2$ with $v_1 \succ v_2$ by
	construction. Similarly, a feasible assortment is an independent set,
	since the assortment cannot select two vertices $v_1 \in V_1$ and $v_2
	\in V_2$ with $(v_1,v_2) \in E$, since $v_1 \succ v_2$. The objective
	function of Problem (\texttt{EC2SLMAP}) reduces to maximizing
	\[
	\frac{1}{C} \sum_{v \in V} r_v x_v
	\]
	which is equivalent to maximizing $\sum_{v \in V} r_v x_v$ since
	exactly $C$ products will be selected by every feasible
	assortment. The result follows by the NP-hardness of Problem
	(\texttt{MWEBISBP}).
	
	Finally, Problem (\texttt{C2SLMAP}) is NP-hard under Turing
	reductions. Indeed, Problem (\texttt{C2SLMAP}) can be reduced to solving
	$C$ instances of Problem (\texttt{EC2SLMAP}) with capacity $c$ ($1 \leq c
	\leq C$).\qed
	
	\emph{Proof of Theorem~\ref{thm:forest}}
By Theorem \ref{rational}, it suffices to show that Problem
\eqref{MACC} is solved by the recurrences in polynomial time. The
correctness of recurrence $\mathcal{A}(v,c)$ comes from the fact
that vertex $v$ dominates all its descendants and cannot be present
in any assortment featuring any of them. The correctness of
recurrence $\mathcal{A}^+(S,c)$ follows from the fact that $e$ is
not dominated by, and does not dominate, any element in $S$, since
they are all children of the same node. This also holds for the
descendants of $e$ and the descendants of the elements in
$S$. Hence, the optimal assortment is obtained by splitting the
capacity $c$ into $n_1$ and $n_2$ and merging the best assortment
for $\mathcal{A}^+(S,n_1)$ and $\mathcal{A}(e,n_2)$ for some $n_1,
n_2 \geq 0$ summing to $c$.  The recurrences can be solved in
polynomial time since the computation for each vertex $v$ and
capacity $c$ takes $O(n \ C)$ time, giving an overall time
complexity of $O(n^2 \ C^2)$.\qed

%

\emph{Proof of Proposition \ref{prop:price_bound}.}
We prove this by contradiction. Suppose $p_i^*<R^*$ for some $i\in S$, then $\hat{S}=S^*\setminus\{i\}$
has better revenue than the optimal solution if we keep the same prices and $p_i^*<R^*$. Indeed, let us calculate $R(\hat{S})$:
\begin{align*}
R(\hat{S})=& \frac{\sum_{j\in \hat{S}} e^{u_j-p_j^*}\cdot p_j^*}{\sum_{j\in \hat{S}} e^{u_j-p_j^*}+a_0}\\
R(\hat{S})=& \frac{\sum_{j\in S^*} e^{u_j-p_j^*}\cdot p_j^*-e^{u_i-p_i^*}\cdot p_i^*}{\sum_{j\in S^*} e^{u_j-p_j^*}-e^{u_i-p_i^*}+a_0}\\
R(\hat{S})=& \frac{\sum_{j\in S^*} e^{u_j-p_j^*}\cdot p_j^*}{\sum_{j\in S^*} e^{u_j-p_j^*}+a_0}\cdot\frac{\sum_{j\in S^*} e^{u_j-p_j^*}+a_0}{\sum_{j\in S^*} e^{u_j-p_j^*}-e^{u_i-p_i^*}+a_0} -\frac{ e^{u_i-p_i^*}\cdot p_i^*}{\sum_{j\in S^*} e^{u_j-p_j^*}-e^{u_i-p_i^*}+a_0}\\
R(\hat{S})=& \frac{\sum_{j\in S^*} e^{u_j-p_j^*}\cdot p_j^*}{\sum_{j\in S^*} e^{u_j-p_j^*}+a_0}\cdot\left[1+\frac{e^{u_i-p_i^*}}{\sum_{j\in S^*} e^{u_j-p_j^*}-e^{u_i-p_i^*}+a_0}\right] -\frac{ e^{u_i-p_i^*}\cdot p_i^*}{\sum_{j\in S^*} e^{u_j-p_j^*}-e^{u_i-p_i^*}+a_0}\\
R(\hat{S})=& R^*\cdot\left[1+\frac{e^{u_i-p_i^*}}{\sum_{j\in S^*} e^{u_j-p_j^*}-e^{u_i-p_i^*}+a_0}\right] -\frac{ e^{u_i-p_i^*}\cdot p_i^*}{\sum_{j\in S^*} e^{u_j-p_j^*}-e^{u_i-p_i^*}+a_0}\\
R(\hat{S})=& R^* +\underbrace{\frac{ e^{u_i-p_i^*}}{\sum_{j\in S^*} e^{u_j-p_j^*}-e^{u_i-p_i^*}+a_0}\cdot \left[R^*-p_i^*\right]}_{\Gamma}
\end{align*}

Now $\Gamma$ is positive because $p_i^*<R^*$, but this implies $R(\hat{S})>R^*$, contradicting the optimality of $R^*$.\qed

\emph{Proof of Proposition~\ref{prop:utility_order}.}
Let $(S^*,p^*)$ be an optimal solution. We can assume that $(S^*,p^*)\in\mathcal{V}$.
We proceed by contradiction. Suppose that there is a product $i$ not included in
the optimal solution and another product $j$ with \textbf{smaller intrinsic utility}
included in $S^*$. We show that we can include product $i$, and remove $j$ and get a greater revenue.
Let $\hat{S}=(S^*\setminus\{j\})\cup \{i\}$, be the set where we removed product $j$, and included product $i$.
Let $\hat{p}_i=u_i-u_j+p_j^*$, this means that the total attractiveness remains unchanged,
and no new domination relations appear, given that product $j$ already had
the same level attractiveness that product $i$ now has.
Observe that given that $u_i\geq u_j$, we have that $\hat{p}_i\geq p_j^*$.
Let us calculate $R(\hat{S},\hat{p})$, where $\hat{p}$ is the same as $p^*$,
but with the proposed changes in price:

\begin{align*}
R(\hat{S},\hat{p})&= \frac{\sum_{k\in \hat{S}} e^{u_k-\hat{p}_k}\cdot \hat{p}_k}{\sum_{k\in \hat{S}} e^{u_k-\hat{p}_k}+a_0}\\
R(\hat{S},\hat{p})&=\frac{\sum_{k\in S^*} e^{u_k-p_k^*}\cdot p_k^* - e^{u_j-p_j^*}\cdot p_j^* +e^{u_i-\hat{p}_i}\cdot \hat{p}_i}{\sum_{k\in \hat{S}} e^{u_k-\hat{p}_k}+a_0}\\
R(\hat{S},\hat{p})&=\underbrace{\frac{\sum_{k\in S^*} e^{u_k-p_k^*}\cdot p_k^*}{\sum_{k\in S^*} e^{u_k-\hat{p}_k}+a_0}}_{R^*} + \frac{e^{u_i-\hat{p}_i}\cdot \hat{p}_i -e^{u_j-p_j^*}\cdot p_j^*}{\sum_{k\in \hat{S}} e^{u_k-\hat{p}_k}+a_0}\\
R(\hat{S},\hat{p})&=R^* + \underbrace{\frac{e^{u_j-p_j^*}}{\sum_{k\in \hat{S}} e^{u_k-\hat{p}_k}+a_0}}_{\geq 0}\cdot \underbrace{\left[\hat{p}_i-p_j^*\right]}_{> 0}\\
R(\hat{S},\hat{p})&> R^*
\end{align*}

Where we first rewrite $R(\hat{S},\hat{p})$ using $(S^*,p^*)$ because we just swapped product $i$ for product $j$,
and the total attractiveness remain the same, so the denominator does not change.
Then we identify $R(S,p)$, and we use $u_i-\hat{p}_i=u_j-p_j$ to being able to factorize the remaining terms.
So we found a pair $(\hat{S},\hat{p})$, yielding strictly more revenue than $(S,p)$,
but adding product $i$, which contradicts the optimality of $(S^*,p^*)$.\qed

\emph{Proof of Lemma~\ref{flat_p}.}
The proof (due to \cite{wang2017impact}) is useful because it provides intuition on how the optimal price variates
when constrained to a fixed additive market share among any two products. By the equality constraint,
we have $p_j=u_j-\ln(T-\exp(u_i-p_i))$, so $H(p_i,p_j)$ can be rewritten purely as a function of $p_i$ as:
\begin{equation}\label{H_pi}
H(p_i)=p_i\cdot \exp(u_i-p_i) +(u_j-\ln(T-\exp(u_i-p_i)))\cdot (T-\exp(u_i-p_i)).
\end{equation}

Now, let us calculate the first derivative of $H(p_i)$ w.r.t. $p_i$:

\begin{equation}\label{dH}
\frac{\partial H(p_i)}{\partial p_i}=\left(-p_i + (u_j-\ln(T-\exp(u_i-p_i)))\right)\cdot\exp(u_i-p_i)
\end{equation}
Clearly the left-hand side term on the multiplication is monotonically decreasing from
positive to negative values as $p_i$ increases from $0$ to $\infty$.
Therefore $H(p_i)$ is strictly unimodal and reaches its maximum value at:

\[	p_i^*=p_j^*=\ln\left((\exp(u_i)+\exp(u_j))/T\right).\] \qed

\emph{Proof of Proposition~\ref{prop:decreasing_prices}.}
We prove this result by contradiction. Let $i$ be the first index where this condition does not hold,
this means that $p_i^*<p_{i+1}^*$. Using Lemma \ref{flat_p}, we found $\hat{p}$ satisfying $p_i^*<\hat{p}<p_{i+1}^*$.
Does this new price alter the consideration set? We show that this is not the case.
Indeed, the effect is two-fold: the price for product $i$ increases, and the price for product $i+1$ decreases.
We analyse the effect of these two consequences:

\begin{itemize}
	\item Increase on price for product $i$: This means $a(i,p)$ decreases. Note that $u_i-\hat{p}\geq u_{i+1}-p_{i+1}^*$,
	so neither $i\succ {i+1}$ or ${i+1}\succ i$, because their attractiveness are now even closer than before.
	Can $i$ be dominated now by another product? No, because given that $u_i\geq u_{i+1}$ we have $u_i-\hat{p}\geq u_{i+1}-\hat{p}\geq u_{i+1}-p_{i+1}^*$.
	Therefore the new attractiveness of $i$ is still larger than the new attractiveness of ${i+1}$,
	and the last inequality implies that the new attractiveness of $i$ is larger than the old attractiveness of ${i+1}$,
	and ${i+1}$ was not previously dominated either by any other product.
	
	\item Decrease on price for product $i+1$: Previously ${i+1}$ was not dominated by any product.
	Can ${i+1}$ be dominated now? No, because if ${i+1}$ was not dominated before,
	now with a smaller price $\hat{p}$ its attractiveness is larger and therefore can't be
	dominated now either (the only other product that changed attractiveness was $i$,
	and it now has smaller attractiveness). Can ${i+1}$ dominate another product now with its new higher attractiveness?
	No, because given that $u_i\geq u_{i+1}$ we have $u_i-p_i^*\geq u_{i+1}-p_i^*\geq u_{i+1}-\hat{p}$,
	so the old attractiveness of product $i$ is larger than the new attractiveness of product ${i+1}$,
	and given that $i$ did not dominate another product before, the new price does not make ${i+1}$ dominate another product either.
\end{itemize}

So, letting $p^{fix}$ exactly the same as $p^*$, but replacing both $p_i^*$ and $p_{i+1}^*$ with $\hat{p}$,
means that the pair $(S^*,p^{fix})$ yields strictly more revenue than $(S^*,p^*)$ (by Lemma \ref{flat_p}),
contradicting the optimality assumption.
The fact that equal intrinsic utility implies equal price at optimality, can be easily demonstrated by the following:
if two equal intrinsic utility products have different prices, then using Lemma \ref{flat_p}
we obtain strictly better revenue by assigning them the same price, and no new domination occurs,
because the new price is confined between the previous prices.\qed

\emph{Proof of Proposition~\ref{prop:decreasing_net}.}
We prove this by contradiction. Let $p^*$ be the optimal solution and $i$ be the first index where this condition does not hold.
This means that $u_i-p_i^*< u_{i+1}-p_{i+1}^*$. We can extrapolate this inequality further and say:
\begin{equation}\label{ineq_chain}
u_{i+1}-p_i^* <u_i-p_i^*< u_{i+1}-p_{i+1}^*<u_i-p_{i+1}^*,
\end{equation}
because $u_i\geq u_{i+1}$  and $p_i\geq p_{i+1}$  by Propositions  \ref{prop:utility_order} and \ref{prop:decreasing_prices} respectively.
We now do the following: Define $p_{i}'$ and $p_{i+1}'$ such as $\exp(u_i-p_i')+\exp(u_{i+1}-p_{i+1}')=\exp(u_i-p_i^*)+\exp(u_{i+1}-p_{i+1}^*)$
and $\exp(u_i-p_i')=\exp(u_{i+1}-p_{i+1}')$. This means that:

\begin{align*}\label{new_price}
p_i'&=u_i-\ln\left(\frac{\exp(u_i-p_i^*)+\exp(u_{i+1}-p_{i+1}^*)}{2}\right) \\
p_{i+1}'&=u_{i+1}-\ln\left(\frac{\exp(u_i-p_i^*)+\exp(u_{i+1}-p_{i+1}^*)}{2}\right)
\end{align*}

Consider  $H(p_i,p_{i+1})=p_i\cdot \exp(u_i-p_i) +p_j\cdot \exp(u_{i+1}-p_{i+1})$,
where $ \exp(u_i-p_i) + \exp(u_i-p_{i})=\exp(u_i-p_i^*)+\exp(u_{i+1}-p_{i+1}^*)$.
By Lemma \ref{flat_p}, $H(p_i,p_{i+1})$ is strictly increasing in $p_i$ for $p_i\leq\hat{p}$
and strictly decreasing for $p_i\geq \hat{p}$, with $\hat{p}=\ln\left(\frac{\exp(u_i)+\exp(u_{i+1})}{\exp(u_i-p_i^*)+\exp(u_{i+1}-p_{i+1}^*)}\right)$
the solution of the corresponding maximization problem of Lemma \ref{flat_p}.
We can verify that $\hat{p}<p_i'<p_i^*$. The first inequality is straightforward. Indeed:

\begin{align*}
p_i'=&u_i-\ln\left(\frac{\exp(u_i-p_i)+\exp(u_{i+1}-p_{i+1})}{2}\right)\\
p_i'&=\ln\left[\frac{2\exp(u_i)}{\exp(u_i-p_i^*)+\exp(u_{i+1}-p_{i+1}^*)}\right]\\
p_i'&>\underbrace{\ln\left[\frac{\exp(u_i)+\exp(u_{i+1})}{\exp(u_i-p_i^*)+\exp(u_{i+1}-p_{i+1}^*)}\right]}_{\hat{p}}\\
p_i'>&\hat{p}
\end{align*}
proving the desired inequality. Now, for the second one:

\begin{align*}
p_i'&=u_i-\ln\left(\frac{\exp(u_i-p_i)+\exp(u_{i+1}-p_{i+1})}{2}\right)\\
p_i'&=\ln\left[\frac{2\exp(u_i)}{\exp(u_i-p_i^*)+\exp(u_{i+1}-p_{i+1}^*)}\right]\\
p_i'&\leq \ln\left[\frac{2\exp(u_i)}{\exp(u_i-p_i^*)+\exp(u_{i}-p_{i+1}^*)}\right]\\
p_i'&= \ln\left[\frac{2\exp(u_i)}{\exp(u_i)(\exp(-p_i^*)+\exp(-p_{i+1}^*))}\right]\\
p_i'&< \ln\left[\frac{2}{2\exp(-p_i^*)}\right]\\
p_i'&< p_i^*,
\end{align*}

thus we have:

\begin{equation*}\label{conclude_rev}
p_i'\cdot\exp(u_i-p_i')+p_{i+1}'\cdot\exp(u_{i+1}-p_{i+1}')>p_i^*\cdot\exp(u_i-p_i^*)+p_{i+1}^*\cdot\exp(u_{i+1}-p_{i+1}^*).
\end{equation*}

Meaning that we have the same assortment, but with prices $p_{i}'$ and $p_{i+1}'$ generating strictly more revenue than the optimal prices,
which is a contradiction. The only thing that we have left to show that with these new prices we are still on the same consideration set.
It would be enough to show that the new net utilities are bounded by previous values of net utilities.
Indeed, we can verify that $p_{i+1}^*\leq p_{i+1}'\leq p_i'\leq p_i^*$, by simply using the definitions.
We also know, by hypothesis that $u_i-p_i'=u_{i+1}-p_{i+1}'$, then $u_i-p_i'=u_{i+1}-p_{i+1}'\leq u_{i+1}-p_{i+1}^*$.
So even when the price of product $i$ decreased, the new attractiveness is bounded above by a previously existing attractiveness,
thus not changing the consideration set. By the same reasoning, $u_{i+1}-p_{i+1}'=u_i-p_i'\geq u_i-p_i^*$,
meaning that the new attractiveness is bounded below by a pre-existing one, so $i+1$ is not dominated with this new prices either.
So the consideration set stays the same, concluding the proof.\qed
\endproof

\emph{Proof of Theorem~\ref{thm:ThrehsoldOpt}.}
We first write problem \eqref{eq:max_R} in minimization form to directly apply the Karush-Khun-Tucker conditions (KKT)\citep{Karush39}.

\begin{equation}
\label{eq:min_R}
\begin{aligned}
& \underset{p}{\text{minimize}}
& & -R^{(k)}(p) \\
& \text{subject to}
& & g_{ij}(p) \leq 0, \quad \forall 1\leq i<j\leq k   
\end{aligned}
\end{equation}
The associated Lagrangean function is:

\begin{equation}\label{eq:L_k}
\mathcal{L}_k(p,\mu)=-R^{(k)}(p) +\sum_{1\leq i<j\leq k }\mu_{ij}\cdot g_{ij}(p),  
\end{equation}
\noindent
where $\mu_{ij}\geq 0$ are the associated Lagrange multipliers.
Recall that if $\exp(u_1-u_k)\leq(1+t)$,
the optimal revenue $\bm{R^{(k)}}$ can be calculated using
equation \eqref{eq:R-interior}, and the solution corresponds
to a fixed price policy as for the regular multinomial logit.

On the other hand, 	if $\exp(u_1-u_k)>(1+t)$, any fixed price causes product
$k$ to be dominated by product $1$. Thus, to include product $k$
in the assortment we need to adjust the prices.
Let $p=\left(p_1,\ldots,p_k\right)$ be the
optimal price vector for problem \eqref{eq:min_R}.
Observe that it can't happen that $\frac{a_1(p_1)}{a_k(p_k)}<1+t$,
since by Proposition \ref{prop:decreasing_net},
it will also means that $\frac{a_1(p_1)}{a_2(p_2)}<1+t$ and	
using Lemma \ref{flat_p} we can find $\hat{p}$ such that assigning
$\hat{p}$ to products $1$ and $2$ yields a larger revenue (and no dominance relation
appears, since the attractiveness of product $1$ was reduced, and the attractiveness of product $2$
increased, but is still less than the one of product $1$), which contradicts optimality.
Therefore, $g_{1k}$ must be satisfied with equality, meaning $\frac{a_1(p_1)}{a_k(p_k)}=1+t$.

Furthermore, at optimality it holds $u_i-p_i\geq u_j-p_j\quad\forall i\leq j$
(by Proposition \ref{prop:decreasing_net}), and thus
the biggest ratio between attractiveness is observed
for products $1$ and $k$, and is exactly equal to $1+t$.
This ratio can be replicated for other pairs of products,
but only if they share the same net utility (and thus attractiveness) to the one of products $1$ or $k$.
Therefore, it must be the case that there are integers $k_1$ and $k_2$ with $k_1+k_2\leq k$, such that
all products in $I_1=\left[k_1\right]$ share the same attractiveness ($a_1(p_1)$)
and all products in $I_2=\{k-k_2+1,k-k_2+2,\ldots,k\}$ share the same attractiveness as well ($a_k(p_k)$).
This means that the set of constraints $C(k_1,k_2)=\{g_{ij}(p) \mid i\in I_1, j\in I_2 \}$ are
all  satisfied	with equality at optimality.

We now study the derivative of equation \eqref{eq:L_k} with respect to each price $p_i$ to obtain the KKT conditions.
We here assume that the first $k_1$ values share the same net utility value,
meaning $u_s=u_1-p_1=u_i-p_i \quad \forall i \in I_1$, and for the last $k_2$ products,
we also have the same value of net utility, that we call $u_f$, this is:
$u_f=u_k-p_k=u_i-p_i\quad \forall i\in I_2$. Where these two quantities satisfy:

\[u_s-u_f=\ln(1+t),\]

Let us write the derivatives of the Lagrangean depending on where the index $i$ belongs. If $i\in I_1$, then:

\begin{equation}\label{KKT_p_k1}
\frac{dL_k}{dp_i}= \frac{\exp(u_i-p_i)}{\sum_{j\in S_k}\exp(u_j-p_j)+a_0}\cdot\left[p_i-1-R^{(k)}(p)\right] - \exp(u_i-p_i)\cdot\sum_{j\in I_2} \mu_{ij},
\end{equation}
if $i\in I_2$, we have:
\begin{equation}\label{KKT_p_k2}
\frac{dL_k}{dp_i}= \frac{\exp(u_i-p_i)}{\sum_{j\in S_k}\exp(u_j-p_j)+a_0}\cdot\left[p_i-1-R^{(k)}(p)\right] + (1+t)\exp(u_i-p_i)\cdot\sum_{j\in I_1} \mu_{ji},
\end{equation}
And finally, if $i\in \bar{I}_k=\left[k\right]\setminus\left(i_1\cup I_2\right)$, the derivative takes the following form:
\begin{equation}\label{KKT_p_o}
\frac{dL_k}{dp_i}= \frac{\exp(u_i-p_i)}{\sum_{j\in S_k}\exp(u_j-p_j)+a_0}\cdot\left[p_i-1-R^{(k)}(p)\right]
\end{equation}

Observe that $\forall i\in \bar{I}_k$, $\frac{dL_k}{dp_i}=0 \implies p_i=1+ R^{(k)}(p)$, and the right hand side is not dependent on $i$, so all products in $
\bar{I}_k$ share the same price, which we denote $\bar{p}$.  We can rewrite all prices and the revenue depending on $u_s$ and $\bar{p}$, using the following relations:

\begin{enumerate}
	\item $\forall i\in I_1 \quad u_1-p_1=u_i-p_i \implies p_i=u_i-u_s$
	\item $\forall i\in I_2 \quad u_1-p_1=u_i-p_i+\ln(1+t) \implies p_i=u_i-u_s+\ln(1+t)$
\end{enumerate}

Note now that at optimality, for a fixed $k$, prices are determined by $k_1$ and $k_2$. Thus, the optimal revenue can be written explicitly depending on $k$, $k_1$ and $k_2$, 
taking the following form:
\small
\begin{align}\label{eq:R1_prev}
&R^{(k)}(k_1,k_2)=\nonumber\\
&\frac{\sum_{i\in I_1}(u_i-u_s)\exp(u_s) + \bar{p}\exp(-\bar{p})\sum_{i\in\bar{I}_k}\exp(u_i) + \sum_{i\in I_2}(u_i-u_s+\ln(1+t))\exp(u_s-\ln(1+t)) }{\sum_{i\in I_1}\exp(u_s) + \exp(-\bar{p})\sum_{i\in\bar{I}_k}\exp(u_i) + \sum_{i\in I_2}\exp(u_s+\ln(1+t))+a_0}
\end{align}
\normalsize

Note that $\bar{p}=1+R^{(k)}(k_1,k_2)$ (Equation \eqref{KKT_p_o}) and let $E(k_1,k_2)=\sum_{i\in\bar{I}_k}\exp(u_i)$. Using these two relations, we can rewrite the optimal revenue as:
\small
\begin{equation}\label{eq:R1}
R^{(k)}(k_1,k_2)=\frac{e^{u_s}\sum\limits_{i\in I_1}(u_i-u_s) + \frac{e^{u_s}}{1+t}\cdot\sum\limits_{i\in I_2}(u_i-u_s+\ln(1+t))+E(k_1,k_2)(1+R^{(k)}(k_1,k_2))e^{-(1+R^{(k)}(k_1,k_2))}}{e^{u_s}\left[k_1+\frac{k_2}{1+t}\right]+E(k_1,k_2)e^{-(1+R^{(k)}(k_1,k_2))}+a_0}
\end{equation}
\normalsize

Up to this point, we have an equation relating the optimal revenue $R^{(k)}(k_1,k_2)$ and $u_s$.
From equation \eqref{KKT_p_k1}, after reordering terms we have:

\begin{align}\label{eq:kkt_1}
\frac{p_i-1-R^{(k)}(k_1,k_2)}{e^{u_s}\left(k_1+k_2(1+t)\right)+E(k_1,k_2)e^{-(1+R^{(k)}(k_1,k_2))}+a_0}	&=\sum_{j\in I_2} \mu_{ij},
\quad \quad \forall i\in I_1	\nonumber\\
\frac{u_i-u_s-1-R^{(k)}(k_1,k_2)}{e^{u_s}\left(k_1+k_2(1+t)\right)+E(k_1,k_2)e^{-(1+R^{(k)}(k_1,k_2))}+a_0}	&=\sum_{j\in I_2} \mu_{ij},
\quad \quad \forall i\in I_1
\end{align}

Analogously, from equation \eqref{KKT_p_k2}, after reordering terms we have $\forall i\in I_2$:

\begin{align}\label{eq:kkt_2}
\frac{p_i-1-R^{(k)}(k_1,k_2)}{e^{u_s}\left(k_1+k_2(1+t)\right)+E(k_1,k_2)e^{-(1+R^{(k)}(k_1,k_2))}+a_0}	&=-(1+t)\sum_{j\in I_1} \mu_{ji},
\quad \quad \forall i\in I_2	\nonumber\\
\frac{1}{1+t}\cdot\frac{u_i-u_s+\ln(1+t)-1-R^{(k)}(k_1,k_2)}{e^{u_s}\left(k_1+k_2(1+t)\right)+E(k_1,k_2)e^{-(1+R^{(k)}(k_1,k_2))}+a_0}	&=-\sum_{j\in I_1} \mu_{ji},
\quad \quad \forall i\in I_2
\end{align}

Now, if we add equations \eqref{eq:kkt_1} $\forall i\in I_1$ then take equations
\eqref{eq:kkt_2} and also add them $\forall i\in I_2$, and add those two results we can derive the value $R^{(k)}(k_1,k_2)$ as follows.

\begin{align}\label{eq:R2}
\underbrace{\sum_{\substack{i\in I_1\\j\in I_2}} \mu_{ij}-\sum_{\substack{i\in I_1\\j\in I_2}} \mu_{ij}}_{0}=&\frac{\sum_{i\in I_1}(u_i-u_s-1-R^{(k)}(k_1,k_2))+\frac{\sum_{i\in I_2}(u_i-u_s+\ln(1+t)-1-R^{(k)}(k_1,k_2))}{1+t}}{e^{u_s}\left(k_1+k_2(1+t)\right)+E(k_1,k_2)e^{-(1+R^{(k)}(k_1,k_2))}+a_0} \nonumber\\
R^{(k)}(k_1,k_2)\left(k_1+\frac{k_2}{1+t}\right)=&\sum_{i\in I_1}u_i + \frac{1}{1+t}\cdot\sum_{i\in I_2}u_i -(1+u_s)\cdot\left(k_1+\frac{k_2}{1+t}\right) +\frac{k_2\ln(1+t)}{1+t}\nonumber\\
R^{(k)}(k_1,k_2)=&\frac{(1+t)\sum_{i\in I_1}u_i +\sum_{i\in I_2}u_i+k_2\ln(1+t)}{k_1(1+t)+k_2}-1-u_s
\end{align}

We now have two equations relating $R^{(k)}(k_1,k_2)$ and $u_s$ in \eqref{eq:R1} and \eqref{eq:R2}.
Using these equations we can find the values of the optimal revenues and all the pricing structure
while varying $k_1$ and $k_2$. If we define the following constant:

\begin{equation}\label{eq:C1}
C_1(k_1,k_2)=\frac{(1+t)\sum_{i\in I_1}u_i +\sum_{i\in I_2}u_i+k_2\ln(1+t)}{k_1(1+t)+k_2}-1,
\end{equation}

Equation \eqref{eq:R2} becomes:

\begin{equation}\label{eq:R2_def}
R^{(k)}(k_1,k_2)=C_1(k_1,k_2)-u_s,
\end{equation}

and from Equation \eqref{eq:R2_def}, we can deduce the following relations:

\begin{equation}\label{eq:rels}
1+R^{(k)}(k_1,k_2)=C_1(k_1,k_2)-u_s+1 , \quad\quad \text{and} \quad\quad e^{-(1+R^{(k)}(k_1,k_2))}=e^{u_s-C_1(k_1,k_2)-1}.
\end{equation}

We will use these relations on Equation \eqref{eq:R1}. Let us first multiply both sides by the denominator on the right side:
\small
\begin{align}
&R^{(k)}(k_1,k_2) \cdot\left(e^{u_s}\left[k_1+\frac{k_2}{1+t}\right]+E(k_1,k_2)e^{-(1+R^{(k)}(k_1,k_2))}+a_0\right)\nonumber \\&=e^{u_s}\sum\limits_{i\in I_1}(u_i-u_s) + \frac{e^{u_s}}{1+t}\cdot\sum\limits_{i\in I_2}(u_i-u_s+\ln(1+t))+E(k_1,k_2)(1+R^{(k)}(k_1,k_2))e^{-(1+R^{(k)}(k_1,k_2))}\nonumber
\end{align}
\normalsize

using equations \eqref{eq:rels} to replace the value of $R^{(k)}(k_1,k_2)$ and write everything depending on $u_s$ we have:

\small
\begin{align}\label{eq:Start_R}
\left(C_1(k_1,k_2)-u_s\right)&\left(e^{u_s}\left[k_1+\frac{k_2}{1+t}\right]+E(k_1,k_2)e^{u_s-C_1(k_1,k_2)-1}+a_0\right)=e^{u_s}\sum\limits_{i\in I_1}(u_i-u_s) \nonumber\\ &+ \frac{e^{u_s}}{1+t}\cdot\sum\limits_{i\in I_2}(u_i-u_s+\ln(1+t))
+E(k_1,k_2)\cdot \left(C_1(k_1,k_2)-u_s-1\right)e^{u_s-C_1(k_1,k_2)-1}
\end{align}
\normalsize
We focus first on the left hand side (LHS) of Equation \eqref{eq:Start_R}:

\begin{align*}\label{eq:LS}
LHS=&\left(C_1(k_1,k_2)-u_s\right)\left(e^{u_s}\left[\left(k_1+\frac{k_2}{1+t}\right)+E(k_1,k_2)e^{-C_1(k_1,k_2)-1}\right]+a_0\right)
\end{align*}

For ease of notation, define $C_2(k_1,k_2)$ as:
\begin{equation}\label{eq:C2}
C_2(k_1,k_2)=\left(k_1+\frac{k_2}{1+t}\right)+E(k_1,k_2)e^{-C_1(k_1,k_2)-1}
\end{equation}

Rewriting the LHS using the value for $C_2(k_1,k_2)$:
\small
\begin{equation}\label{eq:LS_def}
LHS=\left(C_1(k_1,k_2)-u_s\right)\left[e^{u_s}\cdot C_2(k_1,k_2)+a_0)\right]   
\end{equation}
\normalsize

We now focus on the right side (RHS) of equation \eqref{eq:Start_R}:

\begin{align}\label{eq:RS_def}
RHS=&e^{u_s}\sum\limits_{i\in I_1}(u_i-u_s) + \frac{e^{u_s}}{1+t}\cdot\sum\limits_{i\in I_2}(u_i-u_s+\ln(1+t))\nonumber\\
&+E(k_1,k_2)\cdot \left(C_1(k_1,k_2)-u_s-1\right)e^{u_s-C_1(k_1,k_2)-1}\nonumber\\
RHS=&e^{u_s}\left[\sum\limits_{i\in I_1}u_i+\frac{1}{1+t}\cdot \sum\limits_{i\in I_2}u_i+\frac{k_2\ln(1+t)}{1+t}-u_s\left(k_1+\frac{k_2}{1+t}\right)\right] \nonumber\\
& +e^{u_s}e^{-C_1(k_1,k_2)-1}E(k_1,k_2)\cdot \left(C_1(k_1,k_2)-u_s+1\right)\nonumber\\
RHS=&e^{u_s}\cdot\left(k_1+\frac{k_2}{1+t}\right)\left[C_1(k_1,k_2)-u_s+1\right] +e^{u_s}e^{-C_1(k_1,k_2)-1}E(k_1,k_2)\cdot \left(C_1(k_1,k_2)-u_s+1\right)\nonumber\\
RHS=&e^{u_s}\cdot\left(C_1(k_1,k_2)-u_s+1\right)\cdot\underbrace{\left[\left(k_1+\frac{k_2}{1+t}\right)+E(k_1,k_2)\cdot e^{-C_1(k_1,k_2)-1}\right]}_{C_2(k_1,k_2)}\nonumber\\
RHS=&e^{u_s}\cdot\left(C_1(k_1,k_2)-u_s+1\right)\cdot C_2(k_1,k_2)
\end{align}

Putting together equations \eqref{eq:LS_def} and \eqref{eq:RS_def}, we have:
\small
\begin{align}\label{eq:us_rel}
LHS=&RHS\nonumber\\
\left(C_1(k_1,k_2)-u_s\right)\left[e^{u_s}\cdot C_2(k_1,k_2) +a_0)\right]=&e^{u_s}\cdot\left(C_1(k_1,k_2)-u_s+1\right)\cdot C_2(k_1,k_2)\nonumber\\
\left(C_1(k_1,k_2)-u_s\right)e^{u_s}\cdot C_2(k_1,k_2) +\left(C_1(k_1,k_2)-u_s\right)\cdot a_0=&\left(C_1(k_1,k_2)-u_s\right)e^{u_s}\cdot C_2(k_1,k_2) +e^{u_s}\cdot C_2(k_1,k_2)\nonumber\\
\left(C_1(k_1,k_2)-u_s\right)\cdot a_0=&e^{u_s}\cdot C_2(k_1,k_2)\nonumber\\
e^{u_s}=&-\frac{a_0}{C_2(k_1,k_2)}\cdot\left(u_s-C_1(k_1,k_2)\right)
\end{align}
\normalsize

Equation \eqref{eq:us_rel} has a known explicit closed form solution, and can be found using the following Lemma:

\begin{lemma}\label{lemma:W}
	Let $a,b \neq 0$ and $c$ be real numbers and $W(\cdot)$ be the Lambert function \citep{Corless1996}. The solution to the transcendental algebraic equation for $x$:
	\begin{equation}\label{eq:W_eq}
	e^{-ax}=b\left(x-c\right),
	\end{equation}
	is:
	\begin{equation}\label{eq:sol_W}
	x=c+\frac{1}{a}\cdot W\left(\frac{ae^{-ac}}{b}\right).
	\end{equation}
\end{lemma}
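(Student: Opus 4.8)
The plan is to collapse equation \eqref{eq:W_eq} into the canonical form $z e^{z} = \text{const}$ that defines the Lambert function and then invert it in one step. The key is the substitution $z := a(x-c)$, which is engineered so that the claimed answer \eqref{eq:sol_W} is exactly $z = W\!\left(\frac{a e^{-ac}}{b}\right)$; once the transformed equation is in canonical form, the lemma follows immediately from the defining property of $W$ recorded in \eqref{ec:Lambert}.

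Concretely, I would substitute $x = c + z/a$ (legitimate since $a \neq 0$) into the left-hand side of \eqref{eq:W_eq}, obtaining $e^{-a(c + z/a)} = e^{-ac}\, e^{-z}$, while the right-hand side becomes $bz/a$. Multiplying through by $e^{z}$ and then by $a/b$ — both operations valid because $a,b \neq 0$ — yields $z e^{z} = \frac{a e^{-ac}}{b}$. Applying $W$ to both sides and using $W(\xi)\,e^{W(\xi)} = \xi$ from \eqref{ec:Lambert} gives $z = W\!\left(\frac{a e^{-ac}}{b}\right)$, and undoing the substitution recovers \eqref{eq:sol_W} verbatim.

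Since the algebra is elementary, the only genuine subtlety — and the step I would be careful about — is the branch and domain of $W$. In \eqref{ec:Lambert} the paper fixes $W$ as the principal branch with domain $[0,\infty)$, so the inversion $z = W(\xi)$ is valid precisely when $\xi = \frac{a e^{-ac}}{b} \geq 0$, i.e. when $a$ and $b$ share the same sign, and on this branch the solution is unique. I would therefore check that the intended application \eqref{eq:us_rel} lands in this regime: there $a = -1$, $b = -a_0/C_2(k_1,k_2)$ and $c = C_1(k_1,k_2)$, so the argument equals $\frac{C_2(k_1,k_2)\, e^{C_1(k_1,k_2)}}{a_0} > 0$, confirming both applicability and uniqueness. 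For the statement in full generality one would simply add the caveat that the appropriate real branch of $W$ must be selected according to the sign of $\frac{a e^{-ac}}{b}$.
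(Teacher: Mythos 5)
Your proposal is correct and follows essentially the same route as the paper: both reduce \eqref{eq:W_eq} to the canonical form $a(x-c)\,e^{a(x-c)}=\frac{a e^{-ac}}{b}$ (you via the substitution $z=a(x-c)$, the paper by multiplying through by $\frac{a}{b}e^{a(x-c)}$) and then invert with the defining property \eqref{ec:Lambert}. Your added remark on the branch of $W$ and the sign of $\frac{a e^{-ac}}{b}$ in the application \eqref{eq:us_rel} is a sensible precaution the paper leaves implicit, but it does not change the argument.
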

\noindent\textit{Proof of Lemma~\ref{lemma:W}.} Let us start with equation \eqref{eq:W_eq} and find an explicit solution to it.
\begin{align*}\label{proof_W}
e^{-ax}&=b\left(x-c\right)\\
e^{-ax + ac-ac}&=b\left(x-c\right) & \text{/multiplying both sides by $\frac{a}{b}\cdot e^{a(x-c)}$}q\\
\frac{a\cdot e^{-ac}}{b}& =a\cdot\left(x-c\right)\cdot e^{a(x-c)}& \text{/using definition of $W(\cdot)$ as in Eq. \eqref{ec:Lambert}}\\
W\left(\frac{ae^{-ac}}{b}\right)&= a\cdot(x-c)& \text{/reorganising and isolating $x$}\\
x&=c+\frac{1}{a}\cdot W\left(\frac{ae^{-ac}}{b}\right)& ,
\end{align*}
completing the proof. \qed

Identifying terms on equation \eqref{eq:us_rel},
the solution for $u_s$ is:

\begin{equation}\label{us_sol}
u_s=C_1(k_1,k_2)-W\left(\frac{C_2(k_1,k_2)}{a_0}\cdot e^{C_1(k_1,k_2)} \right)
\end{equation}

Let us call this value $u_s(k,k_1,k_2)$, meaning that is a function of the integers $k,k_1$ and $k_2$.
To get the revenue for this specific combination of parameters,
we can simply use equation \eqref{eq:R2_def}, giving us:

\begin{equation}\label{eq:Rk1k2}
R^{(k)}(k_1,k_2)=C_1(k_1,k_2)-u_s(k,k_1,k_2)
\end{equation}

Thus, the optimal revenue $\bm{R^{(k)}}$ given a specific integer $k$ can be obtained by:

\begin{equation}\label{eq:R_opt_JAP}
\bm{R^{(k)}} = \mathop{\max_{k_1, k_2 \geq 1}}_{k_1 + k_2 \leq k} R^{(k)}(k_1,k_2)
\end{equation}

Noting that there are $O(k^2)$ pairs $(k_1,k_2)$ to evaluate, the proof follows.\qed

\emph{Proof of Lemma~\ref{lem:sol_structure}}
The optimal revenue is already calculated in
Equation \eqref{eq:R_opt_JAP}.
The proof follows by first obtaining $u_s^*(k)$ from Equation \eqref{eq:R2}.
Then, for products in $I_1$, the price can be obtained directly
since their net utility is the same as $u_s^*(k)$. For products in $I_2$,
since $g_{1k}$ is satisfied with equality, all products share
the same net utility and equal to $u_s^*(k)-\ln(1+t)$.
Finally, for products in  $\bar{I}_k$, we can use the
relation provided in equation \eqref{KKT_p_o} to obtain the prices.
More explicitly, let $(k_1^*,k_2^*)$ be the integers satisfying  $\bm{R^{(k)}}=R^{(k)}(k_1^*,k_2^*)$.
To obtain the optimal prices, let $u_s^*(k)=u_s(k,k_1^*,k_2^*)$. By Equation
\eqref{eq:Rk1k2} $u_s^*(k)$ can be written as:
\small
\begin{equation}\label{eq:u_s_opt}
u_s^*(k)=\frac{(1+t)\sum_{i\in I_1}u_i +\sum_{i\in I_2}u_i+k_2^*\ln(1+t)}{k_1^*(1+t)+k_2^*}-1-\bm{R^{(k)}}
\end{equation}
\normalsize

Therefore, the optimal prices are given by:

\begin{equation}
\label{eq:opt_prices_kk1k2}
\bm{p^{(k)}_i(k)}=\begin{cases}
u_i- u_s^*(k) & \text{if } i\in I_1, \\
u_i- u_s^*(k)+\ln(1+t) & \text{if } i\in I_2,\\
1+\bm{R^{(k)}}  & \text{if } i\in \bar{I}_k	
\end{cases}
\end{equation}
\qed

\begin{lemma}\label{ex:fixed_price_bad}
	Fixed-Price policy can be arbitrarily bad for the Joint Assortment and Pricing problem under the Threshold Luce model.
\end{lemma}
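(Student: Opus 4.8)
The plan is to exhibit an explicit family of instances, parameterized by a number of products $N$, on which the ratio between the fixed-price revenue and the optimal revenue tends to $0$; this is the standard way to establish an ``arbitrarily bad'' claim. The construction generalizes Example~\ref{exmp:fixed_price_num}: I take $N+1$ products where product $1$ has intrinsic utility $u$ and the remaining $N$ products all share a strictly smaller utility $u'$, chosen so that $u-u'>\ln(1+t)$, together with a fixed outside option $a_0$ and threshold $t$. The decisive structural feature is that $u-u'>\ln(1+t)$ forces product $1$ to dominate every other product whenever all products are charged the same price.

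First I would pin down the fixed-price benchmark. Because a common price $p$ cancels in every attractiveness ratio $a_i(p)/a_j(p)=\exp(u_i-u_j)$, the dominance relation under a fixed-price policy is independent of the chosen price and is determined solely by the intrinsic utilities. Consequently, offering all products at any common price yields the consideration set $\{1\}$, and the $N$ low-utility products contribute nothing. The common price then affects only product $1$'s contribution, so the revenue reduces to the one-product MNL pricing objective, and by Equation~\eqref{eq:R-interior} the optimized fixed-price revenue equals $R_{\mathrm{fixed}}=W\!\left(\exp(u-1)/a_0\right)$, a constant that does not depend on $N$.

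Next I would lower-bound the optimal revenue using a single feasible differentiated-price solution. Assign product $1$ an infinite price (equivalently, drop it) and charge the $N$ products of utility $u'$ a common price. Since these products share the same utility, no dominance arises among them and the pair is valid; the resulting subproblem is again a one-tier MNL pricing problem over $N$ identical products, whose optimal value is $W\!\left(N\exp(u'-1)/a_0\right)$ by Equation~\eqref{eq:R-interior}. Hence $R^*\geq W\!\left(N\exp(u'-1)/a_0\right)$. Because $W$ is increasing and unbounded on $[0,\infty)$, this lower bound tends to infinity with $N$ while $R_{\mathrm{fixed}}$ stays fixed, so $R_{\mathrm{fixed}}/R^*\to 0$ and the claim follows.

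The main obstacle, and the conceptual heart of the argument, is the observation that a common price cannot break a dominance relation: since price enters additively in the net utility and therefore cancels in attractiveness ratios, the fixed-price policy is powerless to rescue products dominated on intrinsic utility alone, no matter how the single price is tuned. Everything else is a routine application of the closed form in Equation~\eqref{eq:R-interior} together with the monotonicity and unboundedness of the Lambert $W$ function.
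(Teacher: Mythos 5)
Your proof is correct, and it reaches the conclusion by a genuinely different and more elementary route than the paper. The paper uses essentially the same family of instances (one high-utility product that dominates $N$ identical low-utility products under any common price), but it evaluates the \emph{exact} optimal revenue via the closed form from Theorem~\ref{thm:ThrehsoldOpt} with $k_1=1$ and $k_2=N$, and then needs an auxiliary restriction on the parameters (that the quantity $\Gamma=\ln(1+t)-(u-1)(1-\alpha)$ be nonnegative) in order to reduce that expression to a divergent lower bound of the form $W\left(\frac{(1+t)+N}{1+t}\cdot e^{u-1}\right)$. You instead lower-bound $R^*$ by the revenue of a single explicitly feasible pair --- drop product $1$ and price the $N$ identical remaining products uniformly --- which is a plain MNL pricing instance with value $W\left(Ne^{u'-1}/a_0\right)\to\infty$ by Equation~\eqref{eq:R-interior}. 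This sidesteps the Lagrangean/KKT machinery entirely, imposes no condition on $u$ beyond $u-u'>\ln(1+t)$, and does not rely on the correctness of the main pricing theorem; what it gives up is any information about the structure of the true optimum (the paper's computation shows the optimal policy retains product $1$ and reprices around it, while your witness simply discards it). Both arguments rest on the same two pillars, which you identify correctly: a common price cancels in every attractiveness ratio, so fixed pricing cannot undo intrinsic-utility dominance and is pinned at the $N$-independent constant $W\left(e^{u-1}/a_0\right)$; and $W$ is increasing and unbounded. One shared convention worth stating explicitly in a final write-up: like the paper, you take the fixed-price benchmark to offer the full product set at a common price, since a benchmark that could also restrict the assortment would itself drop product $1$ and escape the trap.
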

\emph{Proof of Lemma~\ref{ex:fixed_price_bad}.}
Consider $N+1$ products, with product one having $u>0$ utility and $a_0=1$.
For all the remaining $N$ products let their utility to be: $\alpha u$,
with $\alpha <1$ such that in presence of product one,
all the rest of the products are ignored for threshold $t$.
The optimal revenue if we consider a fixed price strategy is \citep{li2011pricing,wang2012}:
\[R'=W(\exp(u-1))\]
Because no matter what fixed price we select, the $N$ lower utility products
are completely ignored and the first product is the only one contributing to the revenue,
and this is the best revenue that we can achieve given that.
Now, let us consider the optimal revenue obtained with the strategy described in Theorem \eqref{thm:ThrehsoldOpt}

\begin{equation}\label{eq:rev_alg_example}
R^*=W\left(\left[\frac{(1+t)+N}{1+t}\right]\cdot \exp\left(\frac{(1+t)(u-1) +N(\alpha u-1)+N\ln(1+t)}{(1+t)+N}\right)\right),
\end{equation}

let us find an explicit relation between $R'$ and $R^*$. Starting from equation \eqref{eq:rev_alg_example}:

\begin{align*}
R^*=&W\left(\left[\frac{(1+t)+N}{1+t}\right]\cdot \exp\left(\frac{(1+t)(u-1) +N(\alpha u-1)+N\ln(1+t)}{(1+t)+N}\right)\right)\\
R^*=&W\left(\left[\frac{(1+t)+N}{1+t}\right]\cdot \exp\left((u-1)\cdot\frac{(1+t)+N\alpha}{(1+t)+N}+\frac{N\ln(1+t)}{(1+t)+N}\right)\right)\\
R^*=&W\left(\left[\frac{(1+t)+N}{1+t}\right]\cdot \exp\left((u-1)+\frac{N}{(1+t)+N}\cdot\underbrace{\left(\ln(1+t)-(u-1)\cdot(1-\alpha)\right)}_{\Gamma}\right)\right)\\	
\end{align*}

We know that the Lambert function is concave, increasing and unbounded \citep{Corless1996,li2011pricing}.
With this in mind, let $u$ be such that $\Gamma$ is greater or equal than zero (for example, setting $u=1.9$, $\alpha=0.5$ and $t=0.5$, makes $\Gamma>0$ and product $1$ dominates the rest of the products), this is:

\begin{equation}\label{eq:cond_u}
\frac{\ln(1+t)}{1-\alpha} +1 \geq u.
\end{equation}
Using this, we have:
\begin{equation}\label{eq:last_bad_fixed_price}
R^*\geq W\left(\left[\frac{(1+t)+N}{1+t}\right]\cdot \exp\left(u-1\right)\right)\\	
\end{equation}

Where the argument of the Lambert function is exactly the same as $R^*$, but multiplied by a constant factor larger than one and depending on $N$. Putting everything together, we have:

\begin{equation}\label{ineq:opt_rel}
R^*\geq  W\left(\left[\frac{(1+t)+N}{1+t}\right]\cdot \exp\left(u-1\right)\right)\geq R'
\end{equation}

The expression in the middle can be arbitrarily larger than $R'$
by letting $N$ tend to infinity, and so is $R^*$.
Thus, the fixed price policy can be arbitrarily bad under the Threshold Luce model.\qed

\newpage
\section{Numerical Experiments}\label{sec:numerical}
This section present numerical results on the performance of the algorithms developed in
Sections \ref{sec:assortment} and \ref{sec:JAP_TLM},
compared against classical algorithms in the literature
such as revenue-ordered assortments for the assortment problem,
and pricing policies like Fixed-Price \citep{li2011pricing} which is optimal
for the conventional MNL, and Quasi-Same price \citep{wang2017impact}, which is optimal for the
proposed variant of the MNL including search cost. Quasi-Same Price amounts to
have a fixed price for all products but one (the one having the smallest utility).

We also provide some insights on the factors
that influence the nature of the solution and provide some explanation
on the difference in performance between the different strategies.


\subsection{Assortment Optimisation}\label{subsec:numerical_assortment}

This section presents some numerical results on the performance of revenue-ordered assortments
(RO) against our proposed strategy detailed in Section \ref{sec:assortment}, which we call
\texttt{2SLM-OPT}. In order to do this, we variate the number of products $n$,
the attractiveness of the outside option $a_0$ and the density $d$ of the graph,
which we use as the probability that a dominance relation is active
for each pair of products\footnote{used as the probability that an edge in the dominance graph occurs}.
Theoretically, as shown in Example  \ref{ex:ro_not_optimal}, the optimality gap can be as large desired.
But 	in practice, we were able to found gaps as large as $95.40\%$.

Each tested family or class of instances is defined by essentially three numbers:
the number of products $n$, the attractiveness of the outside option $a_0$,
and the density $d$, that controls the probability that
a dominance edge exists, and then we also compute the transitive closure
over the resulting graph. It is worth noticing that we did not consider
the case $a_0=0$ because in those cases, the optimal solution is simply selecting the highest
revenue product and therefore both strategies coincide. In total, we experimented
with $48$ classes or families of instances, each containing $250$ instances.
In each specific instance, revenues and
utilities are drawn from an uniform distribution between $0$ and $10$.
We ran 	both strategies (RO and \texttt{2SLM-OPT}) and report the
average and worst optimality gap for the RO
strategy. We are not providing running times, because as expected,
\texttt{2SLM-OPT} takes more time than RO,  but all instances
can be solved very fast in practice (less than half a second).
Table \ref{tab:assortment_simple_subtotal} presents the results which can be summarized as follows:

\begin{enumerate}
	
	\item The average gap tends to increase with the number of products,
	reaching about $14\%$ for $30$ products.
	The worst gap is more instance-dependent
	(as it strongly depends on the dominance structure,
	and how revenues are matched with attractiveness) so it can be
	large both in smaller and larger instances. However,
	it tends to increase with the density of the dominance graph,
	as it is more likely for RO to choose
	a product that dominates potential contributors whose inclusion
	can be more profitable than keeping the higher attractiveness one.	
	
	\item The average gap generally widens as the outside option
	attractiveness increases. With a high outside
	option, we typically expect to select more products to counterbalance
	the effect of the no-choice alternative.
	This can amplify the difference between \texttt{2SLM-OPT} and RO
	as the likelihood that	the optimal solution turns out to be
	revenue-ordered decreases, given the randomness of the dominance relation.
	
	\item With higher densities, is more likely to \textit{make a mistake} and include a product that dominates
	many \textit{potential contributors} that considered together, might be more profitable. Thus,
	both the average and worst gap widens as the density increases in general. The exception occurs at the higher end of
	densities where not many products can be included without provoking dominances. Here the solutions
	of both strategies tends to be similar and select a few higher revenue products.
	This is also interesting from a managerial standpoint: when customers have more clarity
	on what products are clearly superior in comparison, this might drift the offered assortment to be smaller,
	compared against when customers does not have a clear hierarchy among products.
	
\end{enumerate}	
	
\begin{table}[!htbp]
	\centering
	\begin{adjustbox}{totalheight=\textheight-2.5\baselineskip}
		\begin{tabular}{lrrrlr}
			\toprule
			\multicolumn{1}{c}{\multirow{2}[4]{*}{$(n,a_0,d)$}} & \multicolumn{3}{c}{RO Assortments} &       & \multicolumn{1}{c}{\texttt{2SLM-OPT}} \\
			\cmidrule{2-4}\cmidrule{6-6}          & \multicolumn{1}{l}{Avg. Gap (\%)} & \multicolumn{1}{l}{Worst Gap (\%)} & \multicolumn{1}{l}{Avg. Cardinality} &       & \multicolumn{1}{l}{Avg. Cardinality} \\
			\midrule
			\midrule
			(5,1,0.2) & 0.476 & 48.899 & 1.484 &       & 1.496 \\
			(5,1,0.4) & 1.532 & 80.404 & 1.384 &       & 1.376 \\
			(5,1,0.8) & 2.812 & 71.888 & 1.08  &       & 1.084 \\
			(5,2,0.2) & 1.173 & 73.387 & 1.82  &       & 1.816 \\
			(5,2,0.4) & 1.827 & 69.8  & 1.504 &       & 1.536 \\
			(5,2,0.8) & 4.529 & 94.759 & 1.116 &       & 1.14 \\
			(5,4,0.2) & 1.835 & 69.574 & 2.108 &       & 2.104 \\
			(5,4,0.4) & 3.133 & 61.627 & 1.784 &       & 1.82 \\
			(5,4,0.8) & 5.378 & 69.555 & 1.2   &       & 1.228 \\
			(5,8,0.2) & 1.789 & 64.546 & 2.284 &       & 2.34 \\
			(5,8,0.4) & 5.927 & 70.854 & 1.884 &       & 1.988 \\
			(5,8,0.8) & 6.933 & 91.335 & 1.168 &       & 1.244 \\
			\midrule
			Avg. $n=5$ & 3.112 & 72.219 & 1.568 &       & 1.597667 \\
			&       &       &       &       &  \\
			(10,1,0.2) & 0.68  & 51.339 & 1.872 &       & 1.896 \\
			(10,1,0.4) & 2.388 & 63.414 & 1.5   &       & 1.524 \\
			(10,1,0.8) & 3.997 & 95.49 & 1.092 &       & 1.076 \\
			(10,2,0.2) & 1.385 & 49.292 & 2.272 &       & 2.296 \\
			(10,2,0.4) & 3.275 & 90.659 & 1.604 &       & 1.664 \\
			(10,2,0.8) & 6.495 & 73.787 & 1.132 &       & 1.148 \\
			(10,4,0.2) & 1.984 & 61.872 & 2.612 &       & 2.764 \\
			(10,4,0.4) & 5.734 & 90.983 & 1.92  &       & 2.112 \\
			(10,4,0.8) & 7.107 & 86.55 & 1.156 &       & 1.224 \\
			(10,8,0.2) & 3.509 & 41.995 & 3.08  &       & 3.2 \\
			(10,8,0.4) & 6.592 & 82.358 & 2.028 &       & 2.304 \\
			(10,8,0.8) & 8.916 & 92.576 & 1.172 &       & 1.304 \\
			\midrule
			Avg. $n=10$ & 4.3385 & 73.35958333 & 1.786666667 &       & 1.876 \\
			&       &       &       &       &  \\
			(20,1,0.2) & 1.067 & 36.45 & 2.18  &       & 2.216 \\
			(20,1,0.4) & 2.664 & 82.68 & 1.448 &       & 1.556 \\
			(20,1,0.8) & 2.884 & 74.534 & 1.1   &       & 1.092 \\
			(20,2,0.2) & 2.349 & 40.095 & 2.46  &       & 2.652 \\
			(20,2,0.4) & 3.452 & 41.717 & 1.696 &       & 1.856 \\
			(20,2,0.8) & 5.112 & 83.79 & 1.132 &       & 1.192 \\
			(20,4,0.2) & 3.786 & 34.659 & 2.84  &       & 3.184 \\
			(20,4,0.4) & 8.575 & 73.075 & 1.848 &       & 2.14 \\
			(20,4,0.8) & 7.749 & 86.321 & 1.152 &       & 1.284 \\
			(20,8,0.2) & 5.938 & 68.465 & 3.352 &       & 3.856 \\
			(20,8,0.4) & 8.88  & 52.627 & 2.088 &       & 2.616 \\
			(20,8,0.8) & 10.204 & 94.021 & 1.152 &       & 1.392 \\
			\midrule
			Avg. $n=20$ & 5.221666667 & 64.03616667 & 1.870666667 &       & 2.086333 \\
			&       &       &       &       &  \\
			(30,1,0.2) & 1.762 & 20.877 & 2.068 &       & 2.228 \\
			(30,1,0.4) & 3.34  & 83.702 & 1.44  &       & 1.616 \\
			(30,1,0.8) & 3.773 & 62.764 & 1.056 &       & 1.108 \\
			(30,2,0.2) & 3.084 & 43.736 & 2.544 &       & 2.864 \\
			(30,2,0.4) & 5.554 & 79.378 & 1.64  &       & 1.968 \\
			(30,2,0.8) & 5.499 & 86.544 & 1.072 &       & 1.148 \\
			(30,4,0.2) & 4.721 & 53.873 & 2.984 &       & 3.464 \\
			(30,4,0.4) & 8.046 & 74.267 & 1.876 &       & 2.3 \\
			(30,4,0.8) & 9.045 & 92.51 & 1.14  &       & 1.304 \\
			(30,8,0.2) & 7.623 & 46.498 & 3.368 &       & 4.188 \\
			(30,8,0.4) & 14.266 & 91.851 & 1.916 &       & 2.684 \\
			(30,8,0.8) & 11.422 & 75.239 & 1.132 &       & 1.412 \\
			\midrule
			Avg. $n=30$ & 6.51125 & 67.60325 & 1.853 &       & 2.190333 \\
			\bottomrule
		\end{tabular}%
	\end{adjustbox}
	\caption{Numerical experiments comparing the revenue ordered assortment strategy (RO)
		and our proposed strategy \texttt{2SLM-OPT}.
		For each class of instances, we display the average optimality gap
		and the worst-case gap, as well as the computing time and the cardinality of the offered set.}
	\label{tab:assortment_simple_subtotal}%
\end{table}%

\newpage 
\subsection{Joint Assortment and Pricing Optimisation}\label{subsec:numerical_JAP}

This section presents some numerical results related to solve the Joint Assortment and Pricing Problem
discussed in Section \ref{sec:JAP_TLM}. We analyse the performance of algorithm \texttt{TLM-Opt},
compared against Fixed-Price strategy, which is optimal for the MNL and Quasi-Same Price strategy \citep{wang2017impact},
which is optimal for the MNL variant considered in their paper that takes into consideration search cost,
and it basically a fixed price for all products but one, which share some similarities with our
proposed pricing policy, as it is fixed price in general but the higher and lower ends of the utility spectrum.

Each tested family or class of instances is characterized by three numbers:
the number of products $n$; the threshold $t$,
that controls how tolerant are customers with respect
to differences in attractiveness and the attractiveness of the outside option $a_0$, which controls
how likely is that customers 	review all products without purchasing.
In total, we experimented with $48$ classes or families of instances,
each containing $250$ instances. In each specific instance, revenues and
utilities are drawn from an uniform distribution between $0$ and $10$. We ran
the three strategies: Fixed Price, Quasi-Same Price and \texttt{TLM-Opt},
and report the average and worst optimality gap for Fixed Price and Quasi-Same Price
strategies, as well as cardinality of the offered set for both strategies.
These numerical experiments were conducted in Python 3.6 on a computer
with 8 processors (each with 3.6 GHz CPU) and $16$ GB of RAM.
Table \ref{tab:pricing_simple_subtotal} presents the results which can be summarized as follows:

\begin{enumerate}
	
	\item As expected \texttt{TLM-Opt} outperformed the other two algorithms in terms of revenue,
	and being quite fast to execute (less than half of a second for all the instances simulated).
	\item Fixed-Price policy performs the worst across the board,
	which is expected given that it has the lowest degrees of freedom,
	as shown in example \ref{exmp:fixed_price_num}. Although the average
	gap is quite low, it can be as high as $43.027\%$. In fact,
	fixed-price policy can be arbitrarily bad. A proof of this fact
	is provided in Appendix \ref{App:proofs}, Lemma \ref{ex:fixed_price_bad}.
	\item Quasi-Same price policy also performs well on average,
	and the worst gap obtained was $29.964\%$, which is significantly better than the worst
	gap for Fixed Price policy.
	\item The cardinality of the optimal solution is always at least the same or greater than
	Fixed-Price policy. This can be observed empirically, or deduced
	analytically. The intuition behind it is that given the functional form of the revenue
	for Fixed-Price and the fact that the Lambert function is strictly increasing
	the strategy always try to show as much as possible.
	This, and the fact that under same price,
	the dominance relation only depends upon intrinsic utilities,
	imply that there is a limit  on the number of products
	that the fixed price policy can offer
	\footnote{the last product `k' where $\frac{a_1(p_1)}{a_k(p_k)}=\exp(u_1 – u_k) \leq (1+t)$}
	without causing any domination for  low intrinsic utility products.
	On the other hand, under \texttt{TLM-Opt} (or Quasi Same price)
	we can go further and add products
	in such a way that the dominance relations are not triggered,
	and therefore we can include more products.   
	
	\item The main difference stems from the fact that our strategy
	leverage both ends of the utility spectrum,
	and reveals the following interesting insight.
	Sometimes in order to avoid low attractiveness products to be dominated,
	we want to: increase the price of the higher utility products
	(to make them less attractive) and at the same time,
	reduce the price for lower utility products,
	in order to make them more attractive, and making them
	\textit{visible} for the consumer.
	
\end{enumerate}

\begin{table}[!htbp]
	\centering
	\resizebox{.85\textwidth}{!}{\begin{tabular}{lrrrlrrrlr}
			\toprule
			\multicolumn{1}{c}{\multirow{2}[4]{*}{$(n,t,a_0)$}} & \multicolumn{3}{c}{Fixed Price} &       & \multicolumn{3}{c}{Quasi Same Price} &       & \multicolumn{1}{c}{\texttt{TLM-Opt}} \\
			\cmidrule{2-4}\cmidrule{6-8}\cmidrule{10-10}          & \multicolumn{1}{l}{Avg. Gap (\%)} & \multicolumn{1}{l}{Worst Gap (\%)} & \multicolumn{1}{l}{Avg. Cardinality} &       & \multicolumn{1}{l}{Avg. Gap (\%)} & \multicolumn{1}{l}{Worst Gap (\%)} & \multicolumn{1}{l}{Avg. Cardinality} &       & \multicolumn{1}{l}{Avg. Cardinality} \\
			\midrule
			\midrule
			(5,0.5,1) & 2.164728 & 17.514 & 1.212 &       & 0.442364 & 8.609 & 2.212 &       & 1.92 \\
			(5,0.5,10) & 2.925244 & 27.779 & 1.26  &       & 0.54798 & 11.074 & 2.248 &       & 1.888 \\
			(5,0.5,100) & 4.136064 & 43.027 & 1.24  &       & 1.16004 & 29.964 & 2.108 &       & 1.856 \\
			(5,1,1) & 1.575348 & 13.446 & 1.384 &       & 0.253996 & 4.638 & 2.38  &       & 2.028 \\
			(5,1,10) & 2.074672 & 24.984 & 1.472 &       & 0.362416 & 11.116 & 2.448 &       & 2.04 \\
			(5,1,100) & 2.726188 & 33.938 & 1.416 &       & 0.52712 & 14.404 & 2.308 &       & 1.952 \\
			(5,2,1) & 0.9865 & 8.881 & 1.58  &       & 0.723548 & 8.881 & 1.812 &       & 2.132 \\
			(5,2,10) & 1.4685 & 10.592 & 1.536 &       & 1.040004 & 10.592 & 1.84  &       & 2.116 \\
			(5,2,100) & 2.343244 & 32.77 & 1.624 &       & 1.520068 & 22.581 & 1.924 &       & 2.196 \\
			(5,5,1) & 0.415064 & 5.103 & 2.012 &       & 0.153776 & 3.748 & 2.64  &       & 2.468 \\
			(5,5,10) & 0.918528 & 17.044 & 1.844 &       & 0.460556 & 17.044 & 2.424 &       & 2.384 \\
			(5,5,100) & 1.092544 & 11.539 & 1.972 &       & 0.466548 & 7.574 & 2.552 &       & 2.468 \\
			\midrule
			Avg. $n=5$ & 1.902219 & 20.55142 & 1.546 &       & 0.638201 & 12.51875 & 2.241333 &       & 2.120667 \\
			&       &       &       &       &       &       &       &       &  \\
			(10,0.5,1) & 3.63332 & 14.951 & 1.408 &       & 1.079656 & 6.774 & 2.408 &       & 2.896 \\
			(10,0.5,10) & 4.710012 & 30.328 & 1.512 &       & 1.499744 & 15.575 & 2.512 &       & 3.132 \\
			(10,0.5,100) & 6.7165 & 24.489 & 1.42  &       & 2.465028 & 23.517 & 2.364 &       & 2.912 \\
			(10,1,1) & 2.69928 & 12.027 & 1.748 &       & 0.835872 & 7.352 & 2.748 &       & 3.264 \\
			(10,1,10) & 3.563196 & 15.769 & 1.672 &       & 1.264116 & 10.296 & 2.672 &       & 3.18 \\
			(10,1,100) & 4.822544 & 26.928 & 1.756 &       & 1.704816 & 15.125 & 2.74  &       & 3.264 \\
			(10,2,1) & 1.38662 & 8.541 & 2.076 &       & 0.803492 & 8.541 & 2.576 &       & 3.236 \\
			(10,2,10) & 2.37252 & 15.852 & 2.016 &       & 1.284344 & 15.852 & 2.544 &       & 3.38 \\
			(10,2,100) & 3.115392 & 18.694 & 2.076 &       & 1.53734 & 18.694 & 2.612 &       & 3.34 \\
			(10,5,1) & 0.611308 & 4.19  & 2.888 &       & 0.322156 & 2.961 & 3.492 &       & 3.908 \\
			(10,5,10) & 0.931108 & 5.537 & 2.804 &       & 0.523108 & 4.975 & 3.432 &       & 3.84 \\
			(10,5,100) & 1.323312 & 12.683 & 2.828 &       & 0.705452 & 12.683 & 3.44  &       & 3.936 \\
			\midrule
			Avg. $n=10$ & 2.990426 & 15.83242 & 2.017 &       & 1.16876 & 11.86208 & 2.795 &       & 3.357333 \\
			&       &       &       &       &       &       &       &       &  \\
			(20,0.5,1) & 5.227892 & 16.189 & 1.964 &       & 2.406408 & 10.383 & 2.964 &       & 5.412 \\
			(20,0.5,10) & 6.505472 & 18.556 & 1.844 &       & 2.926688 & 11.734 & 2.844 &       & 4.868 \\
			(20,0.5,100) & 9.65628 & 30.904 & 1.844 &       & 4.633812 & 20.602 & 2.84  &       & 5.104 \\
			(20,1,1) & 3.917928 & 11.225 & 2.332 &       & 1.87528 & 7.241 & 3.332 &       & 5.476 \\
			(20,1,10) & 4.640684 & 20.635 & 2.32  &       & 2.227456 & 14.112 & 3.32  &       & 5.384 \\
			(20,1,100) & 6.765772 & 26.431 & 2.368 &       & 3.075284 & 17.929 & 3.368 &       & 5.48 \\
			(20,2,1) & 2.197276 & 9.372 & 3.324 &       & 1.210484 & 9.372 & 4.112 &       & 6.164 \\
			(20,2,10) & 2.669532 & 10.396 & 3.316 &       & 1.449576 & 9.11  & 4.168 &       & 6.252 \\
			(20,2,100) & 3.708316 & 15.228 & 3.28  &       & 2.055808 & 14.009 & 4.092 &       & 5.924 \\
			(20,5,1) & 0.878752 & 4.584 & 4.636 &       & 0.577612 & 4.584 & 5.236 &       & 6.976 \\
			(20,5,10) & 1.244528 & 5.209 & 4.632 &       & 0.805216 & 5.209 & 5.26  &       & 7.1 \\
			(20,5,100) & 1.718416 & 11.142 & 4.664 &       & 1.138056 & 7.856 & 5.312 &       & 7.192 \\
			\midrule
			Avg. $n=20$ & 4.094237 & 14.98925 & 3.043667 &       & 2.031807 & 11.01175 & 3.904 &       & 5.944333 \\
			&       &       &       &       &       &       &       &       &  \\
			(30,0.5,1) & 6.343964 & 16.145 & 2.24  &       & 3.642808 & 10.606 & 3.24  &       & 7.344 \\
			(30,0.5,10) & 8.202948 & 22.238 & 2.224 &       & 4.572832 & 15.339 & 3.224 &       & 7.32 \\
			(30,0.5,100) & 10.6693 & 26.659 & 2.228 &       & 6.018692 & 18.973 & 3.224 &       & 7.2 \\
			(30,1,1) & 4.0957 & 14.803 & 3.044 &       & 2.266944 & 10.581 & 4.044 &       & 7.464 \\
			(30,1,10) & 5.039272 & 19.686 & 3.24  &       & 2.906664 & 13.583 & 4.24  &       & 7.892 \\
			(30,1,100) & 7.451636 & 23.606 & 3.084 &       & 4.349844 & 14.428 & 4.084 &       & 7.884 \\
			(30,2,1) & 2.193896 & 12.267 & 4.344 &       & 1.32162 & 9.526 & 5.252 &       & 8.588 \\
			(30,2,10) & 3.110752 & 13.315 & 4.252 &       & 1.911924 & 9.177 & 5.148 &       & 8.64 \\
			(30,2,100) & 4.005252 & 19.963 & 4.476 &       & 2.473688 & 19.963 & 5.4   &       & 8.684 \\
			(30,5,1) & 1.023416 & 4.25  & 6.26  &       & 0.74478 & 4.25  & 6.88  &       & 10.132 \\
			(30,5,10) & 1.328936 & 5.534 & 6.328 &       & 0.93428 & 4.053 & 7.012 &       & 10.232 \\
			(30,5,100) & 1.730272 & 6.284 & 6.436 &       & 1.189296 & 6.164 & 7.136 &       & 10.104 \\
			\midrule
			Avg. $n=30$ & 4.599612 & 15.39583 & 4.013 &       & 2.694448 & 11.38692 & 4.907 &       & 8.457 \\
			\bottomrule
	\end{tabular}}%
	\caption{Numerical experiments comparing Fixed-Price and Quasi-Same price against \texttt{TLM-Opt}.
		For each class of instances, for non-optimal strategies we display the average optimality gap, worst-case gap  and the cardinality of the offered set. We also provide the average of those metrics for each value of $n$ considered.}
	\label{tab:pricing_simple_subtotal}%
\end{table}%
	%
	%
	%
	%

\end{document}
